\title{Cops and Robbers for Graphs on Surfaces with Crossings} 
\author{Prosenjit Bose}{School of Computer Science, Carleton University, Ottawa, Canada}{jit@scs.carleton.ca}{https://orcid.org/0000-0002-8906-0573}{}
\author{Pat Morin}{School of Computer Science, Carleton University, Ottawa, Canada}{morin@scs.carleton.ca}{https://orcid.org/0000-0003-0471-4118}{}
\author{Karthik Murali}{School of Computer Science, Carleton University, Ottawa, Canada}{KarthikMurali@cmail.carleton.ca}{https://orcid.org/0009-0003-3985-3609}{}
\authorrunning{P. Bose, P. Morin and K. Murali} 
\keywords{Cops and Robbers, Crossings, 1-Planar, Surfaces} 
\newcommand{\edit}[1]{{\color{blue} #1}}
\renewcommand{\edit}[1]{#1}
\DeclarePairedDelimiter{\ceil}{\lceil}{\rceil}
\DeclarePairedDelimiter{\floor}{\lfloor}{\rfloor}
\newtheorem{problem}{Open Problem}
\begin{document}

\maketitle

\begin{abstract}
\textit{Cops and Robbers} is a game played on a graph where a set of cops attempt to capture a single robber. The game proceeds in rounds, where each round first consists of the cops' turn, followed by the robber's turn. \edit{In the first round, the cops place themselves on a subset of vertices, after which the robber chooses a vertex to place himself. From the next round onwards,} in the cops' turn, every cop can choose to either stay on the same vertex or move to an adjacent vertex, and likewise the robber in his turn. The robber is considered to be captured if, at any point in time, there is some cop on the same vertex as the robber. The cops win if they can capture the robber within a finite number of rounds; else the robber wins. 

A natural question in this game concerns the \textit{cop-number} of a graph---the minimum number of cops needed to capture a robber. It has long been known that graphs embeddable (without crossings) on surfaces of bounded genus have bounded cop-number. In contrast, it was shown recently that the class of 1-planar graphs---graphs that can be drawn on the plane with at most one crossing per edge---does not have bounded cop-number. This paper initiates an investigation into how the distance between crossing pairs of edges influences a graph's cop number. In particular, we look at \textit{Distance $d$ Cops and Robbers}, a variant of the classical game, where the robber is considered to be captured if there is a cop within distance $d$ of the robber.

Let $c_d(G)$ denote the minimum number of cops required in the graph $G$ to capture a robber within distance $d$. We look at various classes of graphs, such as 1-plane graphs, $k$-plane graphs (graphs where each edge is crossed at most $k$ times), and even general graph drawings, and show that if every crossing pair of edges can be connected by a path of small length, then $c_d(G)$ is bounded, for small values of $d$. \edit{For example, we show that if a graph $G$ admits a drawing in which every pair of crossing edges is contained in a path of length 3, then $c_4(G) \leq 21$.} And if the drawing permits a stronger assumption that the endpoints of every crossing induce the complete graph $K_4$, then $c_3(G) \leq 9$. The tools and techniques that we develop in this paper are sufficiently general, enabling us to examine graphs drawn not only on the sphere but also on orientable and non-orientable surfaces.
\end{abstract}

\newpage

\newpage
\setcounter{page}{1}

\section{Introduction}

\textit{Pursuit-evasion} is a general mathematical framework for problems that involve a set of \textit{pursuers} attempting to capture a set of \textit{evaders}. Such problems have many applications in robotics \cite{application_robotics}, network security \cite{application_network_security}, surveillance \cite{application_surveillance}, etc. The game of \textit{Cops and Robbers}, introduced independently by Quilliot \cite{quilliot1978}, and Nowakowski and Winkler \cite{nowakowski_winkler}, belongs to the family of pursuit-evasion problems where a set of cops (the pursuers) attempt to capture a single robber (the evader) on a graph. The game is played in rounds, with each round consisting first of the \textit{cops' turn}, followed by the \textit{robber's turn}. Initially, the cops first position themselves on a set of vertices, after which the robber chooses a vertex to place himself. In subsequent rounds, the cops' turn consists of each cop either staying on the same vertex or moving to a neighbouring vertex, and likewise the robber in his turn. The game ends when the robber is \textit{captured} by a cop; this happens when the robber is on the same vertex as some cop. If the robber has a strategy to permanently evade the cops, then the robber wins; else the cops win. (See the book \cite{bonato_book} by Bonato and Nowakowski for an extensive introduction to Cops and Robbers.)

One of the central questions in Cops and Robbers is to identify graph classes that require few cops to capture the robber. Formally, the \textit{cop-number} of a graph $G$, denoted by $c(G)$, is defined as the minimum number of cops required to capture the robber. A graph class $\mathcal{G}$ is said to be \textit{cop-bounded} if $c(G) \leq p$ for some integer $p$ and all graphs $G \in \mathcal{G}$; else $\mathcal{G}$ is \textit{cop-unbounded}. Many graph classes are known to be cop-bounded; for example, chordal graphs \cite{nowakowski_winkler, quilliot1978}, planar graphs \cite{AignerFromme}, graphs of bounded genus \cite{quilliot1985_genus}, $H$-minor-free graphs \cite{Andreae_minor} and $H$-(subgraph)-free graphs \cite{joret}. On the other hand, examples of graph classes that are cop-unbounded include bipartite graphs \cite{bonato_book}, $d$-regular graphs, for all 
$d \geq 3$ \cite{Andreae_regular}, 1-planar graphs \cite{optimal1plane}, and graphs of diameter 2 \cite{distance_cops_robbers}. One of the deepest open problems on cop-number is Meyniel's conjecture which states that $c(G) \in O(\sqrt{n})$ for all $n$-vertex graphs $G$ (see \cite{meyniel_survey} for a survey paper on Meyniel's conjecture).

\subsection{Survey of Related Results.}\label{sec: survey}

In this paper, we are interested in the cop-number of graphs drawn on (orientable and non-orientable) surfaces with crossings. Most results in this area are concerned with cop-number of graphs embedded without crossings on surfaces with genus $g$. For planar graphs, which are graphs embeddable on the sphere ($g = 0$), Aigner and Fromme \cite{AignerFromme} proved that 3 cops are sufficient, and sometimes necessary. For graphs with (orientable) genus $g > 0$, the best known bound on the cop-number is $\frac{4}{3}g + \frac{10}{3}$ \cite{genus_currentbest}. A long-standing conjecture by Schroeder \cite{schroder2001copnumber} is that $c(G) \leq g + 3$ for all values of $g$. For non-orientable surfaces of genus $g$, Andreae \cite{Andreae_minor} gave the first result that $c(G) \in O(g)$, which was later improved by Nowakowski and Schr\"oder (in an unpublished work \cite{nowakowski1997bounding}) to $c(G) \leq 2g+1$. The current best result is by Clarke et al. \cite{joret_non_orientable} who show that $c(G)$ is at most the maximum of all cop-numbers of graphs embeddable on an orientable surface of genus $g-1$. (Refer to \cite{Topological_directions} for a survey of results, conjectures and open problems for cop-numbers of graphs on surfaces.) 

Ever since the appearance of the proof that $c(G) \leq 3$ for planar graphs \cite{AignerFromme}, it has been known that the geometrical representation of a graph plays an important role in bounding its cop-number. For instance, the strategy for guarding planar graphs was adapted to show that $c(G) \leq 9$ for unit disk graphs \cite{unit_disk}, and $c(G) \leq 15$ for string graphs \cite{string}. More recently, the study of cop-numbers for the beyond-planar graph classes was initiated by Durocher et al. \cite{optimal1plane}. They show that unlike planar graphs, the class of \textit{1-planar graphs}---which allow for one crossing per edge---is cop-unbounded. On the positive side, they show 3 cops are sufficient, and sometimes necessary, for a \textit{maximal 1-planar graph}: a 1-planar graph to which no edge can be added without violating 1-planarity (while staying simple). To prove this, they crucially use the fact that the endpoints of every crossing induce the complete graph $K_4$. In a recent paper, Bose et al. \cite{tcs_1planar} showed that even under the relaxed requirement that no crossing of a 1-plane graph is an \textit{$\times$-crossing}---a crossing whose endpoints induce a matching---the cop-number remains bounded, and is at most 21. As a corollary, their result implies that 1-planar graphs with large cop-numbers necessarily embed with a large number of $\times$-crossings. 

\begin{table}[ht!]
\begin{center}
\renewcommand{\arraystretch}{1.4}
\begin{tabular}{|m{0.11\linewidth}|m{0.25\linewidth}|m{0.25\linewidth}|m{0.31\linewidth}|}
 \hline
 \textbf{Category} & \textbf{Graph Class} & \textbf{Cop Number} & \textbf{Remarks} \\
 \hline
 \hline
 \multirow{4}{2cm}{1-Plane Graphs} 
 & Full 1-plane graphs & $c(G) \leq 3$ & \cref{cor: 1-planar,cor: map graphs}; generalizes result in \cite{optimal1plane} \\
 \cline{2-4}
 & 1-plane graphs without $\times$-crossings & $c(G) \leq 15$ & \cref{cor: 1-planar}; improves $c(G) \leq 21$ from \cite{tcs_1planar} \\
 \cline{2-4}
 & 1-plane graphs & \makecell[l]{$c_{X - 1}(G) \leq 6X - 3$ \\ $c_{x - 1}(G) \leq 6x + 9$} & \cref{cor: 1-planar} \\
 \cline{2-4}
 & Graphs with at most one crossing per face of $\mathrm{sk}(G)$ & Cop-unbounded & \cref{thm: single crossing per face}; holds even with distant crossings \\
 \hline
 \multirow{4}{2cm}{$k$-Plane Graphs} 
 & Full $k$-plane graphs & $c_2(G) \leq 12k-3$ & \cref{cor: k-planar} \\
 \cline{2-4}
 & $k$-plane graphs without $\times$-crossings & $c_3(G) \leq 18k-3$ & \cref{cor: k-planar} \\
 \cline{2-4}
 & $k$-plane graphs & \makecell[l]{$c_{X - 1}(G) \leq 6k(X + 1)-3$ \\ $c_{x - 1}(G) \leq 6k(x + 2) - 3$} & \cref{cor: k-planar} \\
 \cline{2-4}
 & $k$-Framed graphs & $c_{\left \lceil \frac{k+8}{3} \right \rceil}(G) \leq 21$ & \cref{cor: d-framed} \\
 \hline
 \multirow{6}{2cm}{General Graphs} 
 & Graphs where all crossings are full & $c_3(G) \leq 9$ & \cref{cor: arbitrary x and X = 1} \\
 \cline{2-4}
 & Graphs without $\times$-crossings & $c_4(G) \leq 21$ & \cref{cor: arbitrary x and X = 1} \\
 \cline{2-4}
 & Map graphs & $c(G) \leq 3$ & \cref{thm: cop number map graph}; includes full 1-plane and optimal 2-plane graphs \\
 \cline{2-4}
 & All graphs & \makecell[l]{$c_{\left \lceil \frac{3}{2}(X+1) \right \rceil}(G) \leq 9$ \\ $c_{\left \lceil \frac{3}{2}(x+2) \right \rceil}(G) \leq 15$} & \cref{cor: estimates} \\
 \cline{2-4}
 & All graphs, $1 < \alpha < 3/2$ & \makecell[l]{$c_{\left \lceil \alpha(X+1) \right \rceil}(G) \leq \frac{8}{\alpha - 1}$ \\ $c_{\left \lceil \alpha(x+2) \right \rceil}(G) \leq \frac{11}{\alpha - 1}$} & \cref{cor: estimates} \\
 \hline
 \hline
 &  &  & For all integers $\mu \geq 6$, the set $\{c_{\lfloor X/6\rfloor -1}(G): X(G) \geq \mu\}$ is unbounded (\cref{thm: x/6 cop number})
 \\
 \hline
\end{tabular}
\caption{Summary of results for various graph classes drawn on the sphere with crossings.}
\label{table:results}
\end{center}
\end{table}

\subsection{Our Contribution.}

Motivated by \cite{tcs_1planar}, we undertake a comprehensive study of how the distance between crossing pairs of edges affects the cop number of graphs. Unlike \cite{tcs_1planar}, our aim is to go beyond 1-planar graphs, and look at general graphs drawn on surfaces, under the restriction that every crossing pair of edges can be connected by a path of small length. (A similar approach was undertaken in \cite{biedl2024parameterized} to generalise results on vertex connectivity from 1-plane graphs without $\times$-crossings to arbitrary graphs drawn on the sphere.) We consider two notions of distance between crossing pairs of edges. For any pair of edges $\{e_1,e_2\}$ that cross in $G$, we say that two endpoints of the crossing are \textit{consecutive} if one endpoint is a vertex incident with $e_1$ and the other is a vertex incident with $e_2$. For any fixed drawing of a graph $G$ on a surface, let $x(G)$ (resp. $X(G)$) be the smallest integer such that for every crossing in the drawing, there is a path of length at most $x(G)$ (resp. $X(G)$) connecting some (resp. every) pair of consecutive endpoints of the crossing. By the definition above, $x(G) \leq X(G)$; furthermore, $X(G) \leq x(G)+2$ since there exists a path of length at most $x(G)+2$ at every crossing where the first and last edges of the path are precisely the crossing pair of edges. Notwithstanding the tight relation between $X(G)$ and $x(G)$, we differentiate between the two parameters to capture their varying impact on cop-number. Indeed, one may already notice this for 1-plane graphs. If $G$ is a maximal 1-plane graph, where the endpoints of every crossing induce a $K_4$, we have $X(G) = 1$ and $c(G) \leq 3$ \cite{optimal1plane}. On the other hand, if $G$ is a 1-plane graph without $\times$-crossings, we have $x(G) = 1$ and $c(G) \leq 21$ \cite{tcs_1planar}. 

Most results that we present in this paper concern \textit{Distance $d$ Cops and Robbers}---a variant of the classical Cops and Robbers in which the robber is considered to be captured if, at any point in time, there is a cop within distance $d$ of the robber \cite{distance_cops_robbers}. We use the notation $c_d(G)$ to denote the minimum number of cops required to capture the robber within distance $d$ in the graph $G$. In this paper, we give several interesting results on distance $d$ cop-numbers of graphs $G$, where $d$ is parameterised by $X(G)$ and $x(G)$. We restrict our discussion here only to graphs drawn on the sphere, even though they extend to graphs drawn on surfaces with crossings. Our results not only encompass and improve the existing results on 1-plane graphs \cite{optimal1plane,tcs_1planar}, but also greatly simplify the existing proofs. The tools that we develop in this paper are equipped to handle beyond 1-planar graphs, such as $k$-plane graphs (where each edge is allowed to cross at most $k$ times), and even arbitrary graphs on surfaces. 

\subsubsection{Our Results.}

We show that graphs that embed on the plane with small values of $X(G)$ or $x(G)$ have small cop-numbers. (A summary of our results is given in \cref{table:results}.) Let us call a crossing in a graph $G$ a \textit{full crossing} if its endpoints induce the complete graph $K_4$, and an \textit{$\times$-crossing} if its endpoints induce a matching. We show that for any graph where all crossings are full, $c_3(G) \leq 9$, and if no crossing of the graph is an $\times$-crossing, then $c_4(G) \leq 21$. For larger values of $X(G)$ or $x(G)$, we show that $c_{\ceil{\alpha(X+1)}}(G)$ and $c_{\ceil{\alpha(x+2)}}(G)$ are in $O(\frac{1}{\alpha-1})$, for any $1 < \alpha < 3/2$. For values of $\alpha \geq 3/2$, the cop-numbers are at most 9 and 15, respectively. In contrast to this, we show that for every integer $\mu \geq 6$, $\max\{c_{\floor{X/6}-1}(G): X(G) \geq \mu\}$ is unbounded. We also consider the question of standard cop-numbers $c(G)$ for special classes of graphs. Let $\mathrm{sk}(G)$ denote the subgraph induced by the set of uncrossed edges of $G$. We show that for the class of \textit{map graphs}, which are precisely graphs that have an embedding on the plane such that all crossed edges are inserted as cliques inside some faces of $\mathrm{sk}(G)$, the cop-number is at most 3. (In fact, this generalises the result $c(G) \leq 3$ for maximal 1-plane graphs \cite{optimal1plane}.) On the other hand, for graphs where there is at most one crossing in each face of $\mathrm{sk}(G)$, we show that $c(G)$ is unbounded, even when no pair of crossings are close to each other. However, if one were to restrict the number of edges that bound each face of $\mathrm{sk}(G)$ by an integer $k$, then we are in the class of \textit{$k$-framed graphs}, and we show that $c_{\ceil{\frac{k+8}{3}}}(G) \leq 21$.

\subsubsection{Organisation.}

\cref{sec: prelims} provides the necessary preliminaries and formal definitions used throughout the paper. In \cref{sec: distance d cops and robber}, we investigate the distance-$d$ variant of Cops and Robbers and introduce two fundamental operations that simplify the problem on surfaces by reducing it to the study of 1-plane drawings on surfaces. \cref{sec: 1-planar} focuses on 1-planar graphs drawn on the sphere, and explores the consequences of our result for $k$-plane graphs and general graphs drawn on the sphere. \cref{sec: graphs on surfaces} extends our study to graphs drawn on surfaces with crossings, covering both orientable and non-orientable surfaces, and also discussing map graphs on surfaces. Finally, \cref{sec: conclusion} outlines a set of open problems and future research directions.

\section{Preliminaries} \label{sec: prelims}
 
All graphs in this paper are finite and undirected. If $H$ is a subgraph of $G$, then we write $H \subseteq G$. For any distinct pair of vertices $s$ and $t$ in $G$, an $(s,t)$-path is a simple path in $G$ with $s$ and $t$ as its end vertices. If a path $P$ contains vertices $u$ and $v$, then the subpath of $P$ containing all vertices from $u$ to $v$ will be denoted by $P(u \dots v)$. The length of a path $P$ is the number of edges on $P$, and is denoted by $|P|$. If $G$ is edge-weighted and $P$ is a path in $G$, then $|P|$ denotes the sum of weights of all edges in $P$.

\subparagraph{Graph Drawings.} A \textit{drawing} of a graph $G = (V,E)$ on a surface is a mapping of its vertex set $V(G)$ to distinct points on the surface and a mapping of each edge $e = (u,v)$ to a non-self-intersecting curve that has $u$ and $v$ as its endpoints and does not pass through any other vertex of $G$. In all drawings that we consider, any pair of edges can intersect only a finite number of times, no three edges are allowed to intersect at a point that is not a vertex, and no two edges can touch each other tangentially. (This is not a restriction because one can re-draw edges within a local neighbourhood of points on the surface to satisfy the requirements.) A pair of distinct edges are said to \textit{cross} each other if there is a point on the surface interior to both edges. For any graph $G$ drawn on a surface $S$, the \textit{faces} of $G$ are the connected regions of $S \setminus G$. The \textit{boundary of a face} is a sequence of edges or part-edges (the interior of an edge with at least one end being a crossing point) that bound the face. The \textit{skeleton} of a graph $G$ drawn on some surface is the subgraph $\mathrm{sk}(G)$ induced by all uncrossed edges of $G$. Hence, any drawing of a graph $G$ can be viewed as the union of $\mathrm{sk}(G)$ together with the set of crossed edges drawn inside the faces of $\mathrm{sk}(G)$. 

\subparagraph{Crossings and $(\Sigma,1)$-Planar Graphs.} A graph is \textit{planar} if it has a drawing on the sphere such that no two edges cross each other. If $G$ is a graph drawn on a surface $\Sigma$, then the \textit{$\Sigma$-planarisation} of $G$ is the graph $G^\times$ obtained by inserting a \textit{dummy vertex} at each crossing point in the drawing. Let $G$ be any graph drawn on a surface. Let $e_1$ and $e_2$ be two edges of $G$ that cross each other. The endpoints of $e_1$ and $e_2$ are together called \textit{endpoints of the crossing}. Two endpoints (possibly non-distinct) $v_1,v_2$ of the crossing are said to be \textit{consecutive} if $v_1$ is incident with $e_1$ and $v_2$ with $e_2$. Let $(u,v)$ and $(w,x)$ be a pair of edges that cross in a drawing of a graph $G$. Assume that the crossing has four distinct endpoints, and consider the graph $G[\{u,v,w,x\}]$ induced by these endpoints, up to counting parallel edges as a single edge. If $G[\{u,v,w,x\}]$ is isomorphic to the complete graph $K_4$, then we say that the crossing is \textit{full}. If $G[\{u,v,w,x\}]$ consists only of edges $(u,v)$ and $(w,x)$, then we say that the crossing is an \textit{$\times$-crossing}. (The nomenclature of crossings based on the subgraphs induced by its endpoints is borrowed from \cite{biedl_murali}.) 

A graph $G$ is a \textit{$(\Sigma,k)$-planar graph} if it has a drawing on a surface $\Sigma$ such that each edge is crossed at most $k$ times; a graph $G$ with such a drawing is called a \textit{$(\Sigma,k)$-plane graph}. If $\Sigma$ is a sphere, then we omit mentioning $\Sigma$, and say that $G$ is a \textit{$k$-planar} or a \textit{$k$-plane} graph. In this paper, we deal extensively with $(\Sigma,1)$-plane graphs. For any $(\Sigma,1)$-plane graph, we can always assume that no pair of edges incident to the same vertex cross each other, since the edges can be re-drawn without affecting $(\Sigma,1)$-planarity. Hence, every crossing in a $(\Sigma,1)$-plane graph has four distinct endpoints. In a $(\Sigma,1)$-plane graph, if two edges $e_1$ and $e_2$ cross each other, we use the notation $\{e_1,e_2\}$ to denote the crossing.  A simple $(\Sigma,1)$-planar graph is said to be \textit{maximal} if no edge can be added to the graph while maintaining $(\Sigma,1)$-planarity and simplicity. Consider a crossing $\{(u,v), (w,x)\}$ in a $(\Sigma,1)$-planar drawing of a graph $G$, with $c$ as the crossing point. If there is an uncrossed edge connecting two consecutive endpoints of the crossing, say $(u,w)$, and $\{(u,w), (w,c), (c,u)\}$ bounds a face of the drawing, then we say that $(u,w)$ is a \textit{kite-edge} of the crossing. In a maximal $(\Sigma,1)$-plane graph, the endpoints of every crossing induce the complete graph $K_4$. More generally, a graph is called \textit{full $(\Sigma,k)$-planar} if it has a $(\Sigma,k)$-planar drawing such that every crossing is full. If one allows for parallel edges, then these graphs can be drawn such that there is a kite-edge connecting every pair of consecutive endpoints of a crossing.   

\subparagraph{Cops and Robbers.} \textit{Cops and Robbers} is a game played on a graph $G$ where a set of cops try to capture a robber. The game proceeds in \textit{rounds}, where each round consists first of the \textit{cops' turn} and then the \textit{robber's turn}. We assume that the game is played with perfect information---all players know all other players' position at every time instant. \edit{In the first round, the cops initially choose to place themselves on a subset of vertices, and thereafter, the robber chooses a vertex for himself. In subsequent rounds, in the cops' turn, every cop can choose to stay on the same vertex or move to an adjacent vertex. Similarly, in the robber's turn, the robber may choose to stay on the same vertex or move to an adjacent vertex.} In the classical version of the game, the robber is considered to be \textit{captured} if, at some point, the robber is on the same vertex as a cop. The \textit{cop-number} of a graph $G$, denoted by $c(G)$, is the minimum number of cops required to capture a robber. A family of graphs $\mathcal{G}$ is said to be \textit{cop-bounded} if there is an integer $p$ such that $c(G) \leq p$ for all graphs $G \in \mathcal{G}$; otherwise, $\mathcal{G}$ is said to be \textit{cop-unbounded}. A variant of this game which we study extensively is \textit{Distance $d$ Cops and Robbers}, where a robber is considered to be \textit{captured} if, at some point, the robber is within distance $d$ of some cop \cite{distance_cops_robbers}. When $d = 0$, this reduces to the classical Cops and Robbers game. The \textit{distance $d$ cop-number} of $G$, denoted by $c_d(G)$, is the minimum number of cops required to capture the robber within distance $d$.    
\section{Distance \texorpdfstring{$d$}{d} Cops and Robbers.}\label{sec: distance d cops and robber}

At the outset, we state one of the simplest yet frequently used observation in our study of distance $d$ cop-numbers.  

\begin{observation}\label{obs: distance cop number monotonic}
    For any graph $G$ and integers $d_1 \geq d_2 \geq 0$, we have $c_{d_1}(G) \leq c_{d_2}(G)$.
\end{observation}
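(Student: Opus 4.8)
The plan is to show that any winning cop strategy for the distance-$d_2$ game is already a winning cop strategy for the distance-$d_1$ game, using exactly the same number of cops. Fix a set of $k := c_{d_2}(G)$ cops together with a strategy $\sigma$ that captures the robber within distance $d_2$ on $G$, no matter how the robber plays. I will argue that $\sigma$, used verbatim in the distance-$d_1$ game, also captures the robber.

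First I would recall that the only difference between the two games is the capture condition: in the distance-$d$ game the robber is caught as soon as some cop occupies a vertex at distance at most $d$ (in $G$) from the robber's vertex. Now consider any play of the distance-$d_1$ game in which the $k$ cops follow $\sigma$ and the robber follows an arbitrary strategy $\tau$. This same sequence of moves is also a legal play in the distance-$d_2$ game (the move rules are identical), and there $\sigma$ guarantees that after finitely many rounds some cop is within distance $d_2$ of the robber. Since $d_1 \geq d_2$, that cop is also within distance $d_1$ of the robber, so the robber is captured in the distance-$d_1$ game at that same round (or earlier). Hence $k$ cops suffice in the distance-$d_1$ game, giving $c_{d_1}(G) \leq k = c_{d_2}(G)$.

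A minor point to address for rigour is that the robber has full information about the cops' strategy: one should note that enlarging the capture neighbourhood only removes robber-winning plays and never creates new ones, so no robber strategy $\tau$ can exploit the switch from $d_2$ to $d_1$. I do not anticipate any real obstacle here; the statement is an immediate monotonicity observation, and the argument above is essentially the whole proof. The case $d_1 = d_2$ is trivial, and the case $d_2 = 0$ simply recovers the comparison with the classical cop-number.
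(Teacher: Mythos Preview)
Your argument is correct and is exactly the natural monotonicity reasoning: a strategy that guarantees a cop within distance $d_2$ of the robber a fortiori guarantees a cop within distance $d_1 \geq d_2$. The paper in fact states this observation without proof, treating it as immediate, so your write-up is already more detailed than what the authors provide.
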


By \cref{obs: distance cop number monotonic}, $c_d(G) \leq c(G)$ for all values of $d \geq 0$. However, the two types of cop numbers are connected in a more interesting way. For any integer $k > 0$, let $G^{(k)}$ denote the graph obtained by replacing each edge of $G$ with a path on $k$ edges. Equivalently, the graph $G^{(k)}$ is obtained from $G$ by subdividing each edge $(k-1)$ times. 

\begin{theorem}[Lemma 9 in \cite{distance_cops_robbers}]\label{thm: distance cop-number 2d+1}
For any graph $G$ and any integer $d \geq 0$, $c(G) \leq c_d(G^{(2d+1)}) \leq c(G)+1$.
\end{theorem}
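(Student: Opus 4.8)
The statement bundles two inequalities, $c(G)\le c_d(G^{(2d+1)})$ and $c_d(G^{(2d+1)})\le c(G)+1$, and the plan is to prove each by a ``slow-motion'' simulation between the two games that passes through one combinatorial gadget. Write $H=G^{(2d+1)}$ and call the copies of $V(G)$ inside $H$ the \emph{branch vertices}. Two facts about $H$ underlie everything. First, for branch vertices $u,v$ we have $d_H(u,v)=(2d+1)\,d_G(u,v)$, so $d_H(u,v)\le d$ iff $u=v$; hence the radius-$d$ balls $\{B_d(u):u\in V(G)\}$ in $H$ around branch vertices are pairwise disjoint and in fact partition $V(H)$ (an internal vertex of the path that replaced an edge $uv$, at distances $i$ and $2d+1-i$ from $u$ and $v$, lies in exactly one of $B_d(u),B_d(v)$). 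This lets me define the \emph{projection} $\pi\colon V(H)\to V(G)$ sending a vertex to the branch vertex whose radius-$d$ ball contains it. Second, $\pi$ is ``slow'': if a token makes $2d+1$ consecutive moves in $H$, from $p$ to $p'$, then $d_H(\pi(p),\pi(p'))\le d_H(\pi(p),p)+d_H(p,p')+d_H(p',\pi(p'))\le d+(2d+1)+d<2(2d+1)$, so $d_G(\pi(p),\pi(p'))\le 1$; that is, one block of $2d+1$ moves in $H$ projects to at most one move in $G$.

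The simulations run with one round of the $G$-game emulated by a \emph{super-round} of $2d+1$ consecutive rounds of the $H$-game: a cop traversing an edge $uv$ of $G$ corresponds to an $H$-cop walking, over one super-round, the length-$(2d+1)$ path that replaced $uv$. To respect the ``cops move first'' convention I let the emulated side lag the emulating side by one full super-round; the stray super-round (or round) at the end is harmless because both games terminate in finitely many moves, and I handle the initial placement round of each game separately and trivially.

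For the lower bound $c(G)\le c_d(H)$: fix a winning distance-$d$ strategy for $k=c_d(H)$ cops on $H$. On $G$ the $k$ cops mirror the $H$-cops through $\pi$ (a legal $G$-walk by the ``slow'' property), while the real $G$-robber is lifted to an \emph{honest} walk in $H$ --- whenever it traverses $uv$ in $G$, its $H$-avatar walks the subdivided path over the coming super-round. When the $H$-cops make a distance-$d$ capture, the $H$-avatar lies on a single subdivided edge, whose only branch vertex within $H$-distance $d$ is $\pi(\text{avatar})$; tracing distances shows the capturing $H$-cop projects onto the branch vertex occupied by the $G$-robber, so the $G$-cops have captured. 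For the upper bound $c_d(H)\le c(G)+1$: fix a winning strategy for $k=c(G)$ cops on $G$ and place on $H$ $k$ \emph{shadow cops} emulating it against the \emph{shadow} $\pi(\text{robber})$ of the $H$-robber --- a legal, possibly lazy, $G$-robber by the ``slow'' property --- plus one \emph{extra cop}. Since the $G$-strategy is winning, it eventually makes a capturing move, sending a shadow cop onto the current shadow vertex $w$; if at that instant the $H$-robber still lies in $B_d(w)$ it has been captured in $H$. Otherwise the robber ``slipped out'' of the stretched capturing move, which (again by tracing distances) is possible only by fleeing into the interior of one subdivided edge $e^\ast$ incident with $w$, and not the one the capturing cop came along (fleeing into that one runs into the cop); the extra cop is then used to block the far endpoint of $e^\ast$, so that the robber is trapped on $e^\ast$ between the extra cop and the capturing shadow cop which now sits at $w$ --- and a cop at each end of a length-$(2d+1)$ path already has every vertex of that path within distance $d$.

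The main obstacle is the upper bound, and specifically the argument that \emph{one} extra cop suffices. One must pin down that the robber's only way to evade a stretched capturing move is down the single subdivided edge pointing at the capturing cop; that the extra cop can always be routed so as to seal the far end of that edge in time (using that, in the underlying $G$-strategy, the shadow cops keep the shadow confined, so that once both ends of $e^\ast$ are blocked the robber genuinely has no exit); and that these sealing interventions are needed only finitely often before an actual distance-$d$ capture occurs in $H$. A secondary, purely bookkeeping, obstacle --- present in both directions --- is making the one-super-round lag and the treatment of the initial placement rounds precise, so that a distance-$d$ capture on $H$ transfers to the correct kind of capture on $G$ (and conversely).
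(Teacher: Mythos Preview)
The paper does not give its own proof of this statement; it is simply quoted as Lemma~9 from \cite{distance_cops_robbers}. The natural point of comparison is the paper's proof of the closely related \cref{theorem: subdivision}, which uses the same ``block of $k$ rounds'' simulation you describe.

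Your lower bound $c(G)\le c_d(G^{(2d+1)})$ is argued correctly: the projection $\pi$ to the nearest branch vertex, the disjointness of the balls $B_d(u)$, and the ``slow'' property are exactly the right ingredients, and this matches the standard argument.

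Your upper bound, however, has a genuine gap in the role you assign to the extra cop. You assert that if the robber slips out of the stretched capture, it must lie in the interior of a \emph{single} subdivided edge $e^\ast$ incident with $w$. This is not true for $d\ge 2$. Suppose at the start of the capturing super-round the robber sits on the subdivided edge $wu_0$ at distance $j$ from $w$, with $2\le j\le d$, and the capturing shadow cop approaches $w$ along a different edge $w'w$ (so $w'\neq u_0$). Over the $2d{+}1$ rounds the robber can walk the remaining $2d{+}1-j$ steps to $u_0$ and then $j$ further steps into some edge $u_0 z$ with $z\neq w$, never meeting the shadow cop. At the end of the super-round the robber is on an edge not incident with $w$ at all, so there is no well-defined $e^\ast$ for the extra cop to seal. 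Even in the cases where the robber does stay on an edge incident with $w$, you give no mechanism by which the extra cop reaches the far endpoint before the robber does, and the robber can certainly reach it first.

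The fix, which is what the paper does in its proof of \cref{theorem: subdivision} and what the original reference \cite{distance_cops_robbers} uses, is to give the extra cop a \emph{chasing} role rather than a blocking one: it follows the robber's trail in $G^{(2d+1)}$, so that after finitely many rounds the robber can never stay put or backtrack. A non-backtracking walk of length $2d{+}1$ from a branch vertex necessarily ends at an adjacent branch vertex, so the shadow $\pi(\text{robber})$ becomes a legal speed-one $G$-robber, and the simulation of the $G$-strategy then transfers cleanly to a distance-$d$ capture in $G^{(2d+1)}$. This sidesteps entirely the question of which edge the robber flees into.
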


It is worth mentioning that there is no direct relation between $c(G)$ and $c_d(G)$. In fact, for any pair of integers $d,m \geq 1$, there exists a graph $G$ such that $c_d(G) = 1$ but $c(G) \geq m$ \cite{distance_cops_robbers}. Despite this, some important results from the classical Cops and Robbers game can be extrapolated to the distance version of the game. For instance, one of the well-known results concerns the cop-number of a graph $G$ and its retracts. A subgraph $H \subseteq G$ is a \textit{retract} of $G$ if there is a function $f: V(G) \mapsto V(H)$ such that $f(v) = v$ for every $v \in V(H)$ and if $(u,v) \in E(G)$, then $(f(u), f(v)) \in E(H)$. It is well-known that if $H$ is a retract of $G$, then $c(H) \leq c(G) \leq \max\{c(H), c(G \setminus H) + 1\}$ \cite{berarducci1993cop}. One can easily extend this to distance $d$ Cops and Robbers, as shown in \cref{theorem: retract at a distance}. 

\begin{theorem}[Theorems 3.1 and 3.2 in \cite{berarducci1993cop}] \label{theorem: retract at a distance}
    If $H$ is a retract of a graph $G$, then $c_d(H) \leq c_d(G) \leq \max\{c_d(H), c_d(G\setminus H) + 1\}$.
\end{theorem}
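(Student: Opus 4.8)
The plan is to adapt the classical retract argument of Berarducci and Intrigila \cite{berarducci1993cop} to the distance-$d$ setting, keeping in mind that ``capture'' now means ``coming within distance $d$ (in $G$) of the robber''. Throughout, distances and adjacencies are measured in $G$, and the key subtlety is that distances in the subgraph $H$ (or in $G\setminus H$) may differ from distances in $G$; the retraction map $f$ is exactly the tool that lets us transfer a winning strategy on $H$ to a shadow strategy on $G$.

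First I would prove the lower bound $c_d(H)\le c_d(G)$. Suppose $c_d(G)=k$, witnessed by a winning strategy for $k$ cops on $G$. I claim $k$ cops win the distance-$d$ game on $H$. The idea is the standard ``image of a strategy'' trick: the cops on $H$ imagine a phantom robber on $G$ located at the real robber's vertex (which lies in $V(H)\subseteq V(G)$), run the winning $G$-strategy to produce cop positions in $G$, and then play in $H$ at the images of those positions under $f$. Since $f$ maps edges of $G$ to edges or vertices of $H$, each simulated cop move in $G$ projects to a legal move (or a stay) in $H$, so the $H$-cops can follow. When the $G$-strategy achieves capture, some cop is within distance $d$ in $G$ of the robber's vertex $v\in V(H)$; applying $f$ to a witnessing path of length $\le d$ from that cop to $v$, and using that $f(v)=v$ and $f$ is edge-non-increasing, yields a walk in $H$ of length $\le d$ from $f(\text{cop})$ to $v$. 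Hence the real cop in $H$, sitting at $f(\text{cop})$, is within distance $d$ of the robber in $H$, so the $H$-cops win and $c_d(H)\le k$.

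Next I would prove the upper bound $c_d(G)\le\max\{c_d(H),\,c_d(G\setminus H)+1\}$. Let $a=c_d(H)$ and $b=c_d(G\setminus H)$; we use $\max\{a,b+1\}$ cops on $G$. The strategy has two phases. In Phase~1, designate $a$ of the cops as the ``$H$-team'' and have them play the distance-$d$ winning strategy of $H$ against the \emph{shadow} of the robber, defined as $f(r)$ where $r$ is the robber's current vertex; since $(r,r')\in E(G)$ implies $(f(r),f(r'))\in E(H)$, the shadow moves along edges of $H$ (or stays), so this is a legal robber-trajectory in $H$ and the $H$-team eventually gets some cop within distance $d$ \emph{in $H$} of $f(r)$. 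Because $H\subseteq G$, distances in $H$ dominate distances in $G$, so this cop is also within distance $d$ in $G$ of $f(r)$. At this point the $H$-team ``traps'' the shadow: from now on they maintain, after each of their turns, a cop within distance $d$ in $G$ of $f(r)$ — this is possible because the shadow moves at most one step per round along $H$-edges and the $H$-team can mimic the trapping sub-strategy. Consequently, the robber can never enter $V(H)$: if $r\in V(H)$ then $f(r)=r$, and a cop within distance $d$ of $r$ captures him. So after Phase~1 the robber is confined to $V(G)\setminus V(H)$ forever; in Phase~2 the remaining $b+1$ cops play inside $G\setminus H$ using its distance-$d$ winning strategy, with one extra cop handling the possibility that the robber's $G$-neighbourhood reaches into $H$ — exactly the ``$+1$'' that appears in the classical bound. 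Since the robber is stuck in $G\setminus H$, this team wins.

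The main obstacle is making Phase~2 rigorous: once the robber is confined to $G\setminus H$, we want to run the $G\setminus H$ strategy, but a cop executing that strategy lives in $G$ and might need to ``see'' robber-neighbours that lie in $H$; moreover the distance-$d$ condition is measured in $G$, not in $G\setminus H$, and shortest paths of length $\le d$ in $G$ from a cop to the robber could in principle pass through $V(H)$. The clean fix, mirroring \cite{berarducci1993cop}, is that the $H$-team from Phase~1 continues to guard $f(r)$; since the robber stays in $G\setminus H$, any length-$\le d$ path in $G$ realizing a would-be shorter distance must either stay in $G\setminus H$ (handled by the $G\setminus H$ team, up to the extra cop) or touch $V(H)$, in which case the guarding cop at distance $\le d$ from $f(r)$ already certifies capture. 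Verifying that the bookkeeping of who guards what, and that $\max\{a,b+1\}$ cops genuinely suffice simultaneously across both phases, is the delicate part; everything else is a routine transcription of the classical proof with ``same vertex'' replaced by ``within distance $d$ in $G$'' and with $f$ used to push paths of length $\le d$ from $G$ down to $H$.
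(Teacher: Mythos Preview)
Your approach matches the paper's: the paper does not spell out a proof but explicitly says it is the classical Berarducci--Intrigila argument together with the single observation that a retraction $f$ shortens paths (if $P$ is a $(u,v)$-path in $G$ then $f(P)$ contains an $(f(u),f(v))$-walk in $H$ of length at most $|P|$), which is exactly what you use. Your lower-bound argument is clean and correct.

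Your Phase~2 description is slightly garbled, though not fatally. The ``$+1$'' is not an extra cop stationed in $G\setminus H$ to handle boundary effects; it is the single cop who, after Phase~1, continues to track the shadow $f(r)$ inside $H$. One cop suffices for this: if $d_H(C,f(r))\le d$ after the cops' turn and the shadow then moves one $H$-step, the cop moves one step along a shortest $H$-path toward the new shadow and restores $d_H\le d$. That leaves $\max\{a,b+1\}-1\ge b$ cops free to run the winning distance-$d$ strategy on $G\setminus H$ against the (now confined) robber. Your worry about $G$-shortest paths passing through $H$ is unnecessary: the $b$ cops win the distance-$d$ game \emph{in} $G\setminus H$, so at capture some cop is within $G\setminus H$-distance $d$ of the robber, and since $G\setminus H\subseteq G$ this is automatically within $G$-distance $d$. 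With this bookkeeping straightened out, the argument goes through exactly as in the $d=0$ case.
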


The proof for \cref{theorem: retract at a distance} is similar to the proof for the special case $d = 0$. For values of $d > 0$, we make use of the fact that if there is a path $P$ from $u$ to $v$ in $G$, then $H$ contains a path $P'$ from $f(u)$ to $f(v)$ of length at most $|P|$. 

We now state another result that is a distance analogue of the following classical result: $c(G) \leq c(G^{(k)}) \leq c(G) + 1$ for any $k \geq 1$ \cite{joret}. Considering that the reader may not find it straight-forward to extend the proof in \cite{joret} to \cref{theorem: subdivision}, and given the importance of this theorem for this paper, we believe that it will be useful to provide a complete proof of \cref{theorem: subdivision}.
  
\begin{theorem}\label{theorem: subdivision}
    For any graph $G$ and integers $d\geq 0$, $k \geq 1$, we have $c_{\ceil{d/k}}(G) \leq c_d(G^{(k)}) \leq c_{\floor{d/k}}(G) + 1$.
\end{theorem}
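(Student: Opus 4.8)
The plan is to prove the two inequalities separately, in each case by translating a winning strategy in one game into a winning strategy in the other, using the natural correspondence between vertices of $G$ and the \emph{branch vertices} of $G^{(k)}$ (the vertices of $G^{(k)}$ that come from $V(G)$, as opposed to the $k-1$ subdivision vertices inserted on each edge). Throughout, I will use the key metric fact that for branch vertices $u,v$, the distance in $G^{(k)}$ satisfies $\mathrm{dist}_{G^{(k)}}(u,v) = k \cdot \mathrm{dist}_G(u,v)$, and more generally any vertex of $G^{(k)}$ lying on the subdivided copy of an edge $(a,b)$ is within distance $k$ (in $G^{(k)}$) of at least one of the branch vertices $a,b$.

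\smallskip
\noindent\emph{Lower bound $c_{\lceil d/k\rceil}(G)\le c_d(G^{(k)})$.} Here I would run $c_d(G^{(k)})$ cops on $G$ and show they capture (within distance $\lceil d/k\rceil$) on $G$. First I would argue we may assume the robber in $G^{(k)}$ always sits on a branch vertex: a robber occupying a subdivision vertex is strictly worse off, since he can be ``rounded'' to a neighbouring branch vertex without increasing distances by more than a bounded amount — but cleaner is the other direction, so I would instead have the \emph{cops} in $G$ simulate the $G^{(k)}$-strategy. Concretely, whenever a simulated cop in $G^{(k)}$ moves along one edge of a subdivided path, the real cop in $G$ waits, and only when the simulated cop completes a full length-$k$ traversal (arriving at a branch vertex) does the real cop take the corresponding single step in $G$. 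The robber's moves in $G$ are fed to the simulation as length-$k$ traversals. Since the simulated cops capture within distance $d$ in $G^{(k)}$ at some branch configuration, at that moment some cop is within $G^{(k)}$-distance $d$ of the robber's branch vertex, hence within $G$-distance $\lceil d/k\rceil$ of the robber in $G$. A mild subtlety: the simulation runs $k$ times slower, but finiteness is preserved.

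\smallskip
\noindent\emph{Upper bound $c_d(G^{(k)})\le c_{\lfloor d/k\rfloor}(G)+1$.} Take $c_{\lfloor d/k\rfloor}(G)$ cops winning at distance $\lfloor d/k\rfloor$ in $G$, place copies of them on the corresponding branch vertices of $G^{(k)}$, and add one extra cop. The main cops simulate their $G$-strategy, but now each single step in $G$ is realised as a $k$-step traversal along a subdivided edge in $G^{(k)}$; meanwhile the robber gets $k$ real moves per simulated round, so we must track which branch vertex the robber is ``closest to'' and feed that to the simulation — this is where the argument needs care, and I expect it to be the main obstacle: while a main cop is mid-traversal of a subdivided edge, the robber could in principle slip ``through'' that edge. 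This is exactly the role of the extra cop: it shadows the main cops' traversals (or guards the relevant subdivided edge), analogous to the standard proof that $c(G^{(k)})\le c(G)+1$. Once the $G$-strategy has the robber within distance $\lfloor d/k\rfloor$ of a main cop \emph{at a branch configuration}, that cop is within $G^{(k)}$-distance $k\lfloor d/k\rfloor \le d$ of the robber's branch vertex, and since the robber is on (the subdivision of) some edge he is within $G^{(k)}$-distance at most... — here I would do a short case check showing the robber can be forced onto a branch vertex, or that the $+1$ cop plus the slack $d - k\lfloor d/k\rfloor$ absorbs the discrepancy, giving capture within distance $d$.

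\smallskip
\noindent The principal obstacle in both directions is the bookkeeping that reconciles the differing clock speeds of the two games and handles the robber occupying subdivision vertices; the extra cop in the upper bound is precisely what the paper's earlier cited result ($c(G)\le c(G^{(k)})\le c(G)+1$) uses, and I would invoke/adapt that proof's ``guard the edge being traversed'' device rather than reinventing it. The metric identity $\mathrm{dist}_{G^{(k)}} = k\cdot\mathrm{dist}_G$ on branch vertices, together with the floor/ceiling arithmetic $\lceil d/k\rceil$ and $\lfloor d/k\rfloor$, is what makes the distance thresholds line up.
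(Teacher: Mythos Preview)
Your lower-bound argument is essentially the paper's: run the $G^{(k)}$ strategy in blocks of $k$ rounds, feed the robber's single $G$-move as a length-$k$ traversal, and keep each real cop on a branch vertex closest to its simulated counterpart. One detail you gloss over is that capture in $G^{(k)}$ need not occur at the end of a block nor with the capturing cop on a branch vertex; the paper handles this by letting the block finish (the robber continues to its target branch vertex, the capturing cop follows), after which the distance bound $\lceil d/k\rceil$ in $G$ falls out. This is a minor patch.

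The upper bound has a genuine gap: you have misidentified both the obstacle and the role of the extra cop. The problem is \emph{not} that the robber might ``slip through'' an edge a main cop is mid-traversing; the problem is that the robber in $G^{(k)}$ is under no obligation to arrive at a new branch vertex every $k$ rounds---he can stall on subdivision vertices or oscillate---so ``the branch vertex the robber is closest to'' is not a legal sequence of robber moves in $G$, and your simulation has nothing to feed the $G$-strategy. Guarding the edge a main cop is traversing does nothing to resolve this. The paper's fix is different: the extra cop is a \emph{chaser} that first walks a shortest path to the robber's initial vertex and thereafter replays the robber's own move sequence. This forces the robber to make strict net progress each round (else the chaser closes in and eventually captures), so after every block of $k$ rounds the robber sits on a branch vertex adjacent in $G$ to the previous one, and the simulation in $G$ is well-defined. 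When the $G$-strategy captures at distance $\lfloor d/k\rfloor$, the corresponding cop in $G^{(k)}$ is at distance $k\lfloor d/k\rfloor\le d$ from the robber, who is on a branch vertex---no further case check or slack argument is needed.
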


\begin{proof}
We will first show that $c_{\ceil{d/k}}(G) \leq c_d(G^{(k)})$. We simulate each robber-move in $G$ as a sequence of $k$ robber-moves in $G^{(k)}$, and the concomitant sequence of $k$ cop-moves in $G^{(k)}$ as a single cop-move in $G$. Therefore, a single round of $G$ is simulated by a block of $k$ rounds in $G^{(k)}$. Let $\mathcal{U} = \{U_1, \dots, U_p\}$ and $\mathcal{U'} = \{U_1', \dots, U_{p}'\}$ be two sets of $p$ cops playing in $G$ and $G^{(k)}$, respectively, where $p = c_d(G^{(k)})$. In the first round, the $p$ cops of $\mathcal{U}$ are positioned such that each cop $U_i$ is placed on a vertex of $G$ closest to the vertex occupied by $U_i'$ in $G^{(k)}$. The robber is placed in $G^{(k)}$ on the same vertex as the robber occupies in $G$. In any subsequent round, suppose that the robber moves along an edge $(a,b)$ in $G$. Then, we simulate $k$ rounds in $G^{(k)}$ where the robber moves along the path from $a$ to $b$ consisting of $k$ edges. Corresponding to the moves of any cop $U_i'$ in $G^{(k)}$ in these $k$ rounds, the cop $U_i$ can make a single move such that $U_i$ occupies a vertex of $G$ closest to the vertex occupied by $U_i'$. Suppose that after some blocks of $k$ rounds, the robber is captured in $G^{(k)}$ by a cop $U_i'$. Then, after the cops in $G$ have made their moves, the cop $U_i$ must be within distance $\ceil{d/k}$ from the robber in $G$. (If the robber in $G^{(k)}$ is captured in the midst of $k$ rounds while going from a vertex $a \in V(G)$ to a vertex $b \in V(G)$, where $(a,b) \in E(G)$, then the robber is made to complete the $k$ rounds and reach $b$, while the cop that captured the robber is made to follow a shortest path to the robber until the $k$ rounds are complete.)

Next, we show that $c_d(G^{(k)}) \leq c_{\floor{d/k}}(G) + 1$. Here, we simulate the robber moves of $G^{(k)}$ in $G$, and the cop moves of $G$ in $G^{(k)}$. The main strategy is to simulate $k$ rounds of the game in $G^{(k)}$ as a single round in $G$. Let $\mathcal{U} = \{U_1, \dots, U_p\} \cup \{U^*\}$ and $\mathcal{U'} = \{U_1', \dots, U_{p}'\}$ be two sets of cops playing in $G^{(k)}$ and $G$, respectively, where $p = c_{\floor{d/k}}(G)$. Initially, the $p$ cops $U_i$ of $G^{(k)}$ are placed on the same vertices as the cops $U_i'$ in $G^{(k)}$, whereas the cop $U^*$ is placed on an arbitrary vertex $s$. If $t$ is the initial vertex of the robber in $G^{(k)}$, then a robber is placed in the graph $G$ on a vertex of $G$ closest to $t$. After the first round, the role of the cop $U^*$ will be to track down the robber first along a shortest $(s,t)$-path in $G$, and then to trace the moves taken by the robber after the first round. This ensures that the robber does not stay at the same vertex or back-track its moves, except for a finite number of rounds. Henceforth, we may assume that if the robber in $G^{(k)}$ is presently on a vertex $v$ of $G$, then after $k$ moves, the robber is on another vertex $w$ of $G$ where $(v,w) \in E(G)$. And so, one can simulate this as a single robber-move in $G$ from $v$ to $w$. On the other hand, if a cop $U_i'$ in $G$ moves from vertex $a$ to $b$, then $U_i$ takes the path on $k$ edges from $a$ to $b$ in $G^{(k)}$. Therefore, every block of $k$ rounds in $G^{(k)}$ can be simulated as a single round in $G$. After some rounds in $G$, if a cop $U_i$ captures the robber within distance $\floor{d/k}$, then the distance between $U_i'$ and the robber, after the corresponding block of $k$ rounds is completed in $G^{(k)}$, is at most $k\floor{d/k} \leq d$.
\end{proof}

\subsection{\texorpdfstring{$(\Sigma,1)$}{Sigma1}-Planarisation and Kite-Augmentation.}

We use \cref{theorem: retract at a distance,theorem: subdivision} to define two simple operations for graphs drawn on surfaces, both of which are foundational to this paper. The two operations, called $(\Sigma,1)$-planarisation and kite-augmentation, are described below. 

\begin{definition}[$(\Sigma,1)$-Planarisation]  
For any graph $G$ drawn on a surface $\Sigma$, let the \emph{$(\Sigma,1)$-planarisation} of $G$ be a graph $G^{(k)}$ such that $G^{(k)}$ is $(\Sigma,1)$-plane.
\end{definition}

Every graph $G$ drawn on a surface $\Sigma$ has a $(\Sigma,1)$-planarisation: set $k$ equal to the minimum integer such that every edge is crossed at most $k$ times, and place $(k-1)$ subdivision vertices on each edge around crossing points so that $G^{(k)}$ is $(\Sigma,1)$-planar. Note that $(\Sigma,1)$-planarisation of a graph is not unique, since subdivisions of $(\Sigma,1)$-plane graphs are also $(\Sigma,1)$-plane. When $\Sigma$ is a sphere, then the resulting graph is 1-plane; in this case we simply write \textit{1-planarisation} instead of $(\Sigma,1)$-planarisation. In \cite{optimal1plane}, the process of 1-planarisation was used to prove that 1-planar graphs are cop-unbounded. Briefly, notice that by \cref{theorem: subdivision}, if $G^{(k)}$ is a 1-planarisation of $G$, then $c(G) \leq c(G^{(k)}) \leq c(G) + 1$. Since every graph $G$ has a 1-planarisation $H$, and the universal set of graphs is cop-unbounded, so is the set of all 1-planar graphs. 

Recall from \cref{sec: prelims} that a kite edge at a crossing of a $(\Sigma,1)$-plane graph is an uncrossed edge connecting two consecutive endpoints of a crossing such that the edge bounds a face incident to the crossing point. In this paper, we shall see that the presence of kite edges, or paths of uncrossed edges connecting consecutive endpoints of crossings, are especially useful to bound cop-numbers of 1-planar graphs. To this end, we introduce a class of $(\Sigma,1)$-planar drawings on surfaces called kite-augmented $(\Sigma,1)$-planar drawings.

\begin{definition}[Kite-augmented $(\Sigma,1)$-plane graph.]
    A $(\Sigma,1)$-plane graph is said to be \emph{kite-augmented} if, for every pair of consecutive endpoints $u,v$ of a crossing at a point $c$, there is a shortest $(u,v)$-path $\kappa_{uv}$ \edit{such that each edge of $\kappa_{uv}$ is uncrossed, each internal vertex of $\kappa_{uv}$ has degree 2, and $\{(c,u)\} \cup \kappa_{uv} \cup \{(v,c)\}$ bounds a face of the drawing}.
\end{definition}

The above definition lends itself to a procedure called kite-augmentation, defined below. 

\begin{definition}[Kite-augmentation]\label{def: kite-augmentation}
For any $(\Sigma,1)$-plane graph, let \emph{kite-augmentation} refer to the following process: for every pair of consecutive endpoints $u$ and $v$ of every crossing in the drawing, add a kite edge between $u$ and $v$ (if it does not already exist), and subdivide it to get a path $\kappa_{uv}$ of length equal to that of a shortest $(u,v)$-path in the original drawing.
\end{definition}

\begin{figure}
     \centering
     \begin{subfigure}[b]{0.45\textwidth}
         \centering
         \includegraphics[scale = 0.6, page = 1]{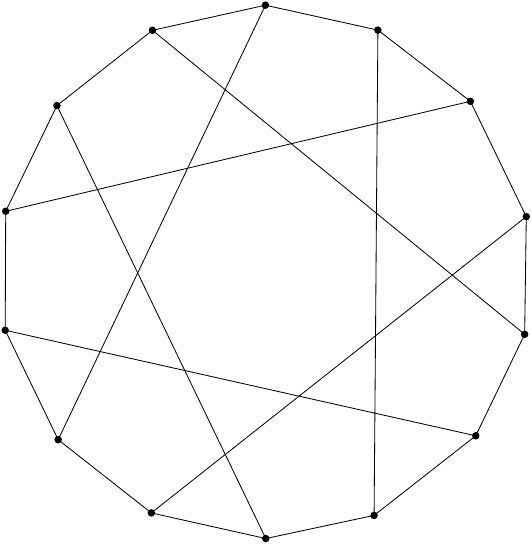}
     \end{subfigure}
     \hfill
     \begin{subfigure}[b]{0.45\textwidth}
         \centering
         \includegraphics[scale = 0.6, page = 2]{heawood.pdf}
     \end{subfigure}
    \caption{\edit{On the left is a drawing of a graph $G$, and on the right is a drawing of $G^\boxtimes$. The numbers alongside the edges indicate the lengths of the paths $\kappa_{uv}$.}}
    \label{fig: heawood}
\end{figure}

For any graph $G$ drawn on a surface $\Sigma$, we use the notation $G^\boxtimes$ to denote a graph obtained by $(\Sigma,1)$-planarising $G$ and then kite-augmenting the resulting graph \edit{(see \cref{fig: heawood} for an example)}. If $G^{(k)}$ is a $(\Sigma,1)$-plane graph, and $G^\boxtimes$ is obtained by kite-augmenting $G^{(k)}$, then $G^{(k)}$ is a retract of $G^\boxtimes$ since there is a homomorphism from each path $\kappa_{uv}$ to a shortest $(u,v)$-path in $G^{(k)}$. This observation, together with \cref{theorem: retract at a distance,theorem: subdivision} gives us the following corollary.

\begin{corollary}[Corollary of \cref{theorem: retract at a distance,theorem: subdivision}]\label{corollary: 1-planarisation cop-number}
Let $G$ be any graph drawn on a surface $\Sigma$, $G^{(k)}$ be a $(\Sigma,1)$-planarisation of $G$, and $G^\boxtimes$ be the kite-augmentation of $G^{(k)}$. Then $c_{\ceil{d/k}}(G) \leq c_d(G^\boxtimes) \leq c_{\floor{d/k}}(G) + 1$ for any integer $d \geq 0$.  
\end{corollary}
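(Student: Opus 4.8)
The plan is to obtain \cref{corollary: 1-planarisation cop-number} by chaining \cref{theorem: subdivision}, which controls $c_d(G^{(k)})$ in terms of the cop-numbers of $G$, with \cref{theorem: retract at a distance}, which controls $c_d(G^\boxtimes)$ in terms of $c_d(G^{(k)})$. The bridge between them is the observation recorded just above the corollary, namely that $G^{(k)}$ is a retract of $G^\boxtimes$. To make this precise I would exhibit the retraction $f\colon V(G^\boxtimes)\to V(G^{(k)})$ explicitly: it fixes every vertex of $G^{(k)}$, and on each added path $\kappa_{uv}$ it collapses $\kappa_{uv}$ onto a fixed shortest $(u,v)$-path $Q_{uv}$ of $G^{(k)}$ in the length-preserving way (the $i$-th vertex of $\kappa_{uv}$ counted from $u$ goes to the $i$-th vertex of $Q_{uv}$ counted from $u$). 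Since $|\kappa_{uv}|=\mathrm{dist}_{G^{(k)}}(u,v)=|Q_{uv}|$, this is well defined, it sends endpoints to endpoints, and it sends each edge of $\kappa_{uv}$ to an edge of $Q_{uv}$; hence $f$ is a graph homomorphism that is the identity on $V(G^{(k)})$, i.e.\ a retraction. One should double-check the degenerate case $|\kappa_{uv}|=1$: then $u$ and $v$ are already adjacent in $G^{(k)}$, so kite-augmentation introduces no new vertex and there is nothing to collapse.

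With the retract in hand, \cref{theorem: retract at a distance} gives, for every $d\geq 0$,
\[
c_d(G^{(k)}) \;\leq\; c_d(G^\boxtimes) \;\leq\; \max\bigl\{\,c_d(G^{(k)}),\; c_d(G^\boxtimes\setminus G^{(k)})+1\,\bigr\}.
\]
The next step is to bound the complement term. I would observe that $G^\boxtimes\setminus G^{(k)}$ (delete the vertices of $G^{(k)}$) is a disjoint union of paths, namely the interiors of the paths $\kappa_{uv}$: every internal vertex of a $\kappa_{uv}$ has degree $2$ in $G^\boxtimes$, with both neighbours lying on the same $\kappa_{uv}$ or in $G^{(k)}$, so distinct $\kappa_{uv}$'s become vertex-disjoint paths after the deletion. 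A single cop suffices on each such path, and since the robber cannot pass from one path-component to another without first entering $G^{(k)}$, only one extra cop is needed beyond those used to handle $G^{(k)}$; thus $c_d(G^\boxtimes)\leq\max\{c_d(G^{(k)}),2\}$.

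Finally I would feed in \cref{theorem: subdivision}, namely $c_{\ceil{d/k}}(G)\leq c_d(G^{(k)})\leq c_{\floor{d/k}}(G)+1$. The lower end, combined with $c_d(G^{(k)})\leq c_d(G^\boxtimes)$ from the retract inequality, immediately gives the left inequality of the corollary: $c_{\ceil{d/k}}(G)\leq c_d(G^\boxtimes)$. For the right inequality, note that every nonempty graph has cop-number at least $1$, so $c_{\floor{d/k}}(G)+1\geq 2$, whence $\max\{c_d(G^{(k)}),2\}\leq\max\{c_{\floor{d/k}}(G)+1,\,2\}=c_{\floor{d/k}}(G)+1$, i.e.\ $c_d(G^\boxtimes)\leq c_{\floor{d/k}}(G)+1$.

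I expect the two delicate points to be (i) checking that the collapsing map is genuinely a homomorphism in all cases, which is precisely why the construction keeps the internal vertices of $\kappa_{uv}$ of degree $2$ and sets $|\kappa_{uv}|$ equal to $\mathrm{dist}_{G^{(k)}}(u,v)$ rather than anything larger, and (ii) ensuring the ``$+1$'' in the retract bound is not wasted when $G^\boxtimes\setminus G^{(k)}$ is disconnected --- this is where one uses that the robber is trapped in a single path-component together with the fact that $c_{\floor{d/k}}(G)\geq 1$ to collapse the maximum cleanly. Everything else is bookkeeping with the two cited theorems.
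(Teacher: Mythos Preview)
Your proposal is correct and follows essentially the same route as the paper: apply \cref{theorem: subdivision} to relate $c_d(G^{(k)})$ to $c_{\ceil{d/k}}(G)$ and $c_{\floor{d/k}}(G)$, use the retract inequality from \cref{theorem: retract at a distance} with the observation that $G^\boxtimes\setminus G^{(k)}$ is a disjoint union of paths (so $c_d(G^\boxtimes\setminus G^{(k)})=1$), and collapse the resulting $\max$ via $c_{\floor{d/k}}(G)\ge 1$. One small remark: in your justification that one cop suffices on $G^\boxtimes\setminus G^{(k)}$, the relevant fact is simply that the cop-number of a disjoint union equals the maximum over components (the robber is confined to a single component in the standalone game on $G^\boxtimes\setminus G^{(k)}$); the phrase ``without first entering $G^{(k)}$'' is about the game on $G^\boxtimes$ and is not what the retract theorem asks for, though your conclusion is unaffected.
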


\begin{proof}
   From \cref{theorem: subdivision}, we have $c_{\ceil{d/k}}(G) \leq c_d(G^{(k)}) \leq c_{\floor{d/k}}(G) + 1$. Since $G^{(k)}$ is a retract of $G^\boxtimes$, we have from \cref{theorem: retract at a distance} that $c_d(G^{(k)}) \leq c_d(G^\boxtimes) \leq \max\{c_d(G^{(k)}), c_d(G^\boxtimes \setminus G^{(k)}) + 1\}$. As $G^\boxtimes$ is obtained by kite-augmenting $G^{(k)}$, the graph $G^\boxtimes \setminus G^{(k)}$ is a collection of vertex-disjoint paths; hence $c_d(G^\boxtimes \setminus G^{(k)}) = 1$. This implies that $c_d(G^\boxtimes) \leq \max\{c_d(G^{(k)}), 2\} \leq \max\{c_{\floor{d/k}}(G) + 1, 2\} = c_{\floor{d/k}}(G) + 1$. By combining the above inequalities, we get $c_{\ceil{d/k}}(G) \leq c_d(G^\boxtimes) \leq c_{\floor{d/k}}(G) + 1$. 
\end{proof}

The process of $(\Sigma,1)$-planarisation combined with kite-augmentation leads to several interesting consequences. One such consequence is presented in \cref{thm: single crossing per face}. It has long been known that planar graphs have cop-number at most 3 \cite{AignerFromme}. An interesting question is whether the set of all graphs $G$ that can be drawn with at most one crossing in each face of its skeleton $\mathrm{sk}(G)$ (i.e., the subgraph induced by the set of all uncrossed edges) has a small cop-number. Using the process of 1-planarisation and kite-augmentation, we can show that this is false. Indeed, one can even assume that no two crossings are close to each other in terms of face-distances. For a precise formulation, we need some definitions first. For any graph $G$ drawn on the sphere, consider the graph obtained by first planarising $G$ (i.e., adding dummy vertices at crossing points), and then inserting a \textit{face-vertex} inside each face and making it adjacent to all vertices on the face-boundary. Define the \textit{radial graph} of $G$ as the plane bipartite subgraph $R(G)$ induced by all edges incident with face-vertices. Let the \textit{face-distance} between any two crossings be the number of face-vertices on a shortest path in $R(G)$ connecting the two crossing points. 

\begin{theorem}\label{thm: single crossing per face}
For any three integers $d,f,m \geq 0$, there exists a 1-plane graph $G$ such that each face of $\mathrm{sk}(G)$ contains at most one crossing, the face-distance between any two crossing points is at least $f$, and $c_d(G) \geq m$. 
\end{theorem}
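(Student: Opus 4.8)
The plan is to start from a graph with large distance-$d$ cop-number and carefully realize it as the skeleton of a 1-plane graph, using $(\Sigma,1)$-planarisation (here just 1-planarisation) and kite-augmentation as the main vehicle, so that Corollary~\ref{corollary: 1-planarisation cop-number} transfers the lower bound. Since the universal class of graphs is cop-unbounded, for any target $m$ we can fix a graph $H$ with $c(H) \geq m - 1$, and more relevantly, for the distance-$d$ game we can subdivide: by Theorem~\ref{theorem: subdivision}, $c_d(H^{(2d+1)})$ is large when $c(H)$ is large, since $c_{\lceil d/(2d+1)\rceil}(H) = c_1(H)$... more cleanly, I would directly invoke that for every $m$ there is a graph $H_0$ with $c_d(H_0) \geq m$: take any $H$ with $c(H) \geq m$ and note $c_d(H^{(2d+1)}) \geq c(H) \geq m$ by Theorem~\ref{thm: distance cop-number 2d+1}. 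So a graph with large distance-$d$ cop-number exists; call it $H_0$.

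\textbf{Realizing $H_0$ inside a controlled 1-plane drawing.} The core construction is: take $H_0$, and replace each edge of $H_0$ by a long path (subdivide many times); this gives $H_0^{(k)}$ for large $k$, which by Corollary~\ref{corollary: 1-planarisation cop-number} still has $c_d(H_0^{(k)})$ controlled from below by $c_{\lceil d/k \rceil}(H_0) = c_0(H_0) = c(H_0)$ once $k > d$. Now I want a 1-plane graph $G$ whose skeleton-plus-kite-structure retracts onto something with large distance-$d$ cop-number. The key idea: draw $H_0$ in the plane with crossings (every graph has such a drawing), 1-planarise it into $G^{(k)}$, and then kite-augment to get $G^\boxtimes =: G$. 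In $G^\boxtimes$, the only crossings are the dummy-free crossings of the 1-planarised drawing; each such crossing sits inside a face of $\mathrm{sk}(G^\boxtimes)$, because kite edges are uncrossed and form part of the skeleton, bounding a triangular-ish region around each crossing. So I need to argue (i) each face of $\mathrm{sk}(G^\boxtimes)$ contains at most one crossing, and (ii) crossings can be made face-distance $\geq f$ apart. For (i): when we 1-planarise a drawing of $H_0$, we can first \emph{spread out} the crossings — redraw so that no two crossings lie in a common face of the uncrossed subgraph — by routing each crossed edge so that between any two consecutive crossings along it there is plenty of uncrossed structure; after kite-augmentation the kite edges further fence off each crossing into its own small face. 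For (ii): to force face-distance $\geq f$, subdivide the skeleton edges (increasing $k$), which inflates the radial graph $R(G)$ so that shortest radial paths between distinct crossings pass through many face-vertices; increasing $k$ past some threshold depending on $f$ achieves face-distance $\geq f$ while, crucially, $\lceil d/k \rceil$ stays $\leq 1$ so the cop-number lower bound $c_d(G) \geq c_{\lceil d/k\rceil}(H_0) \geq c(H_0) \geq m$ (choosing $H_0$ with $c(H_0) \geq m$) still holds via Corollary~\ref{corollary: 1-planarisation cop-number}.

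\textbf{Assembling the bound.} Concretely: given $d, f, m$, pick $H_0$ with $c(H_0) \geq m$ (possible since graphs are cop-unbounded). Pick $k$ large enough that (a) $k > d$, so $\lceil d/k \rceil = 1$ and hence $c_d(G^\boxtimes) \geq c_1(H_0) \geq c(H_0)/??$ — wait, here I must be careful: Corollary~\ref{corollary: 1-planarisation cop-number} gives $c_{\lceil d/k\rceil}(H_0) \leq c_d(G^\boxtimes)$, i.e. $c_1(H_0) \leq c_d(G)$, and $c_1(H_0)$ need not exceed $m$. The fix is to instead start from $H_0 = H^{(2d+3)}$ where $c(H) \geq m$, so that $c_1(H_0) = c_1(H^{(2d+3)}) \geq c(H) \geq m$ by Theorem~\ref{theorem: subdivision} (with $d' = 1$, $k' = 2d+3$: $c_{\lceil 1/(2d+3)\rceil}(H) = c_1(H) \leq c_1(H^{(2d+3)})$ — hmm, need $c_1$, not $c_0$). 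Better: use Theorem~\ref{theorem: subdivision} with subdivision factor $k' = 3$: $c_1(H^{(3)}) \geq c_{\lceil 1/3 \rceil}(H) = c_1(H)$, still $c_1$. The clean route is Theorem~\ref{thm: distance cop-number 2d+1}: $c(H) \leq c_1(H^{(3)})$, so taking $H$ with $c(H) \geq m$ gives a graph $H_0 := H^{(3)}$ with $c_1(H_0) \geq m$; more generally $c_{\lceil d/k \rceil}(H_0^{(k/\lceil\cdot\rceil)}) $ — I will package this as: \emph{there is a graph $H_0$ with $c_1(H_0) \geq m$}, and set $G^\boxtimes$ to be the kite-augmentation of a 1-planarisation of a drawing of $H_0$, subdividing skeleton edges enough to make all pairwise face-distances $\geq f$ while keeping the subdivision factor $k$ of crossed edges equal to the minimum needed for 1-planarity so that $\lceil d/k \rceil \leq d$; if necessary subdivide crossed edges further so $\lceil d/k \rceil = 1$. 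Then $c_d(G) \geq c_1(H_0) \geq m$, each face of $\mathrm{sk}(G)$ has at most one crossing (by the spreading-out and the kite fences), and face-distances are $\geq f$. Setting $G := G^\boxtimes$ finishes the proof.

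\textbf{The main obstacle} I anticipate is the simultaneous control of three constraints in the drawing: getting \emph{exactly} one crossing per skeleton-face, getting face-distance $\geq f$, and keeping the subdivision parameters in a regime where Corollary~\ref{corollary: 1-planarisation cop-number} still yields $c_d(G) \geq m$ rather than just $c_d(G) \geq c_{\lceil d/k\rceil}(H_0)$ with $\lceil d/k\rceil$ too large. The resolution is that face-distance is governed by subdividing \emph{skeleton} edges (which does not change $k$, the crossing multiplicity), while the cop-number transfer is governed by the subdivision of \emph{crossed} edges (the parameter $k$); since these are independent, one can first subdivide crossed edges just enough for 1-planarity and to make $\lceil d/k\rceil$ small, then subdivide skeleton edges freely to blow up face-distances, then kite-augment. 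Verifying that kite-augmentation does not merge or create crossings inside skeleton-faces, and that the radial graph distances behave as claimed after subdivision, is the routine-but-careful part I would write out in full.
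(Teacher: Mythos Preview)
Your overall strategy---start from a graph with large cop-number, use Theorem~\ref{thm: distance cop-number 2d+1} to get large $c_d$, then 1-planarise and kite-augment, invoking Corollary~\ref{corollary: 1-planarisation cop-number} for the lower bound---is exactly the paper's approach, and your argument for (i) (each crossing fenced into its own skeleton-face by the kite paths) is fine. But two things go wrong.

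First, the detour through $c_1(H_0)$ is unnecessary. From Corollary~\ref{corollary: 1-planarisation cop-number} you get $c_{\lceil d/k\rceil}(H_0)\le c_d(G^\boxtimes)$, and since $\lceil d/k\rceil\le d$ for every $k\ge 1$, Observation~\ref{obs: distance cop number monotonic} immediately gives $c_{\lceil d/k\rceil}(H_0)\ge c_d(H_0)\ge m$. There is nothing to fix here; the contortions with $H^{(3)}$ and $c_1$ can be deleted.

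Second, and this is the real gap, your mechanism for achieving face-distance $\ge f$ does not work. Subdividing an \emph{uncrossed} edge inserts a degree-$2$ vertex but creates no new face: the two faces incident to the edge are still single faces, just with longer boundaries. Hence the radial graph $R(G)$ gains a new non-face vertex but no new face-vertex, and the number of face-vertices on a shortest radial path between two crossing points is unchanged. Subdividing skeleton edges therefore does \emph{not} increase face-distance. (Separately, treating skeleton and crossed edges with different subdivision factors takes you outside the hypothesis of Corollary~\ref{corollary: 1-planarisation cop-number}, which requires the uniform $G^{(k)}$.) The paper resolves this by iterating: starting from $G'$, subdivide \emph{every} edge twice (so this is a genuine $(G')^{(3)}$, and the Corollary applies with $k=3$, preserving $c_d\ge m$ via monotonicity), and then kite-augment again. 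The point is that after the uniform subdivision the endpoints of each crossing are the new subdivision vertices, so the fresh kite-paths enclose a strictly smaller region around each crossing, inserting a new layer of faces between it and everything else; this raises all pairwise face-distances by $2$. Repeating $\lceil f/2\rceil$ times gives face-distance $\ge f$ while each step is a legitimate application of Corollary~\ref{corollary: 1-planarisation cop-number}.
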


\begin{proof}
Consider any graph $H$ with $c(H) \geq m$. From \cref{thm: distance cop-number 2d+1}, we know that $c_d(H^{(2d+1)}) \geq m$. Let $G'$ be a graph obtained by 1-planarising $H^{(2d+1)}$ and then kite-augmenting it. Therefore, $G'$ is a 1-planar graph such that $\mathrm{sk}(G')$ has at most one crossing per face. Using \cref{corollary: 1-planarisation cop-number} and \cref{obs: distance cop number monotonic} ($d_1 \leq d_2$ implies $c_{d_1}(G) \geq c_{d_2}(G)$), we can see that $m \leq c_d(H^{(2d+1)}) \leq c_d(G')$. To make the pairwise face-distance between the crossing points at least $f$, we repeat the following process $\ceil{f/2}$ times: Subdivide each edge twice to get a 1-planar drawing such that the endpoints of every crossing consist only of the newly added subdivision vertices, and then kite-augment the graph. This increases the face-distance between any two crossing points by 2, and after repeating $\ceil{f/2}$ times, the face-distance becomes at least $f$. Let $G$ be the resulting graph. By repeatedly applying \cref{corollary: 1-planarisation cop-number} (and \cref{obs: distance cop number monotonic}), we can verify that $c_d(G) \geq m$.  
\end{proof}

\subsection{Parameterizing \texorpdfstring{$d$}{d} through Distances Between Crossing Edges.}

Our study of distance $d$ Cops and Robbers for graphs drawn on surfaces focuses primarily on those values of $d$ that in some way capture how close crossing pairs of edges are in the drawing. To this end, we define two parameters $X(G)$ and $x(G)$:

\begin{definition}[$x(G)$ and $X(G)$.]\label{def: x and X}
For any fixed drawing of a graph $G$ on a surface, $x(G)$ is the smallest integer such that for every crossing in the drawing, there is a path of length at most $x(G)$ connecting \emph{some} pair of consecutive endpoints of the crossing. On the other hand, $X(G)$ is the smallest integer such that for every crossing in the drawing, there is a path of length at most $X(G)$ connecting \emph{every} pair of consecutive endpoints of the crossing.
\end{definition}

These two parameters are closely related to a structure called ribbon associated with crossings (first defined in \cite{biedl2024parameterized}): For any crossing pair of edges $e_1$ and $e_2$, a \textit{ribbon} at the crossing is a path with $e_1$ as the first edge and $e_2$ as the last edge. Every crossing in a graph has a ribbon of length at most $x(G) + 2$. Since any pair of consecutive endpoints of a crossing can be connected by a sub-path of a ribbon at the crossing, $X(G) \leq x(G) + 2$. Therefore, $X(G) \in \{x(G), x(G)+1, x(G)+2\}$. Our intention in differentiating between $x(G)$ and $X(G)$ is to capture the differences arising in cop-numbers due to variations in crossing configurations. For instance, in a maximal 1-planar graph $G$ where the endpoints of every crossing induce $K_4$, we have $x(G) = X(G) = 1$, and $c(G) = 3$ \cite{optimal1plane}. However, in a 1-planar graph without $\times$-crossings where the endpoints induce at least 3 edges, we have $x(G) = 1$ (and $X(G) \leq 3$), and $c(G) \leq 21$ \cite{tcs_1planar}. 

In \cref{lemma: bounds on x and X}, we give useful bounds on $X(G^{(k)})$ and $x(G^{(k)})$ in terms of the corresponding parameters for $G$. These will be used in \cref{thm: 1-planarisation}, and in later sections of the paper.  

\begin{lemma}\label{lemma: bounds on x and X}
For any graph $G$ and integer $k \geq 1$, we have $ k\cdot X(G) \leq X(G^{(k)}) \leq k\cdot X(G) + 2\floor{k/2}$  and $ k \cdot x(G) \leq x(G^{(k)}) \leq k \cdot x(G) + 2(k-1)$.
\end{lemma}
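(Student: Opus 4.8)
The plan is to prove the four inequalities by directly translating paths in $G$ into paths in $G^{(k)}$ and vice versa, exploiting the fact that $G^{(k)}$ is obtained by replacing each edge of $G$ by a path of exactly $k$ edges. Throughout, I will use the observation that a crossing of $G^{(k)}$ occurs on two edges that lie on the interiors of subdivided copies of two edges $e_1,e_2$ of $G$ that cross, and that the consecutive endpoints of the crossing in $G^{(k)}$ are either original vertices of $G$ or subdivision vertices sitting on the subdivided copies of $e_1$ or $e_2$.

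\medskip
\noindent\textit{Lower bounds.} For $X(G^{(k)}) \geq k\cdot X(G)$ and $x(G^{(k)}) \geq k\cdot x(G)$: take a crossing $\{e_1,e_2\}$ of $G$ on which the relevant bound in $G$ is attained, i.e. there is a pair of consecutive endpoints $u,v$ (for $X$: every such pair; for $x$: some such pair) with no $(u,v)$-path of length $< X(G)$ (resp. $< x(G)$). The corresponding crossing in $G^{(k)}$ lies on subdivided copies of $e_1$ and $e_2$; its consecutive endpoints are subdivision vertices $u',v'$ lying on those copies, at distance $a$ from $u$ along $e_1$ and $b$ from $v$ along $e_2$ with $1 \le a,b \le k-1$ (or at original vertices, the $a=0$ or $b=0$ case). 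Any $(u',v')$-path $P'$ in $G^{(k)}$ projects to a walk in $G$ from $u$ to $v$; since every edge of $G$ is subdivided into exactly $k$ edges, the length of $P'$ is at least $k$ times the length of the projected walk minus the partial bits at the two ends, but more carefully, $P'$ decomposes into the initial stretch from $u'$ back to the nearest original vertex, a middle part projecting to a $(p,q)$-walk in $G$ for original vertices $p,q$, and a final stretch; a short case analysis shows $|P'| \ge k \cdot \mathrm{dist}_G(u,v)$, where one uses $a + (k-a) = k$ and similarly for $b$. Hence the shortest consecutive-endpoint path at that crossing in $G^{(k)}$ has length at least $k\cdot X(G)$ (resp. $k\cdot x(G)$), giving the lower bounds.

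\medskip
\noindent\textit{Upper bounds.} For the upper bounds I go the other direction. Fix a crossing $C'$ of $G^{(k)}$, lying on subdivided copies of crossing edges $e_1 = (s_1,t_1)$, $e_2 = (s_2,t_2)$ of $G$. Let $C$ be the corresponding crossing of $G$. Pick a consecutive pair of endpoints of $C$, say $s_1$ and $s_2$, connected in $G$ by a path of length $\le X(G)$ (resp., for the $x$-bound, pick the consecutive pair realizing $x(G)$ with a path of length $\le x(G)$). This path lifts to a path in $G^{(k)}$ of length exactly $k\cdot(\text{its length})$. Now I need to connect the actual endpoints $u',v'$ of $C'$ in $G^{(k)}$: these are subdivision vertices at distances $a \in \{0,\dots,k-1\}$ and $b \in \{0,\dots,k-1\}$ from $s_1,s_2$ along the subdivided copies of $e_1,e_2$. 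Walking from $u'$ along the subdivided $e_1$ to $s_1$ costs $a \le k-1$ edges, then the lifted path, then from $s_2$ to $v'$ costs $b \le k-1$ edges; total $\le k\cdot X(G) + 2(k-1)$ in the $x$-case. For the sharper $X$-bound $k\cdot X(G) + 2\floor{k/2}$, I choose, among the (at least) two consecutive pairs of endpoints of $C'$ available for the $X$-parameter, the one that minimizes the detour: if $u'$ is at distance $a$ from $s_1$ then it is at distance $k-a$ from $t_1$, and likewise $v'$ is at distance $b$ from $s_2$ and $k-b$ from $t_2$; since there is a $(\,\cdot,\cdot\,)$-path of length $\le X(G)$ for every consecutive pair, I may pick the pair $\{s_1 \text{ or } t_1\} \times \{s_2 \text{ or } t_2\}$ making the total detour $\min(a,k-a)+\min(b,k-b) \le 2\floor{k/2}$, which is where the floor term comes from.

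\medskip
\noindent\textit{Main obstacle.} The delicate point is the upper bound for $X(G^{(k)})$: one must be careful that the crossing edges of $G^{(k)}$ sit on the \emph{interiors} of the subdivided edges (so $u',v'$ are genuine subdivision vertices, $1 \le a,b$, unless an endpoint coincides with an original vertex, which only helps), and that for the $X$-parameter all four choices of consecutive endpoint pairs are simultaneously available with short connecting paths so that the $\min(a,k-a)+\min(b,k-b)$ optimization is legitimate. One also has to handle the degenerate possibility that $e_1$ and $e_2$ share an endpoint in $G$; but a crossing of a drawing can be taken to have four distinct endpoints (and if not, the bound is only easier), so this is not a real issue. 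Everything else is bookkeeping with the "multiply lengths by $k$" correspondence between $G$ and $G^{(k)}$, together with \cref{obs: distance cop number monotonic}-style monotonicity is not even needed here.
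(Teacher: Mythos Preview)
Your upper-bound arguments are correct and essentially identical to the paper's: for the $X$-bound you route each endpoint $u',v'$ of the crossing in $G^{(k)}$ to the nearer original endpoint of its subdivided edge (detour $\le\lfloor k/2\rfloor$ each) and then use a length-$\le kX(G)$ path guaranteed by the definition of $X(G)$; for the $x$-bound you route to the specific pair realizing $x(G)$ (detour $\le k-1$ each). This is exactly what the paper does.

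Your lower-bound argument, however, has a gap. You assert that any $(u',v')$-path $P'$ in $G^{(k)}$ ``projects to a walk in $G$ from $u$ to $v$'' and conclude $|P'|\ge k\cdot\mathrm{dist}_G(u,v)$ for the particular consecutive pair $(u,v)$ of $G$ that realizes $X(G)$ (respectively $x(G)$). But $u'$ sits on the subdivided copy of $e_1$, and $P'$ may leave that copy through \emph{either} endpoint of $e_1$, not necessarily through $u$; likewise at the $v'$ end. What the projection actually yields is $|P'|\ge k\cdot\mathrm{dist}_G(p,q)$ for \emph{some} $p\in\{s_1,t_1\}$, $q\in\{s_2,t_2\}$---that is, a lower bound by $k$ times the \emph{minimum} consecutive-pair distance at that crossing, not the maximum. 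This is enough for the $x$-inequality (at the crossing witnessing $x(G)$ every consecutive-pair distance is $\ge x(G)$, so the minimum is $\ge x(G)$), but it does not deliver $X(G^{(k)})\ge k\cdot X(G)$, and your ``short case analysis using $a+(k-a)=k$'' does not repair this. The paper's own treatment of the lower bounds is a single sentence at the same level of detail, so your sketch is no worse than the original on this point---but the step as you have written it does not go through for $X$.
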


\begin{proof}
Let us call an edge $(u,v) \in E(G^{(k)})$ an \textit{internal edge} of $(u',v') \in E(G)$ if $P(u',v') = P(u'\dots u) \cup (u,v) \cup P(v\dots v')$, where $P(u',v')$ is the path on $k$ edges in $G^{(k)}$ that replaced the edge $(u',v') \in E(G)$. For every crossing $\{(u,v), (w,x)\}$ in $G^{(k)}$, there is a crossing $\{(u', v'), (w', x')\}$ of $G$ such that $(u,v)$ and $(w,x)$ are internal edges of $(u',v')$ and $(w',x')$ respectively. It is easy to see that $k \cdot X(G) \leq X(G^{(k)})$ and $k \cdot x(G) \leq x(G^{(k)})$, since every crossing in $G^{(k)}$ consists of edges that are internal to another crossing in $G$, and any path $P$ in $G$ has a corresponding path of length $k|P|$ in $G^{(k)}$. It is slightly less trivial to show the upper bounds on $X(G^{(k)})$ and $x(G^{(k)})$. For the rest of this proof, fix an arbitrary crossing $\{(u, v), (w, x)\}$ of $G^{(k)}$, and let these edges be internal to a corresponding crossing $\{(u',v'), (w',x')\}$ of $G$. 

To show the upper bound on $X(G^{(k)})$, we will show that for any pair of consecutive endpoints, say $u$ and $w$, there exists a $(u,w)$-path in $G^{(k)}$ of length at most $k \cdot X(G) + 2\floor{k/2}$. Notice that either the $(u,u')$-path or the $(u,v')$-path in $G^{(k)}$ has length at most $\floor{k/2}$. Likewise, either the $(w,w')$-path or the $(w,x')$-path has length at most $\floor{k/2}$. These two paths can be combined with a path of length at most $k \cdot X(G)$, which exists between every pair of consecutive endpoints of the crossing $\{(u', v'), (w', x')\}$. Hence, there is a $(u,w)$-path of length at most $k\cdot X(G) + 2\floor{k/2}$.

We now show the upper bound on $x(G^{(k)})$. Assume without loss of generality that there is a path of length at most $x(G)$ between $u'$ and $w'$ in $G$. The lengths of the $(u,u')$-path and the $(w,w')$-path are each at most $k-1$. When these paths are combined with the $(u',w')$-path in $G^{(k)}$ of length $k \cdot x(G)$, we get a $(u,w)$-path of length at most $k \cdot x(G) + 2(k-1)$.
\end{proof}

In \cref{thm: 1-planarisation}, we give a relation between distance $d$ cop-numbers of $G$ and $G^\boxtimes$, where $d$ is measured through the parameters $x(G)$ and $X(G)$.

\begin{theorem}\label{thm: 1-planarisation}
For all graphs $G$ drawn on a surface $\Sigma$, all graphs $G^\boxtimes$ obtained by $(\Sigma,1)$-planarising and kite-augmenting $G$, and all real numbers $\alpha > 0$:
\begin{enumerate}[(a)]
    \item $c_{\ceil{\alpha (x(G)+2)}}(G) \leq c_{\ceil{\alpha x(G^\boxtimes)}-1}(G^\boxtimes) \leq c_{\floor{\alpha x(G)}-1}(G) + 1$
    \item $c_{\ceil{\alpha (X(G)+1)}}(G) \leq c_{\ceil{\alpha X(G^\boxtimes)}-1}(G^\boxtimes) \leq c_{\floor{\alpha X(G)}-1}(G) + 1$
\end{enumerate} 
\end{theorem}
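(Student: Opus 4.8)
The plan is to combine the subdivision inequalities of \cref{theorem: subdivision} with the kite-augmentation retract argument of \cref{corollary: 1-planarisation cop-number}, using \cref{lemma: bounds on x and X} to translate the subdivision factor $k$ into the right multiple of $x(G)$ or $X(G)$. The key observation is that $G^\boxtimes$ is built from a $(\Sigma,1)$-planarisation $G^{(k)}$ of $G$ by kite-augmenting, so I can first relate $G$ to $G^{(k)}$ and then relate $G^{(k)}$ to $G^\boxtimes$. Since I do not get to choose $k$ freely (it is determined by the drawing of $G$), the argument must work for whatever $k$ arises, and the role of \cref{lemma: bounds on x and X} is precisely to bound $x(G^{(k)})$ and $X(G^{(k)})$ from above and below in terms of $x(G)$, $X(G)$, and $k$.

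First I would prove part (b). For the right inequality, apply \cref{corollary: 1-planarisation cop-number} with $d = \lceil \alpha X(G^{\boxtimes})\rceil - 1$: this gives $c_{\lceil \alpha X(G^\boxtimes)\rceil - 1}(G^\boxtimes) \le c_{\lfloor d/k\rfloor}(G) + 1$, so it suffices to check $\lfloor d/k \rfloor \le \lfloor \alpha X(G)\rfloor - 1$. Here I must note that $X(G^\boxtimes) = X(G^{(k)})$ — kite-augmentation only adds paths that are no shorter than existing shortest paths between consecutive endpoints, so it does not decrease $X$, and it cannot increase it either since the ribbons already present are untouched; hence $X(G^\boxtimes)$ equals $X(G^{(k)})$, which by \cref{lemma: bounds on x and X} is at most $kX(G) + 2\lfloor k/2\rfloor$. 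Then $d \le \alpha(kX(G) + 2\lfloor k/2\rfloor) - 1$ and a short computation (using $2\lfloor k/2\rfloor \le k$ and $\alpha <$ whatever is needed, or more carefully the integrality of the floor) should yield $\lfloor d/k\rfloor \le \lfloor \alpha X(G)\rfloor - 1$; I expect this estimate to be the step requiring the most care, since one has to be careful with the floors and with the fact that $\alpha$ is an arbitrary positive real, not necessarily making $\alpha X(G)$ an integer. For the left inequality, apply \cref{corollary: 1-planarisation cop-number} in the other direction with the same $d$: $c_{\lceil d/k\rceil}(G) \le c_d(G^\boxtimes)$, and now I need $\lceil d/k \rceil \ge \lceil \alpha(X(G)+1)\rceil$, which should follow from the lower bound $X(G^\boxtimes) = X(G^{(k)}) \ge kX(G)$ together with $d \ge \alpha X(G^{(k)}) - 1 \ge \alpha k X(G) - 1$; again dividing by $k$ and ceiling needs a careful but routine manipulation, the "$+1$" coming from the "$-1$" in the exponent interacting with the ceiling.

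Part (a) is entirely parallel, replacing $X$ by $x$ and using the bounds $kx(G) \le x(G^{(k)}) \le kx(G) + 2(k-1)$ from \cref{lemma: bounds on x and X}; the extra additive term $2(k-1)$ rather than $2\lfloor k/2\rfloor$ is what turns the "$+1$" in the exponent on the left side of (a) into a "$+2$" (i.e.\ $x(G)+2$ rather than $X(G)+1$), matching \cref{def: x and X}'s remark that $X(G) \le x(G)+2$. I would again need $x(G^\boxtimes) = x(G^{(k)})$, justified exactly as before.

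The main obstacle I anticipate is bookkeeping with the floor and ceiling functions when $\alpha$ is an arbitrary real: one must verify inequalities such as $\lfloor (\alpha(kX+2\lfloor k/2\rfloor)-1)/k\rfloor \le \lfloor \alpha X\rfloor - 1$ without assuming $\alpha X$ or $\alpha k$ is an integer. The clean way is probably to first observe $\alpha(2\lfloor k/2\rfloor) - 1 < \alpha k$ when, say, $\alpha \le 1$ — but since $\alpha$ can be large this does not hold in general, so instead I would bound $\lfloor d/k\rfloor \le d/k \le \alpha X(G) + \alpha(2\lfloor k/2\rfloor)/k - 1/k$ and then argue that because $\lfloor d/k\rfloor$ is an integer, it is at most $\lfloor \alpha X(G) + \text{(something} < 1)\rfloor - \lceil 1/k - \ldots\rceil$; this is delicate and is where I would spend the bulk of the writeup, possibly splitting into the cases $\alpha X(G) \in \mathbb{Z}$ and $\alpha X(G) \notin \mathbb{Z}$, or handling even and odd $k$ separately so that $2\lfloor k/2\rfloor$ is either $k$ or $k-1$. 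Everything else reduces to citing \cref{corollary: 1-planarisation cop-number}, \cref{lemma: bounds on x and X}, and the observation that kite-augmentation preserves $x$ and $X$.
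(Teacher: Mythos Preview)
Your overall strategy matches the paper's: set $d=\lceil\alpha X(G^\boxtimes)\rceil-1$ (resp.\ with $x$), invoke \cref{corollary: 1-planarisation cop-number}, observe $X(G^\boxtimes)=X(G^{(k)})$ and $x(G^\boxtimes)=x(G^{(k)})$, and use \cref{lemma: bounds on x and X} to convert. But you have the directions of the required inequalities---and hence which bound from \cref{lemma: bounds on x and X} to apply---exactly reversed.

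For the right inequality of (b) you want $c_{\lfloor d/k\rfloor}(G)\le c_{\lfloor\alpha X(G)\rfloor-1}(G)$, and by \cref{obs: distance cop number monotonic} this requires $\lfloor d/k\rfloor \ge \lfloor\alpha X(G)\rfloor-1$, not $\le$. That calls for the \emph{lower} bound $X(G^\boxtimes)\ge kX(G)$, giving $d\ge \alpha kX(G)-1$ and hence $\lfloor d/k\rfloor\ge\lfloor\alpha X(G)-1/k\rfloor\ge\lfloor\alpha X(G)\rfloor-1$ in one line. Symmetrically, for the left inequality you need $\lceil d/k\rceil\le\lceil\alpha(X(G)+1)\rceil$, which uses the \emph{upper} bound $X(G^\boxtimes)\le kX(G)+2\lfloor k/2\rfloor\le k(X(G)+1)$; then $\lceil d/k\rceil\le\lceil\lceil k\alpha(X(G)+1)\rceil/k\rceil=\lceil\alpha(X(G)+1)\rceil$ via the identity $\lceil\lceil km\rceil/k\rceil=\lceil m\rceil$. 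The inequality you set out to verify, $\lfloor(\alpha(kX+2\lfloor k/2\rfloor)-1)/k\rfloor\le\lfloor\alpha X\rfloor-1$, is in fact false (take $\alpha=2$, $X=1$, $k=2$), which is precisely why you found it ``delicate'': you were trying to prove the wrong statement. Once the directions are fixed, no case-splitting on parity of $k$ or integrality of $\alpha X$ is needed; the paper dispatches all four estimates in four displayed lines.
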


\begin{proof}
Fix a graph $G^\boxtimes$, and consider the graph $(\Sigma,1)$-plane graph $G^{(k)}$ such that $G^\boxtimes$ is the kite-augmentation of $G^{(k)}$. For every pair of vertices $u,v \in G^{(k)}$, there is a $(u,v)$-path in $G^\boxtimes$ of length at most $p$ if and only if there is a $(u,v)$-path in $G^{(k)}$ of length at most $p$. Hence, $X(G^\boxtimes) = X(G^{(k)})$ and $x(G^\boxtimes) = x(G^{(k)})$. By setting $d = \ceil{\alpha X(G^\boxtimes)} - 1$ or $d = \ceil{\alpha x(G^\boxtimes)} - 1$ in \cref{corollary: 1-planarisation cop-number}, and using the bounds on $X(G^\boxtimes)$ and $x(G^\boxtimes)$ provided by \cref{lemma: bounds on x and X}, we arrive at the following inequalities:
{\allowdisplaybreaks
\begin{align}
    &\left \lceil \frac{\ceil{\alpha x(G^\boxtimes)} - 1}{k} \right \rceil \leq \left \lceil \frac{\ceil{\alpha (k x(G) + 2k)}}{k} \right \rceil = \left \lceil \frac{\ceil{ k (\alpha (x(G) + 2))}}{k} \right \rceil \leq \ceil{\alpha (x(G) + 2)} \\
    &\left \lfloor \frac{\ceil{\alpha x(G^\boxtimes)} - 1}{k} \right \rfloor \geq \left \lfloor \frac{\alpha kx(G) - 1}{k} \right \rfloor = \left \lfloor \alpha x(G) - 1/k \right \rfloor \geq \floor{\alpha x(G)}-1 \\
    &\left \lceil \frac{\ceil{\alpha X(G^\boxtimes)} - 1}{k} \right \rceil \leq \left \lceil \frac{\ceil{\alpha (k X(G) + k)}}{k} \right \rceil = \left \lceil \frac{\ceil{k (\alpha (X(G) + 1))}}{k} \right \rceil \leq \ceil{\alpha (X(G) + 1)} \\
    &\left \lfloor \frac{\ceil{\alpha X(G^\boxtimes)} - 1}{k} \right \rfloor \geq \left \lfloor \frac{\alpha kX(G) - 1}{k} \right \rfloor = \left \lfloor \alpha X(G) - 1/k \right \rfloor \geq \floor{\alpha X(G)}-1
\end{align}}

By setting $d = \ceil{\alpha x(G^\boxtimes)} - 1$ in \cref{corollary: 1-planarisation cop-number}, and using Equations (1) and (2), we have $c_{\ceil{\alpha (x(G)+2)}}(G) \leq c_{\left \lceil \frac{\ceil{\alpha x(G^\boxtimes)} - 1}{k} \right \rceil}(G) \leq c_{\ceil{\alpha x(G^\boxtimes)} - 1}(G^\boxtimes) \leq c_{\left \lfloor \frac{\ceil{\alpha x(G^\boxtimes)} - 1}{k} \right \rfloor}(G) \leq c_{\floor{\alpha x(G)}-1}(G) + 1$. Similarly, by setting $d = \ceil{\alpha X(G^\boxtimes)} - 1$ in \cref{corollary: 1-planarisation cop-number}, and using Equations (3) and (4), we have $c_{\ceil{\alpha (X(G)+1)}}(G) \leq c_{\left \lceil \frac{\ceil{\alpha X(G^\boxtimes)} - 1}{k} \right \rceil}(G) \leq c_{\ceil{\alpha X(G^\boxtimes)} - 1}(G^\boxtimes) \leq c_{\left \lfloor \frac{\ceil{\alpha X(G^\boxtimes)} - 1}{k} \right \rfloor}(G) \leq c_{\floor{\alpha X(G)}-1}(G) + 1$.
\end{proof}

\section{1-Planar Graphs}\label{sec: 1-planar}

In view of \cref{thm: 1-planarisation}, it is sufficient to restrict attention to kite-augmented $(\Sigma,1)$-plane graphs. In this section, we focus on kite-augmented 1-planar graphs (drawn on the sphere), while \cref{sec: graphs on surfaces} extends the discussion to kite-augmented $(\Sigma,1)$-plane graphs on surfaces. The entirety of this section is devoted to proving \cref{thm: cop number 1-planar weighted kite edge} and understanding its implications. 

\begin{theorem}\label{thm: cop number 1-planar weighted kite edge}
Let $G$ be any kite-augmented 1-plane graph. For $d \in \{X(G), x(G)\}$ and $\alpha \geq 1$, we have $c_{\ceil{\alpha d} - 1}(G) \leq 3 \cdot(2 \beta + 1)$ where 
\begin{equation*}
\beta = \begin{cases}
             d + 1  & \text{if } d = x(G) \text{ and } \alpha = 1 \\
             \left \lceil \frac{1}{2(\alpha - 1)}\right \rceil + 1   & \text{if } d = x(G) \text{ and } \alpha > 1
       \end{cases} \qquad
\beta = \begin{cases}
             d - 1  & \text{if } d = X(G) \text{ and } \alpha = 1 \\
             \left \lceil \frac{1}{2(\alpha - 1)} \right \rceil  & \text{if } d = X(G) \text{ and } \alpha > 1
       \end{cases}
\end{equation*}
\end{theorem}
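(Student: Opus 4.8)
The plan is to follow the Aigner–Fromme paradigm, adapted to the distance-$d$ setting and exploiting the kite-edge structure. Recall that the classical argument for planar graphs shows that a single cop can guard an isometric path $P$ in the sense that, after finitely many rounds, the robber cannot step onto $P$ without being caught; three cops then guard a shrinking sequence of such paths until the robber's territory is exhausted. Here I would prove a distance-$d$ analogue: a single cop can ``$d$-guard'' a geodesic path $P$, meaning that eventually the robber can never be within distance $d$ of $P$ without being captured. The key point is that we are working in a kite-augmented $1$-plane graph, where every crossing comes equipped with short uncrossed paths $\kappa_{uv}$ connecting consecutive endpoints. The idea is that an isometric path in the \emph{planar skeleton-like} structure, when fattened by the radius $\beta$ that appears in the statement, still behaves like a separator in the whole drawing, because any robber path crossing $P$ must pass near one of the short kite-paths and hence stays within distance roughly $\beta$ of $P$.

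The main steps, in order, would be: (1) Set up the value of $\beta$ from $d = X(G)$ or $d = x(G)$ and $\alpha$, noting that $\lceil \alpha d\rceil - 1 \geq d + \beta$-type inequalities hold — this is the bookkeeping that matches the $\beta$ in the four cases (when $\alpha = 1$ we get the linear-in-$d$ bound $d\pm 1$, and when $\alpha > 1$ the $\lceil 1/(2(\alpha-1))\rceil$ term comes from how much ``slack'' $\alpha d$ has over $d$). (2) Prove the single-cop lemma: a cop can position itself on a shortest path $P$ in $G$ so that any vertex $v$ of $G$ within graph-distance $\beta$ of $P$, if occupied by the robber, leads to capture within distance $d$; the cop maintains the invariant that it shadows the robber's projection onto $P$. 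For this I need that $P$, together with its $\beta$-neighbourhood, separates the sphere — here the kite-augmentation is essential, since in a $1$-plane graph a shortest path need not be a curve-separator, but once we allow ourselves distance $\beta$ we can ``route around'' every crossing $P$ participates in via the $\kappa_{uv}$ paths, whose lengths are bounded by $X(G)$ or $x(G)+2$. (3) Iterate: with three cops, guard three geodesics bounding a region containing the robber, then use a fourth-path argument (reusing one cop) to shrink the region, exactly as in Aigner–Fromme, so that after finitely many rounds the robber's region is empty; since each geodesic needs $2\beta + 1$ cops to $d$-guard (the cop plus $\beta$ on each side, or rather a single cop whose reach is extended — more precisely the factor $2\beta+1$ reflects guarding a ``thick'' path), the total is $3(2\beta+1)$.

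I expect the main obstacle to be step (2), specifically making precise the sense in which a shortest path in a kite-augmented $1$-plane graph acts as a separator whose ``width'' is controlled by $\beta$, and verifying that the cop can actually maintain the shadowing invariant across crossings — the robber might try to use a crossing edge to ``teleport'' past the cop's position on $P$, and the argument must show that doing so still leaves the robber within distance $d$ of the cop, using that $d = \lceil \alpha d'\rceil - 1$ is large enough relative to the kite-path lengths $d' \in \{X(G), x(G)\}$. A secondary technical point is handling the case $\alpha = 1$ separately (where the slack is smallest and $\beta$ must be taken linear in $d$) versus $\alpha > 1$ (where a constant-sized $\beta$ suffices), and checking the two sub-cases for $x(G)$ versus $X(G)$ differ only by the additive constants $+2$ versus $0$ inherited from the ribbon bound $X(G) \le x(G) + 2$. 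The rest — the three-cop region-shrinking induction — should be a routine adaptation of the classical planar argument once the single-cop $d$-guarding lemma is in hand.
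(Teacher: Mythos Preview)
Your high-level picture---three teams of cops running the Aigner--Fromme region-shrinking argument, each team of size $2\beta+1$ assigned to one geodesic---matches the paper. But the per-path mechanism you sketch in step~(2) is not what the paper does, and as stated it has a real gap. The $2\beta+1$ cops do not guard a $\beta$-neighbourhood of $P$; they all sit \emph{on} $P$. One cop $\mathcal{U}^*$ shadows the robber's projection in the classical way, and the other $2\beta$ cops ride along with $\mathcal{U}^*$, spread on $P$ so as to occupy a subpath (the ``range'') of radius $\beta$ when $\alpha=1$, or radius $\beta\cdot 2\lceil(\alpha-1)d\rceil$ when $\alpha>1$, centred at $\mathcal{U}^*$. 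The guarantee is \emph{not} that the robber can never be within distance $\beta$ of $P$---he can---but that he cannot \emph{traverse an edge that crosses an edge of $P$} without being caught (the paper calls this \emph{crossing-guarding}). If $(u,v)\in E(P)$ is crossed by $(w,x)$ and the robber sits at $w$, the kite-paths $\kappa_{wu},\kappa_{wv}$ have length at most $X(G)$ (or one of the four has length at most $x(G)$), which forces $\mathcal{U}^*$ close to $u$ or $v$; when the robber then steps to $x$, some cop in the range is within $\lceil\alpha d\rceil-1$. The values of $\beta$ in the four cases come from sizing the range to have radius at least $d-1$ (for $X$) or $d+1$ (for $x$), not from an inequality $\lceil\alpha d\rceil-1\geq d+\beta$ as you suggest.

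Second, your worry that ``a shortest path need not be a curve-separator'' and your proposed fix of fattening by $\beta$ is a red herring. The paper instead proves that in any $(s,t)$-subgraph there is a shortest $(s,t)$-path with \emph{no self-crossings}: if two edges of a candidate shortest path crossed each other, a kite-path would shortcut the subpath between them. Such a non-self-crossing path \emph{is} a simple curve on the sphere; combined with crossing-guarding it genuinely separates the robber territory, so the separation is topological rather than metric and holds for $P$ itself, not for a neighbourhood. You will need this lemma for step~(3) as well, since the cycles you build from two guarded paths must be Jordan curves before you can speak of the robber being ``inside'' one of them.
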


The proof of \cref{thm: cop number 1-planar weighted kite edge} is structured along the lines of the proof that the cop-number of a planar graph is at most 3 \cite{AignerFromme, bonato_book}. For planar graphs, the essential idea is that 3 cops progressively guard larger and larger subgraphs, while maintaining as an invariant that the frontier between the guarded and unguarded subgraph is always either a shortest path or a cycle composed of two shortest paths. Unlike planar graphs that require a single cop to guard a shortest path, we require $2\beta + 1$ cops since the cops must also ensure that the robber does not cross any edge of the shortest path; this coarsely explains why the cop-number increases by a factor of $2\beta+1$ in \cref{thm: cop number 1-planar weighted kite edge}. 

\subsection{Guarding Shortest Paths in \texorpdfstring{$(s,t)$}{st}-Subgraphs.}

The choice of which shortest path or cycle to guard is made by selecting two vertices, say $s$ and $t$, and then finding a shortest $(s,t)$-path through the unguarded subgraph. Since we require that such a shortest path include at least one vertex from the unguarded subgraph, it may be necessary to exclude the edge $(s, t)$ when both $s$ and $t$ are already guarded and adjacent in $G$. In light of this, we find it convenient to view an unguarded subgraph as an $(s,t)$-subgraph, defined as follows. \edit{(For \cref{def: st subgraph}, recall the notation $\kappa_{ab}$ from \cref{def: kite-augmentation} for the path resulting from repeated subdivision of kite-edges.)

\begin{definition}[$(s,t)$-subgraphs of $G$]\label{def: st subgraph}
  Let $G$ be a kite-augmented 1-planar graph. Let $s$ and $t$ be two distinct vertices of $G$. A connected subgraph $H$ of $G$, with $s,t \in V(H)$, is an \emph{$(s,t)$-subgraph} of $G$ if, for any pair of distinct vertices $a,b \in V(H)$, except possibly for the pair $\{a,b\} = \{s,t\}$, if $\kappa_{ab} \subseteq G$, then $\kappa_{ab} \subseteq H$.  
\end{definition}
}

In \cref{lem: shortest paths do not cross}, we show that there always exists a shortest $(s,t)$-path in an $(s,t)$-subgraph such that no two edges of the shortest path cross each other. For brevity, we say that a subgraph $H$ of $G$ is \textit{self-crossing} if there exist a pair of edges of $H$ that cross each other; otherwise, $H$ is \textit{non-self-crossing}.   

\begin{lemma}\label{lem: shortest paths do not cross}
In any $(s,t)$-subgraph $H$ of a kite-augmented 1-planar graph $G$, there exists an $(s,t)$-path $P$ that is a shortest path in $H$ and is non-self-crossing.  
\end{lemma}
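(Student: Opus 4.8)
The plan is to take any shortest $(s,t)$-path $P$ in $H$ and argue that if it is self-crossing, we can locally reroute it to a strictly ``simpler'' shortest path, eventually reaching one that is non-self-crossing. First I would fix a shortest $(s,t)$-path $P$ in $H$ minimizing the number of crossings among all shortest $(s,t)$-paths in $H$ (this number is finite since the drawing has finitely many crossings), and suppose for contradiction that $P$ still has a crossing. So two edges $e_1 = (a,b)$ and $e_2 = (w,z)$ of $P$ cross at some point $c$. Since $G$ is 1-plane, each of $e_1,e_2$ is crossed only once, so $e_1 \ne e_2$ and $\{a,b\} \cap \{w,z\} = \emptyset$ (a $(\Sigma,1)$-plane drawing can be taken so that no two edges at a common vertex cross). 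Walking along $P$ from $s$ to $t$, the two crossing edges occur in some order, say $e_1$ before $e_2$; after possibly swapping endpoint labels, $P$ traverses $\dots, a, b, \dots, w, z, \dots$ where the segment of $P$ between $b$ and $w$ (inclusive) is a subpath $Q$.

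The key structural fact I would invoke is that $G$ is kite-augmented: every pair of consecutive endpoints of the crossing $\{e_1,e_2\}$ is joined in $G$ by a path $\kappa_{uv}$ whose edges are uncrossed, whose internal vertices have degree $2$, and — crucially — such that $\{(c,u)\}\cup\kappa_{uv}\cup\{(v,c)\}$ bounds a face. In particular $\kappa_{aw}$ and $\kappa_{bz}$ exist in $G$, and since $a,b,w,z \in V(H)$ and $H$ is an $(s,t)$-subgraph, these $\kappa$-paths lie inside $H$ (the exceptional pair $\{s,t\}$ can be avoided here because $a,b,w,z$ are endpoints of a genuine crossing, hence the relevant consecutive pairs are not $\{s,t\}$ — or, if one of them is, one uses the other consecutive pair on the opposite side of the crossing, noting that at least one of the four consecutive pairs is distinct from $\{s,t\}$). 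Now I would use that $\kappa_{aw}$ is a \emph{shortest} $(a,w)$-path: replace the subpath $P(a \dots w)$, which contains the half-edge $(a,c)$ of $e_1$, the portion $Q$, and the half-edge $(w,c)$ of $e_2$, by $\kappa_{aw}$. Since $|\kappa_{aw}| \le d_H(a,w) \le |P(a\dots w)|$, the resulting $(s,t)$-walk has length at most $|P|$; after removing any repeated vertices it is a shortest $(s,t)$-path, call it $P'$. Because $\kappa_{aw}$ consists of uncrossed edges, it introduces no new crossings, and it removes at least the crossing $\{e_1,e_2\}$ (and any crossing that involved $Q$ or the deleted half-edges), so $P'$ has strictly fewer crossings than $P$ — contradicting minimality. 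Hence $P$ was non-self-crossing all along.

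The step I expect to be the main obstacle is the bookkeeping in the rerouting: one has to be careful that the walk obtained after substituting $\kappa_{aw}$ (i) really is no longer, (ii) genuinely has fewer crossings — in particular that $\kappa_{aw}$ is vertex-disjoint from, or can be made consistent with, the untouched portions $P(s \dots a)$ and $P(w \dots t)$, since reusing a vertex would force us to shortcut and re-argue shortest-ness — and (iii) that the exceptional clause ``except possibly $\{a,b\}=\{s,t\}$'' in the definition of $(s,t)$-subgraph never blocks the substitution. For (iii) the point is that a crossing has four distinct endpoints and four consecutive pairs; at most one of these pairs can equal $\{s,t\}$, so there is always a consecutive pair of the crossing available in $H$, and by choosing the crossing-diagonal appropriately (replacing the subpath between \emph{that} consecutive pair) one can always reroute. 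For (i) and (ii), the cleanest route is to choose the first self-crossing encountered along $P$ and argue the splicing only touches the segment strictly between its two crossing edges, so all crossings on $P(s \dots a)$ are preserved and the crossing being removed is strictly decreasing the count; any accidental shared vertex only lets us shortcut further, which cannot increase length or crossings. I would also remark that, since the statement only asks for existence of one such shortest non-self-crossing path, we do not need the rerouting to terminate in a canonical path — just that the crossing-count is a nonnegative integer that strictly decreases, so the minimizer has count $0$.
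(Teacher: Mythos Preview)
Your approach is essentially the paper's: pick a shortest $(s,t)$-path in $H$ with the fewest self-crossings, and at a self-crossing reroute through a kite-path to strictly decrease the crossing count. The only difference is which kite-path is used. With your labelling $a,b,\dots,w,z$ along $P$, the paper reroutes via the ``outer'' pair $\kappa_{az}$, replacing all of $P(a\dots z)$, and then handles the exceptional case $\{a,z\}=\{s,t\}$ separately by taking $\kappa_{aw}\cup\{(w,z)\}$ instead. Your choice of $\kappa_{aw}$ actually sidesteps the case split altogether, because $w$ lies strictly between $b$ and $z$ on $P$ and hence $w\notin\{s,t\}$, so $\{a,w\}\neq\{s,t\}$ and $\kappa_{aw}\subseteq H$ automatically. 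Do cite that positional reason, though: your stated justification (``$a,b,w,z$ are endpoints of a genuine crossing, hence the relevant consecutive pairs are not $\{s,t\}$'') is a non-sequitur, and the fallback ``use another consecutive pair'' is unnecessary once you observe $w$ is internal. Also, your description of $P(a\dots w)$ as containing ``half-edges'' is off --- it contains the full edge $(a,b)$ and stops before $(w,z)$ --- but that is exactly what you need, since removing $(a,b)$ already kills the crossing.
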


\begin{proof}
Let $P$ be a shortest $(s,t)$-path of $H$ with the minimum number of self-crossings. If $P$ has no self-crossings, then we are already done, so assume otherwise, for the sake of contradiction. We will construct another shortest $(s,t)$-path $P'$ with fewer self-crossings than $P$. Enumerate the vertices of $P$ based on their distances from $s$, and let $\{(u,v), (w,x)\}$ be a crossing of $G$ such that $\{(u,v), (w,x)\} \subseteq E(P)$, and up to renaming, let the four endpoints of the crossing appear in $P$ in the order $u,v,w,x$. As $G$ is kite-augmented, there exists a path $\kappa_{ux}$ in $G$. If $\kappa_{ux} \subseteq H$, then one can substitute the subpath $P(u \dots x)$ with $\kappa_{ux}$ to get another shortest path; i.e., $P' := P \setminus E(P(u \dots x)) \cup E(\kappa_{ux})$ is a shortest $(s,t)$-path in $H$ with fewer self-crossings. Since this leads to a contradiction, we must assume that $\kappa_{ux}$ is not a subgraph of $H$. Since $H$ is an $(s,t)$-subgraph and $\kappa_{ux}$ is a path of uncrossed edges, this can happen only when $\{u,x\} = \{s,t\}$. Let $P'$ be the path $\kappa_{uw} \cup \{(w,x)\}$. Since $|\kappa_{uw}| \leq |P(u \dots w)|$, we have $|P'| \leq |P|$. This again leads to a contradiction as $P'$ has fewer self-crossings than $P$.   
\end{proof}

As mentioned before, our intention is to guard paths and cycles in $(s,t)$-subgraphs of $G$ such that the robber can neither land on a vertex nor cross any edge of the path or cycle. This idea is formalised with the following definition.

\begin{definition}[Crossing-guarded subgraph]\label{def: crossing-guard}
For any integer $d \geq 0$, a graph $H$ is \emph{crossing-guarded at distance $d$ by a set $\mathcal{U}$ of cops} if: 
\begin{enumerate}[(a)]
    \item \edit{For any vertex $v \in V(H)$, if the robber lands on $v$, then he is captured by a cop of $\mathcal{U}$, and}
    \item For any crossing $\{(u,v), (w,x)\}$ of $G$ with $(u,v) \in E(H)$, if the robber crosses the edge $(u,v)$ by moving from $w$ to $x$, then he is captured by a cop of $\mathcal{U}$ within distance $d$.
\end{enumerate}
\end{definition}

In \cref{lem: shortest path guarding}, we show that any shortest path in an $(s,t)$-subgraph of $G$ is crossing guardable at distance $\ceil{\alpha d}-1$ by $2 \beta + 1$ cops for the values of $\alpha$, $d$ and $\beta$ as stated in \cref{thm: cop number 1-planar weighted kite edge}.

\begin{lemma}\label{lem: shortest path guarding}
\edit{Let $H$ be an $(s,t)$-subgraph of $G$ containing the robber, where the robber is restricted to moving only along the edges of $H$.} Let $P$ be a shortest $(s,t)$-path in $H$. For $d \in \{x(G), X(G)\}$ and any $\alpha \geq 1$, $P$ is crossing-guardable at distance $\ceil{\alpha d}-1$ by a set $\mathcal{U}$ of cops where $|\mathcal{U}| \leq 2\beta + 1$ and
\begin{equation*}
\beta = \begin{cases}
             d + 1  & \text{if } d = x(G) \text{ and } \alpha = 1 \\
             \left \lceil \frac{1}{2(\alpha - 1)}\right \rceil + 1   & \text{if } d = x(G) \text{ and } \alpha > 1
       \end{cases} \qquad
\beta = \begin{cases}
             d - 1  & \text{if } d = X(G) \text{ and } \alpha = 1 \\
             \left \lceil \frac{1}{2(\alpha - 1)} \right \rceil  & \text{if } d = X(G) \text{ and } \alpha > 1
       \end{cases}
\end{equation*} 
\end{lemma}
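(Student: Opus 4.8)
The plan is to run the Aigner--Fromme argument that one cop guards a geodesic in a planar graph, but with a \emph{platoon} of $2\beta+1$ cops: one ``leader'' $C_0$ maintains the classical guard of $P$ (which already gives part~(a) of \cref{def: crossing-guard}), while $2\beta$ ``escorts'' $C_{-\beta},\dots,C_{-1},C_1,\dots,C_\beta$ march along $P$ in lock-step with the leader so that, whenever the robber is about to jump across an edge of $P$ at a crossing, some member of the platoon is close enough to that edge to then force its distance to the robber down to $d':=\lceil\alpha d\rceil-1$.

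\textbf{The leader and part (a).} Let $C_0$ play the classical geodesic-guarding strategy on $P$ (see \cite{AignerFromme}, or the bisector description in \cite{bonato_book}): with $P=p_0\cdots p_\ell$ and $r$ the robber's current vertex, $C_0$ keeps itself at the vertex $p_c$ with $c$ the integer closest to $\tfrac12\bigl(\ell+d_H(p_0,r)-d_H(p_\ell,r)\bigr)$. As each of those two distances changes by at most $1$ per robber move, $c$ changes by at most $1$, so the move is always legal; after finitely many warm-up rounds $C_0$ maintains $d_H(C_0,r)\le d_H(r,V(P))+O(1)$ and in particular catches the robber the instant it steps onto $V(P)$, which is (a). If $P$ is self-crossing this causes no further trouble: a crossing $\{(u,v),(w,x)\}$ with \emph{both} edges on $P$ can never be exploited by the robber for (b), since that would require the robber to sit at $w\in V(P)$, which (a) forbids.

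\textbf{The escorts and part (b).} Keep each escort $C_j$ at $p_{c+j}$ (clamped to $[0,\ell]$); since $c$ drifts by at most $1$ per round, the platoon simply translates along $P$ together, so the formation is maintainable once assembled. Now suppose the robber, sitting at $w$, crosses an edge $(u,v)=p_ip_{i+1}$ of $P$ by moving along the crossing edge $(w,x)$ to $x$. Since $G$ is kite-augmented the crossing carries shortest uncrossed kite-paths $\kappa$ between consecutive endpoints: if $d=X(G)$ each of the four pairs $\{p_i,w\},\{p_i,x\},\{p_{i+1},w\},\{p_{i+1},x\}$ is joined by such a $\kappa$ of length $\le d$, so $w$ and $x$ both lie within distance $d$ of $\{p_i,p_{i+1}\}\subseteq V(P)$; if $d=x(G)$ at least one of those pairs is joined by a $\kappa$ of length $\le d$, and then (traversing $(w,x)$ if needed) $w$ and $x$ still lie within distance $d+1$ of $V(P)$, one of them within $d$ of $\{p_i,p_{i+1}\}$. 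Combined with the leader's invariant, these bounds confine the index $i$ to within a bounded multiple of $d$ of $c$, and the value of $\beta$ in each case is calibrated so that some cop of the platoon is then positioned to reach, along a short kite-path $\kappa$, into the neighbourhood of $x$ and force the distance to the robber down to $d'$. When $\alpha>1$ the leader alone already has enough slack once $d$ is large, and otherwise the residual gap is closed by a short pursuit argument in which the chasing cop cuts corners along the geodesic $P$ while the robber (having just crossed $P$) cannot gain ground fast enough; a brief amortised estimate yields the $\lceil 1/(2(\alpha-1))\rceil$ term. This establishes (b).

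\textbf{Main obstacle.} The scheme is a routine adaptation of Aigner--Fromme, so the real work is the constant bookkeeping: (i) choosing the escort offsets so the platoon invariant is maintainable against \emph{every} robber move, not just the threatening ones, and (ii) pushing the distance inequalities through all four cases to land on exactly $\beta\in\{d-1,\ d+1,\ \lceil 1/(2(\alpha-1))\rceil,\ \lceil 1/(2(\alpha-1))\rceil+1\}$. The two $\alpha=1$ cases are the tightest: there $d'=d-1$ is one below the naive estimate, so one must show a pursuing cop reaches strictly \emph{inside} a kite-path rather than only at its endpoint --- this single saved unit is exactly why $\beta=d-1$ for the $X(G)$ guarantee, and $\beta=d+1$ for the weaker $x(G)$ guarantee (which additionally allows $w$ to sit one step farther from $P$). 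I expect (ii), and in particular verifying maintainability in the $\alpha=1$, $d=x(G)$ case, to be the fussiest point.
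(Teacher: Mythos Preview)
Your overall architecture---one Aigner--Fromme leader $C_0$ plus $2\beta$ escorts on $P$---is exactly what the paper does, and for $\alpha=1$ your contiguous placement $C_j$ at $p_{c+j}$ is also the paper's choice. The case analysis you sketch for part~(b) is in the right spirit, although you are missing one ingredient in the $d=x(G)$ case: the paper first argues that $|\kappa_{wu}|>d$ and $|\kappa_{wv}|>d$ (otherwise the robber would already have been within range before reaching $w$), and only then concludes that one of $\kappa_{ux},\kappa_{vx}$ has length $\le d$; without this step you do not know whether it is $w$ or $x$ that is the ``close'' endpoint, and the argument stalls.

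The real gap is your treatment of $\alpha>1$. You keep the escorts contiguous at $p_{c\pm 1},\dots,p_{c\pm\beta}$ and then appeal to a ``short pursuit argument'' to get the $\lceil 1/(2(\alpha-1))\rceil$ term. This does not work: with $\beta$ a constant independent of $d$, your platoon's range along $P$ is only $2\beta+1$ vertices, while the crossing vertex $p_i$ can sit as far as $d-1$ (or $d+1$) from $C_0$. No amount of post-crossing pursuit rescues this, because once the robber has jumped to $x$ it is free to walk away from $P$ on the far side, and your cops---all sitting on $P$---cannot close the gap without abandoning the path (which would wreck the larger inductive strategy that relies on $P$ staying guarded). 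Capture must be immediate, in the very next cop move.

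The paper's fix is to \emph{space} the escorts: for $\alpha>1$ place a cop at distance exactly $m\cdot 2\lceil(\alpha-1)d\rceil$ from $C_0$ for each $1\le m\le\beta$. Then the range of the platoon extends to roughly $2\beta\lceil(\alpha-1)d\rceil\ge d$ on each side of $C_0$, and every vertex in that range is within $\lceil(\alpha-1)d\rceil$ of some cop. So when the robber lands at $x$ within distance $d$ of a vertex in the range, the nearest cop is within $d+\lceil(\alpha-1)d\rceil=\lceil\alpha d\rceil$, and one cop move brings it to $\lceil\alpha d\rceil-1$. This spacing is precisely what produces $\beta=\lceil 1/(2(\alpha-1))\rceil$; there is no amortisation or pursuit.
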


\begin{proof}
Since the robber is restricted to $H$, and $P$ is a shortest path in $H$, all vertices of $P$ are guardable by a single cop (after a finite number of initial rounds). \edit{In other words, the robber can be captured by the cop whenever he lands on any vertex of $P$.} Moreover, this cop can guard all vertices of $P$ by remaining within the path $P$. (This is a very well-known and widely used result in Cops and Robbers; see \cite{AignerFromme, bonato_book}.) Let us denote this cop of $\mathcal{U}$ by $\mathcal{U}^*$. To ensure that $P$ is crossing-guardable, we set up the remaining $2\beta$ cops of $\mathcal{U}$, within a finite number of rounds, so that the following configuration is maintained at all times thereafter: 

\medskip
\noindent \textbf{Case $\alpha = 1$:} There is a cop at every vertex within distance $\beta$ of $\mathcal{U}^*$.

\noindent \textbf{Case $\alpha > 1$:} There is a cop at every vertex at distance exactly $m \cdot 2\ceil{\alpha d - d}$ from $\mathcal{U}^*$, for every $1 \leq m \leq \beta$.

\medskip
After the cops of $\mathcal{U}$ have been configured this way, at any given point in time, let the \textit{range of $\mathcal{U}$} be the minimal-length subpath of $P$ that contains all the cops of $\mathcal{U}$. 

\begin{observation}\label{obs: range of P}
At any point in time, if the robber is within distance $d$ of some vertex in the range of $\mathcal{U}$, then there is some cop of $\mathcal{U}$ within distance $\ceil{\alpha d}$ of the robber. 
\end{observation}

\begin{proof}
The observation is trivial when $\alpha = 1$ since all vertices in the range of $\mathcal{U}$ are occupied by a cop. When $\alpha > 1$, there is at least one cop of $\mathcal{U}$ in every subpath of length $2\ceil{\alpha d - d}$ in the range of $\mathcal{U}$. Hence, there exists a cop of $\mathcal{U}$ within distance $\ceil{\alpha d - d} + d = \ceil{\alpha d}$ from the robber. 
\end{proof}

\begin{observation}\label{obs: range and d}
    When $d = x(G)$ and $d = X(G)$, all vertices within distance $d+1$ and $d-1$, respectively, of $\mathcal{U}^*$ are in the range of $\mathcal{U}$.
\end{observation}

\begin{proof}
The observation is trivial when $\alpha = 1$, given that $\beta$ is precisely equal to $d+1$ or $d-1$ when $d = x(G)$ and $d = X(G)$ respectively. When $\alpha > 1$, there are $\beta \geq \left \lceil \frac{1}{2(\alpha - 1)}\right \rceil$ cops placed at regular intervals of length $2\ceil{\alpha d -d}$. Since $\left \lceil \frac{1}{2(\alpha - 1)}\right \rceil \cdot 2\ceil{\alpha d -d} \geq d$, the observation holds.  
\end{proof}

To complete the proof of \cref{lem: shortest path guarding}, we analyze the cases $d = X(G)$ and $d = x(G)$ separately. All vertices of $P$ are already guarded by $\mathcal{U}^*$, so we only consider edges of $P$ that are involved in crossings. Consider any crossing $\{(u,v), (w,x)\}$, where  $(u,v) \in E(P)$ and $w \in V(H)$, and assume that the robber has played his turn and landed on $w$. In both cases, we will show that the robber cannot move to $x$ without being captured by a cop of $\mathcal{U}$.

\medskip
\textbf{Case $d = x(G)$:} We first show that $|\kappa_{wu}| > d$ and $|\kappa_{wv}| > d$. Assume for contradiction that $|\kappa_{wu}| \leq d$ (the argument is symmetric for $|\kappa_{wv}| \leq d$). Consider the position of $\mathcal{U}^*$ just before the robber moved to $w$. At this point of time, the robber is within distance $d+1$ from $u$. Since $\mathcal{U}^*$ guards all vertices of $P$, it must also be within distance $d+1$ of $u$. By \cref{obs: range and d}, this implies that $u$ was in the range of $\mathcal{U}$ just before the robber moved to $w$. By \cref{obs: range of P}, there is a cop of $\mathcal{U}$ that is within distance $\ceil{\alpha d}$ from $w$. Therefore, after the robber lands on $w$, then this cop can capture the robber from distance $\ceil{\alpha d} - 1$.  

Since $|\kappa_{wu}| > d$ and $|\kappa_{wv}| > d$, at least one of $\kappa_{ux}$ and $\kappa_{vx}$ has length at most $d$; up to symmetry, assume that $|\kappa_{ux}| \leq d$. This implies that $|\kappa_{wu}| \leq d+1$. So, after the robber lands on $w$, the cop $\mathcal{U}^*$ moves so that it comes within distance $d+1$ of $u$, by which $u$ will now belong to the range of $\mathcal{U}$ (\cref{obs: range and d}). By \cref{obs: range of P}, if the robber moves to $x$, the distance to $u$ becomes at most $d$, and the cop closest to the robber can capture him within distance $\ceil{\alpha d}-1$. 

\medskip
\textbf{Case $d = X(G)$:} After the robber lands on $w$, then by virtue of $|\kappa_{wu}| \leq d$ and $|\kappa_{wv}| \leq d$, the cop $\mathcal{U}^*$ moves such that at least one of $u$ or $v$ is within distance $d-1$ of $\mathcal{U}^*$. By \cref{obs: range and d}, at least one of $u$ and $v$ in the range of $\mathcal{U}$. In the next step, if the robber crosses the edge $(u,v)$ and moves to $x$, then the cop closest to the robber can capture him within distance $\ceil{\alpha d}-1$ (\cref{obs: range of P}).
\end{proof}

\subsection{Cop-Strategy for Kite-Augmented 1-Planar Graphs.}

We now use \cref{lem: shortest path guarding} on guarding shortest paths to prove \cref{thm: cop number 1-planar weighted kite edge}. For this, we use ideas from \cite{bonato_book} for the proof that planar graphs have cop-number at most 3, and from \cite{tcs_1planar} for the modifications needed to handle crossings in 1-planar graphs. We first give an overview of the proof, and then provide a formal description in detail in \cref{lemma: 3 times U_d}. 

We consider three sets of $2\beta + 1$ cops, and proceed in iterations, where every iteration begins with a path or a cycle being crossing-guarded at distance $\ceil{\alpha d}-1$. (For the rest of this discussion, we use the term ``guard'' as a shortcut to saying ``crossing-guard at distance $\ceil{\alpha d}-1$''.) We ensure that the paths and cycles are non-self-crossing, so that they trace simple paths and simple cycles on the sphere. These paths and cycles serve as the frontiers between the guarded subgraph and the unguarded subgraph of $G$, and these are the only subgraphs that are actively guarded at any point in time. A path requires only one set of cops to guard, and a cycle requires two sets of cops; hence at least one of three sets of cops is free at the beginning of every iteration.

\begin{figure}
     \centering
     \begin{subfigure}[b]{0.45\textwidth}
         \centering
         \includegraphics[scale = 0.75, page = 1]{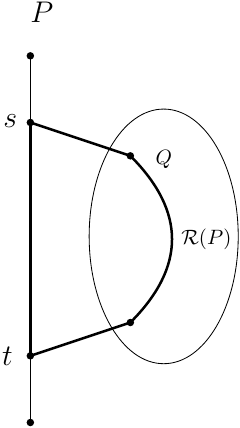}
     \end{subfigure}
     \hfill
     \begin{subfigure}[b]{0.45\textwidth}
         \centering
         \includegraphics[scale = 0.75, page = 2]{Guarding_configurations.pdf}
     \end{subfigure}
    \caption{The robber territory, denoted by $\mathcal{R}(\cdot)$, is progressively cut down across iterations, where each iteration is marked by either a path $P$ or a cycle $C$ being guarded. For each successive iteration, the new path or cycle to be guarded is determined by choosing a shortest $(s,t)$-path $Q$ to be guarded in the $(s,t)$-subgraph (the region enclosed within bold edges).}
    \label{fig: guarding configurations}
\end{figure}

Every iteration begins by choosing a shortest path to be guarded by the free set of cops. For this, two special vertices $s$ and $t$ are identified that have a neighbour in the robber territory---the maximal subgraph of $G$ within which the robber can move without getting captured. Then, we consider the $(s,t)$-subgraph formed by the union of the robber territory together with the set of all edges connecting $\{s,t\}$ and the robber territory. Now, we guard a shortest non-self-crossing $(s,t)$-path in the $(s,t)$-subgraph identified above. This path may be the only path actively guarded, or may be combined with a previously guarded path to form a new cycle to be actively guarded (\cref{fig: guarding configurations}). Once the path has been guarded, the current iteration ends, at least one set of cops is freed, and the next iteration begins. At the end of every iteration, we shall maintain the invariant that the shortest $(s,t)$-path chosen to be guarded contains at least one vertex of the robber territory. Therefore, every iteration ends with at least one more vertex of the graph being guarded, and after a finite number of iterations, the robber is captured. 

\medskip
We now give a detailed and formal treatment of the above overview in \cref{lemma: 3 times U_d}.

\begin{lemma}\label{lemma: 3 times U_d}
Let $G$ be a kite-augmented 1-plane graph. For $d \in \{x(G), X(G)\}$, $\alpha \geq 1$, and any $(s,t)$-subgraph $H$ of $G$, let $|\mathcal{U}|$ be the smallest integer such that any shortest $(s,t)$-path $P \subseteq H$ is crossing-guardable at distance $\ceil{\alpha d}-1$ by a set $\mathcal{U}$ of cops. Then $c_{\ceil{\alpha d}-1}(G) \leq 3|\mathcal{U}|$.
\end{lemma}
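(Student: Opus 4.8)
The plan is to adapt the classical Aigner--Fromme strategy for planar graphs to the crossing setting, using \cref{lem: shortest path guarding} as the replacement for the single-cop shortest-path-guarding lemma. Three sets of cops, each of size $|\mathcal{U}|$, are maintained; at every stage at most two of these sets are committed to guarding the current frontier (one path needs one set, a cycle formed from two shortest paths needs two), so at least one set is always free to extend the guarded region. I would define the \emph{robber territory} $\mathcal{R}$ as the maximal connected subgraph of $G$ not crossing-guarded by the committed cops within which the robber can move without being captured; the goal is to show $\mathcal{R}$ strictly shrinks each iteration. Since $G$ is finite, finitely many iterations then suffice, giving $c_{\ceil{\alpha d}-1}(G)\le 3|\mathcal{U}|$.

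The core of each iteration: given the currently guarded frontier (a non-self-crossing path $P_0$, or a non-self-crossing cycle $C_0$ composed of two shortest paths sharing endpoints $s_0,t_0$), pick two vertices $s,t$ on the frontier that each have a neighbour in $\mathcal{R}$ — in the path case these are the two endpoints, in the cycle case they are chosen so that one of the two arcs of $C_0$ together with $\mathcal{R}$ bounds the relevant region — and form the $(s,t)$-subgraph $H = \mathcal{R} \cup \{\text{edges joining } \{s,t\} \text{ to } \mathcal{R}\} \cup \{s,t\}$ (one must check this really is an $(s,t)$-subgraph in the sense of \cref{def: st subgraph}: since $\mathcal{R}$ is what remains after removing a crossing-guarded region, any path $\kappa_{ab}$ of uncrossed edges between two vertices of $H$ other than $\{s,t\}$ lies in $\mathcal{R}$, hence in $H$). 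Apply \cref{lem: shortest paths do not cross} to get a non-self-crossing shortest $(s,t)$-path $Q$ in $H$, and use the free set of $|\mathcal{U}|$ cops and \cref{lem: shortest path guarding} to make $Q$ crossing-guarded at distance $\ceil{\alpha d}-1$. Because $Q$ is non-self-crossing it traces a simple curve; together with (part of) the old frontier it splits the sphere, and the side not containing $Q$'s interior $\mathcal{R}$-vertices is released — the new frontier is either $Q$ alone or $Q$ glued to an arc of $C_0$ to form a new non-self-crossing cycle.

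The two points requiring care — and the ones I expect to be the main obstacle — are: (i) arguing that once a cop set guards the frontier, the robber genuinely cannot escape the shrunken region, which is exactly where \cref{def: crossing-guard} is used: the robber cannot land on a frontier vertex (clause (a)) and cannot slip across a frontier edge via a crossing (clause (b)); and (ii) verifying the invariant that $Q$ contains at least one vertex of $\mathcal{R}$, so $\mathcal{R}$ strictly decreases. For (ii), since $s$ and $t$ both have neighbours in $\mathcal{R}$, the shortest $(s,t)$-path $Q$ in $H = \mathcal{R}\cup\cdots$ must pass through $\mathcal{R}$ unless $s,t$ are adjacent via an edge of $H$ and that edge is shortest; the definition of $(s,t)$-subgraph explicitly exempts the pair $\{s,t\}$ from forcing $\kappa_{st}\subseteq H$, and one chooses $Q$ to be a shortest path using at least one $\mathcal{R}$-vertex (this is why the $(s,t)$-path, not the edge $(s,t)$, is guarded). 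There is also a transient issue: \cref{lem: shortest path guarding} only guarantees the guarding configuration is reached after finitely many rounds, and during those rounds the robber might try to cross the frontier; the standard fix is that the previously committed cop sets still guard the old frontier throughout, so the robber stays confined to $\mathcal{R}$ until the new frontier is fully set up, at which point $\mathcal{R}$ is updated. I would also handle the base case: start with $s=t$ an arbitrary vertex, so the initial frontier is a trivial path and $\mathcal{R}=G-s$ (or the whole of $G$ with the single-vertex path guarded), and the first real iteration proceeds as above.
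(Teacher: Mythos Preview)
Your proposal has the right skeleton (three cop sets, robber territory, iterated shortest-path guarding via \cref{lem: shortest paths do not cross} and \cref{lem: shortest path guarding}, strict progress), but it misses the one technical device that distinguishes the 1-plane case from the planar case: the distinction between \emph{crossing-adjacent} and \emph{non-crossing-adjacent} vertices of the frontier.

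Here is where it bites. You take $H = \mathcal{R}\cup\{\text{all edges from }\{s,t\}\text{ into }\mathcal{R}\}\cup\{s,t\}$ and find a non-self-crossing shortest $(s,t)$-path $Q$ in $H$. But the first or last edge of $Q$ (the edge leaving $s$ or $t$ into $\mathcal{R}$) may itself cross an edge of the old frontier $P_0$. Then $Q\cup P_0(s\dots t)$ is \emph{not} a simple closed curve on the sphere, and your sentence ``together with (part of) the old frontier it splits the sphere'' fails. The paper avoids this by restricting to the edge set $S_{P_0}(v)$ of edges from $v$ into $\mathcal{R}$ that \emph{do not cross} $P_0$, and by choosing $s,t$ to be the first and last vertices of $P_0$ that are \emph{non-crossing} adjacent to $\mathcal{R}$---not the endpoints of $P_0$, as you propose. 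A short observation (if $v$ is only crossing-adjacent via an edge crossed by $(x,y)\in E(P_0)$, then $x$ and $y$ are automatically non-crossing adjacent through the uncrossed $\kappa$-paths) ensures such $s,t$ exist whenever anything in $P_0$ is adjacent to $\mathcal{R}$.

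The same device is needed to free a cop set. Your claim ``the side not containing $Q$'s interior $\mathcal{R}$-vertices is released'' hides a real step: you must argue the robber cannot reach a vertex of $P_0\setminus P_0(s\dots t)$. But a vertex there might be adjacent to $\mathcal{R}$ via a crossed edge. The paper's choice of $s,t$ as the \emph{extreme} non-crossing-adjacent vertices, combined with the observation above, forces any such adjacency to cross an edge of $P_0(s\dots t)$ (or of $P_2$, in the cycle case), which is still being guarded---so the robber is caught if he tries. Without this, you cannot justify releasing the old cop set, and the iteration stalls. The paper carries this out through a five-case analysis ($(P1)$, $(P2)$, $(C1)$, $(C2)$, $(C3)$) keyed on how many vertices of each guarded path are non-crossing adjacent to $\mathcal{R}$; your two-case sketch (path/cycle) is too coarse to capture this.
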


\begin{proof}
Our main strategy is to proceed in iterations, where every iteration begins with a new path or cycle being crossing-guarded at distance $\ceil{\alpha d}-1$ by one or two sets of $|\mathcal{U}|$ cops, respectively. For the rest of this discussion, we use the term ``guard'' as a shortcut to saying ``crossing-guard at distance $\ceil{\alpha d}-1$''. For every subgraph (path or cycle) of $G$ that is guarded, there is a corresponding robber territory, which is defined below. 

\begin{definition}[Robber territory]\label{def: robber territory}
    Let $H$ be a subgraph of $G$ that is guarded. Then the \emph{robber territory of $H$}, denoted $\mathcal{R}(H)$, is the maximal subgraph of $G$ containing the robber such that the robber can visit any vertex of $\mathcal{R}(H)$ using any edge of $\mathcal{R}(H)$ without getting captured. 
\end{definition}

As $H$ is crossing-guarded, no edge of $\mathcal{R}(H)$ crosses an edge of $H$. Given that we will repeatedly use this fact, we make it explicit through \cref{obs: robber territory}.

\begin{observation}\label{obs: robber territory}
For any guarded subgraph $H$, no edge of $\mathcal{R}(H)$ crosses an edge of $H$.
\end{observation}

The cycles and paths that are chosen to be guarded in each iteration are constructed by first choosing two special vertices $s$ and $t$, with at least one of them belonging to the current path or cycle being guarded, and then finding a shortest $(s,t)$-path through the robber territory. In order to maintain the invariant that all paths and cycles are non-self-crossing, $s$ and $t$ must be chosen such that there is (loosely speaking) an uncrossed edge connecting them to some vertex of the robber territory. This idea is made more formal through \cref{def: non-crossing adjacency} (both the definition and notation are adapted from \cite{tcs_1planar}). For simplicity, we shall henceforth regard every path $\kappa_{uv}$, connecting two consecutive endpoints $u,v$ of a crossing, as an uncrossed edge $(u,v)$ with weight $|\kappa_{uv}|$. All other edges of $G$ have weight 1.  

\begin{definition}[Crossing and non-crossing adjacency to $\mathcal{R}(H)$]\label{def: non-crossing adjacency}
Let $H$ be a guarded subgraph and $v \in V(H)$ be a vertex adjacent to a vertex in $\mathcal{R}(H)$. Let $S_H(v)$ be the set of all edges incident to $v$ and some vertex of $\mathcal{R}(H)$ that do not cross an edge of $H$. If $S_H(v)$ is non-empty, then we say that $v$ is non-crossing adjacent to $\mathcal{R}(H)$; else we say that $v$ is crossing adjacent to $\mathcal{R}(H)$.
\end{definition}

In \cref{obs: adjacency}, we show that one can always find a vertex that is non-crossing adjacent to the robber territory. Indeed, if $v \in V(H)$ is crossing adjacent to $\mathcal{R}(H)$, then there is an edge of $E(H)$ such that both its endpoints are non-crossing adjacent to $\mathcal{R}(H)$.

\begin{observation}\label{obs: adjacency}
 Let $H$ be a guarded subgraph of $G$. If $v \in V(H)$ is crossing adjacent to a vertex $w \in \mathcal{R}(H)$ so that $\{(v,w), (x,y)\}$ is a crossing in $G$ with $(x,y) \in E(H)$, then $x$ and $y$ are non-crossing adjacent to $\mathcal{R}(H)$.
\end{observation}

\begin{proof}
If $(v,w)$ is crossed by an edge $(x,y) \in E(H)$, then $(x,w)$ and $(y,w)$ are uncrossed edges (with weights $|\kappa_{xw}|$ and $|\kappa_{yw}|$ respectively), so $x$ and $y$ are non-crossing adjacent to $\mathcal{R}(H)$. 
\end{proof}

\edit{
\cref{obs: st subgraph weighted} will be useful for us later to show that some graphs that we construct in the proof are $(s,t)$-subgraphs.  

\begin{observation}\label{obs: st subgraph weighted}
  Let $H$ be a guarded subgraph of $G$. For any pair of vertices $a,b \in \mathcal{R}(H)$, if $\kappa_{ab} \in E(G)$, then $\kappa_{ab} \in E(\mathcal{R}(H))$.  
\end{observation}

\begin{proof}
Since $\kappa_{ab}$ is an uncrossed edge of $G$, and the robber territory is a maximal subgraph (\cref{def: robber territory}), $\kappa_{ab} \in E(G)$ implies that $\kappa_{ab} \in E(\mathcal{R}(H))$.
\end{proof}
}

Let $\mathcal{U}_1$, $\mathcal{U}_2$ and $\mathcal{U}_3$ be three sets of $|\mathcal{U}|$ cops. The first iteration begins by selecting two arbitrary vertices $s$ and $t$, and letting $\mathcal{U}_1$ guard a shortest non-self-crossing $(s,t)$-path $P \subseteq G$ (\cref{lem: shortest paths do not cross}). This marks the end of the first iteration. Inductively, suppose that at the end of some iteration, a path $P$ or a cycle $C$ has been guarded. For the next iteration, the choice of whether to guard a path or cycle is made by looking at how many vertices of $P$ or $C$ are non-crossing adjacent to the robber territory. Therefore, we consider different cases. (The case analysis here is very similar to that in \cite{tcs_1planar} for 1-planar graphs, which is itself based upon the case analysis used in \cite{bonato_book} for planar graphs.) \edit{For each case, we will implicitly rely on \cref{obs: st subgraph weighted} to conclude that certain graphs are $(s,t)$-subgraphs, and \cref{lem: shortest paths do not cross} to find non-self-crossing shortest $(s,t)$-paths in these $(s,t)$-subgraphs.}

\subparagraph{Guarding a path $P$.} Let path $P$ be guarded by $\mathcal{U}_1$. We have two cases depending upon how many vertices of $P$ are non-crossing adjacent to $\mathcal{R}(P)$.

\medskip
\textbf{Case $(P1)$:} If a single vertex $s \in P$ is non-crossing adjacent to $\mathcal{R}(P)$, then pick any vertex $t$ in $\mathcal{R}(P)$, and use $\mathcal{U}_2$ to guard a shortest non-self-crossing $(s,t)$-path in the $(s,t)$-subgraph $\mathcal{R}(P) \cup S_P(s)$. By \cref{obs: adjacency}, vertex $s$ must be the only vertex of $P$ adjacent (crossing or non-crossing) to $\mathcal{R}(P)$. Hence, $\mathcal{U}_1$ can be freed.

\medskip
\textbf{Case $(P2)$:} Suppose that more than one vertex of $P$ is non-crossing adjacent to $\mathcal{R}(P)$. Then, enumerate the vertices of $P$ based on their distance from one of the end vertices of $P$. Let $s$ and $t$ be the first and last vertices of $P$ that are non-crossing adjacent to $\mathcal{R}(P)$. Use $\mathcal{U}_2$ to guard a shortest non-self-crossing $(s,t)$-path $Q$ in the $(s,t)$-subgraph $\mathcal{R}(P) \cup S_P(s) \cup S_P(t)$. Since no edge of $\mathcal{R}(P)$ crosses an edge of $P$ (\cref{obs: robber territory}), and no edge of $S_P(s) \cup S_P(t)$ crosses $P$ (\cref{def: non-crossing adjacency}), the cycle $C = P(s \dots t) \cup Q$ is non-self-crossing. We let $\mathcal{U}_3$ guard $P(s \dots t)$. If the robber is inside $C$, then clearly $\mathcal{U}_1$ can be freed. If the robber is outside $C$, then the robber cannot visit any vertex of $P \setminus P(s \dots t)$; this is because all non-crossing adjacent vertices of $P$ belong to $P(s \dots t)$, and there can be no crossing-adjacent vertex in $P \setminus P(s \dots t)$ (\cref{obs: adjacency}). Therefore, $\mathcal{U}_1$ can be freed.

\subparagraph{Guarding a cycle $C$.} Let $C = P_1 \cup P_2$ be a non-self-crossing cycle of $G$ guarded by two sets of cops, say $\mathcal{U}_1$ and $\mathcal{U}_2$. Since $C$ is non-self-crossing, we can talk of the sides of $C$, and assume, without loss of generality, that the robber is on the inside of $C$. As for paths, we have different cases depending upon the number of vertices of $C$ that are non-crossing adjacent to $\mathcal{R}(C)$.

\medskip
\textbf{Case $(C1)$:} When there is a single vertex of $C$ that is non-crossing adjacent to $\mathcal{R}(C)$, we do exactly as in Case $(P1)$.

\medskip
\textbf{Case $(C2)$:} Suppose that a single vertex $s \in P_1$ and a single vertex of $t \in P_2$ are non-crossing adjacent to $\mathcal{R}(C)$. Then use $\mathcal{U}_3$ to guard a shortest non-self-crossing $(s,t)$-path in the $(s,t)$-subgraph $\mathcal{R}(C) \cup S_C(s) \cup S_C(t)$. By the case assumption, $s$ and $t$ must be non-adjacent in $C$. Hence, by \cref{obs: adjacency}, these are the only vertices (crossing or non-crossing) adjacent to $\mathcal{R}(C)$. Therefore, the cops $\mathcal{U}_1$ and $\mathcal{U}_2$ can be freed.

\medskip
\textbf{Case $(C3)$:} Up to symmetry, assume that more than one vertex of $P_1$ is non-crossing adjacent to $\mathcal{R}(C)$. Enumerate the vertices of $P_1$ based on their distance from one of the end vertices of $P_1$. Let $s$ and $t$ be the first and last vertices that are non-crossing adjacent to $\mathcal{R}(C)$. Let $P_3$ be a shortest non-self-crossing $(s,t)$-path in the $(s,t)$-subgraph $\mathcal{R}(C) \cup S_C(s) \cup S_C(t)$, and use the cops of $\mathcal{U}_3$ to guard $P_3$. 

Let $x$ and $y$ be the first and last vertices of $P_1$. Let $P_2^+ := P_1(x \dots s) \cup P_2 \cup P_1(t \dots y)$ and $P_1^- := P_1(s \dots t)$. Since no edge of $\mathcal{R}(C)$ crosses an edge of $C$ (\cref{obs: robber territory}) and no edge of $S_C(s) \cup S_C(t)$ crosses $C$ (\cref{def: non-crossing adjacency}), $C_L = P_1^- \cup P_3$ and $C_R = P_3 \cup P_2^+$ are both non-self-crossing cycles. Therefore, the robber must be in the interior of the regions enclosed by one of the two cycles. If the robber is in the interior of $C_L$, then $\mathcal{U}_2$ can be freed because $C_L$ is guarded by $\mathcal{U}_1$ and $\mathcal{U}_3$. Suppose that the robber is in the interior of $C_R$. Since all non-crossing adjacent vertices of $P_1$ belong to $P_1(s \dots t)$, any vertex of $P_1(x \dots s) \cup P_1(t \dots y)$ that is adjacent to $\mathcal{R}(C_R)$ must be crossing-adjacent, and in particular, cross an edge of $P_2$. This implies that $P_2^+$ is guarded by $\mathcal{U}_2$. Therefore, $C_R$ is guarded by $\mathcal{U}_3$ and $\mathcal{U}_2$, and $\mathcal{U}_1$ can be freed.

\medskip
For all the cases above, we show that there is a progression in the number of guarded vertices of $G$.

\begin{claim}\label{obs: increase in size of guarded territory}
At the end of each iteration, the total number of vertices guarded increases by at least one.
\end{claim}

\begin{claimproof}
It is sufficient to show that in all cases above, at least one vertex of the robber territory belongs to the $(s,t)$-path chosen. For Cases $(P1)$ and $(C1)$, this is easy to see because $t$ belongs to the robber territory. For the remaining cases, the $(s,t)$-subgraphs are constructed such that for $H \in \{P,C\}$, the set of edges in $S_H(s)$ and $S_H(t)$ do not include the edge $(s,t)$ (\cref{def: non-crossing adjacency}). Therefore, the $(s,t)$-paths in these cases include some vertex of the robber territory. 
\end{claimproof}

From \cref{obs: increase in size of guarded territory}, we can conclude that after a finite number of iterations, the robber is captured. This proves \cref{lemma: 3 times U_d}. 
\end{proof}

\cref{lem: shortest path guarding,lemma: 3 times U_d} together establish \cref{thm: cop number 1-planar weighted kite edge}. 

\subsection{Implications of \texorpdfstring{\cref{thm: cop number 1-planar weighted kite edge}}{Theorem}.}

In this section, we discuss the implications of \cref{thm: cop number 1-planar weighted kite edge} for 1-plane graphs, $k$-plane graphs and general graphs drawn on the sphere. A summary of the results in this section appears in \cref{table:results}. 
To keep notation compact, we let $c_{f(X)}(G)$ and $c_{f(x)}(G)$ be shortcuts to $c_{f(X(G))}(G)$ and $c_{f(x(G))}(G)$, respectively. 

\subsubsection{1-Plane graphs}

One of the first implications \cref{thm: cop number 1-planar weighted kite edge}  for 1-plane graphs is that $c_{X-1}(G)$ and $c_{x-1}(G)$ are at most linear in $X(G)$ and $x(G)$. 

\begin{corollary}\label{cor: 1-planar}
    If $G$ is a 1-plane graph, then $c_{X - 1}(G) \leq 6X-3$ and $c_{x - 1}(G) \leq 6x+9$.
\end{corollary}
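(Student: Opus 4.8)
The plan is to derive \cref{cor: 1-planar} as a direct consequence of \cref{thm: cop number 1-planar weighted kite edge} combined with \cref{thm: 1-planarisation}, using the trivial 1-planarisation (i.e.\ $k=1$, so $G^{(k)} = G$ and $G^\boxtimes$ is just the kite-augmentation of $G$ itself). The key point is that \cref{thm: cop number 1-planar weighted kite edge} is stated only for \emph{kite-augmented} 1-plane graphs, whereas \cref{cor: 1-planar} must hold for an arbitrary 1-plane graph $G$; the bridge between the two is exactly the inequality $c_{\lceil \alpha d\rceil - 1}(G) \le c_{\lceil \alpha d'\rceil - 1}(G^\boxtimes)$ supplied (in the case $k=1$) by \cref{thm: 1-planarisation}, together with the fact that kite-augmentation does not increase the relevant distance parameters.

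First I would set $k = 1$ in \cref{thm: 1-planarisation}, so that $G^{(1)} = G$ and $G^\boxtimes$ is the kite-augmentation of $G$; note $x(G^\boxtimes) = x(G)$ and $X(G^\boxtimes) = X(G)$ (this is observed in the proof of \cref{thm: 1-planarisation}, since a shortest path between two original vertices has the same length in $G$ and in $G^\boxtimes$). Then I would take $\alpha = 1$. For the $x$-bound, part (a) of \cref{thm: 1-planarisation} with $\alpha = 1$, $k=1$ gives $c_{x(G)+2}(G) \le c_{x(G^\boxtimes)-1}(G^\boxtimes) \le c_{x(G)-1}(G) + 1$; this relates $c_{x-1}$ of $G$ to $c_{x-1}$ of the kite-augmented graph, but what I actually want is the other direction. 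The cleaner route is: apply \cref{thm: cop number 1-planar weighted kite edge} to the kite-augmented graph $G^\boxtimes$ with $d = x(G^\boxtimes) = x(G)$ and $\alpha = 1$, obtaining $c_{x(G)-1}(G^\boxtimes) \le 3(2\beta+1)$ with $\beta = x(G)+1$, i.e.\ $c_{x(G)-1}(G^\boxtimes) \le 3(2x(G)+3) = 6x(G)+9$. Then I would invoke the first inequality of \cref{thm: 1-planarisation}(a) with $k=1,\alpha=1$ in the form $c_{\lceil x(G^\boxtimes)\rceil - 1}(G) \le c_{\lceil x(G^\boxtimes)\rceil - 1}(G^\boxtimes)$ — more precisely, since $G$ retracts appropriately onto itself inside $G^\boxtimes$ (or directly: $G$ is a subgraph and any strategy capturing at distance $x-1$ in $G^\boxtimes$ restricts to one in $G$ because $G^\boxtimes \setminus G$ is a union of subdivided kite-paths that the robber confined to $G$ never uses), we get $c_{x(G)-1}(G) \le c_{x(G)-1}(G^\boxtimes) \le 6x(G)+9$. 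The $X$-bound is identical: take $d = X(G^\boxtimes) = X(G)$, $\alpha = 1$ in \cref{thm: cop number 1-planar weighted kite edge}, so $\beta = X(G)-1$ and $c_{X(G)-1}(G^\boxtimes) \le 3(2(X(G)-1)+1) = 3(2X(G)-1) = 6X(G)-3$, and then transfer back to $G$.

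The one subtlety I would want to nail down carefully is the transfer step $c_{d}(G) \le c_d(G^\boxtimes)$ for $d \in \{x(G)-1, X(G)-1\}$. Invoking \cref{thm: 1-planarisation} directly is slightly awkward because its left-hand inequalities are phrased with shifted distance arguments ($\lceil\alpha(x(G)+2)\rceil$ rather than $\lceil\alpha x(G)\rceil - 1$); the genuinely clean justification is instead via \cref{corollary: 1-planarisation cop-number} with $k=1$ (giving $c_d(G) \le c_d(G^{(1)}) = c_d(G)$, trivial) composed with the retract inequality of \cref{theorem: retract at a distance}: $G^{(1)} = G$ is a retract of $G^\boxtimes$, so $c_d(G) \le c_d(G^\boxtimes)$. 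I expect this retract-transfer bookkeeping — making sure the inequality points the right way and that the distance parameter is literally $x(G)-1$ or $X(G)-1$ on both sides — to be the only place that needs care; the arithmetic $3(2x+3) = 6x+9$ and $3(2X-1) = 6X-3$ is immediate, and all the heavy lifting has already been done in \cref{thm: cop number 1-planar weighted kite edge}.
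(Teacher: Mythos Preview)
Your proposal is correct and follows essentially the same approach as the paper: kite-augment $G$ to obtain $G^\boxtimes$, apply \cref{thm: cop number 1-planar weighted kite edge} with $\alpha=1$ to get $c_{X-1}(G^\boxtimes)\le 6X-3$ and $c_{x-1}(G^\boxtimes)\le 6x+9$, and then use $X(G^\boxtimes)=X(G)$, $x(G^\boxtimes)=x(G)$ together with the retract inequality $c_d(G)\le c_d(G^\boxtimes)$ to conclude. You are actually more explicit than the paper about the transfer step---the paper's proof simply writes ``As $X(G)=X(G^\boxtimes)$ and $x(G)=x(G^\boxtimes)$, we get the stated result''---but the underlying justification (that $G$ is a retract of $G^\boxtimes$, hence \cref{theorem: retract at a distance} applies) is exactly what you identified.
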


\begin{proof}
Consider the graph $G^\boxtimes$ obtained by kite-augmenting $G$. From \cref{thm: cop number 1-planar weighted kite edge}, setting $\alpha = 1$, we have $c_{X-1}(G^\boxtimes) \leq 3\cdot(2(X(G^\boxtimes)-1) + 1) = 6X(G^\boxtimes)-3$ and $c_{x-1}(G^\boxtimes) \leq 3\cdot(2(x(G^\boxtimes)+1) + 1) = 6x(G^\boxtimes)+9$. As $X(G) = X(G^\boxtimes)$ and $x(G) = x(G^\boxtimes)$, we get the stated result.
\end{proof}

\cref{cor: 1-planar} corroborates, improves and generalises all existing results on cop-numbers of 1-planar graphs. For a maximal 1-planar graph, it was shown in \cite{optimal1plane} that $c(G) \leq 3$. In a maximal 1-planar graph, the endpoints of every crossing induce a $K_4$. Setting $X(G) = 1$ in \cref{cor: 1-planar} gives us the same cop-number. In fact, it shows that 3 cops are sufficient for the larger class of full 1-planar graphs: 1-planar graphs where the endpoints of every crossing induce a $K_4$. (We shall give a much simpler proof of this result in \cref{subsec: map graphs}.) Another class of 1-planar graphs that has been studied is 1-planar graphs embeddable without $\times$-crossings (crossings whose endpoints induce a matching). In \cite{tcs_1planar}, it was shown that $c(G) \leq 21$ for all such 1-planar graphs $G$. \cref{cor: 1-planar} gives us a better bound: as $x(G) = 1$ for 1-planar graphs without $\times$-crossings, we get $c(G) \leq 15$.

\subsubsection{\texorpdfstring{$k$}{k}-Plane graphs}

We now look at $k$-plane graphs. In \cref{cor: k-planar}, we look at distance $X+1$ and $x+2$ cop-numbers obtained by setting $\alpha = 1$ in \cref{thm: 1-planarisation}.

\begin{corollary}\label{cor: k-planar}
 If $G$ is a $k$-plane graph, then $c_{X + 1}(G) \leq 6k(X + 1)-3$ and $c_{x + 2}(G) \leq 6k(x + 2) - 3$.
\end{corollary}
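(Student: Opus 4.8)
The plan is to combine \cref{thm: 1-planarisation} with \cref{cor: 1-planar} applied to the $(\Sigma,1)$-planarisation of $G$. Starting from a $k$-plane graph $G$ drawn on the sphere, take $G^{(k)}$ to be a $1$-planarisation of $G$ and let $G^\boxtimes$ be the kite-augmentation of $G^{(k)}$. Since $G^{(k)}$ is itself $1$-plane (in fact $1$-planar), \cref{cor: 1-planar} gives $c_{X-1}(G^\boxtimes) \le 6X(G^\boxtimes) - 3$ and $c_{x-1}(G^\boxtimes) \le 6x(G^\boxtimes) + 9$ — wait, I should instead keep everything in terms of $G$ directly via \cref{thm: 1-planarisation} rather than re-deriving the bound, so let me restructure.

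First I would set $\alpha = 1$ in \cref{thm: 1-planarisation}. Part (b) gives $c_{X(G)+1}(G) \le c_{\ceil{X(G^\boxtimes)}-1}(G^\boxtimes)$, i.e. $c_{X(G)+1}(G) \le c_{X(G^\boxtimes)-1}(G^\boxtimes)$, and part (a) gives $c_{x(G)+2}(G) \le c_{x(G^\boxtimes)-1}(G^\boxtimes)$. Then I would bound the right-hand sides using \cref{thm: cop number 1-planar weighted kite edge} (with $\alpha = 1$) applied to the kite-augmented $1$-plane graph $G^\boxtimes$: this yields $c_{X(G^\boxtimes)-1}(G^\boxtimes) \le 3(2(X(G^\boxtimes)-1)+1) = 6X(G^\boxtimes) - 3$ and $c_{x(G^\boxtimes)-1}(G^\boxtimes) \le 3(2(x(G^\boxtimes)+1)+1) = 6x(G^\boxtimes)+9$. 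The last ingredient is to control $X(G^\boxtimes)$ and $x(G^\boxtimes)$ in terms of $X(G)$ and $x(G)$. Since kite-augmentation does not change distances between original vertices, $X(G^\boxtimes) = X(G^{(k)})$ and $x(G^\boxtimes) = x(G^{(k)})$, and then \cref{lemma: bounds on x and X} gives $X(G^{(k)}) \le k\cdot X(G) + 2\floor{k/2}$ and $x(G^{(k)}) \le k\cdot x(G) + 2(k-1)$.

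Putting these together: for the $X$ bound, $c_{X+1}(G) \le 6X(G^\boxtimes) - 3 \le 6(k\cdot X + 2\floor{k/2}) - 3 \le 6kX + 6k - 3 = 6k(X+1) - 3$, using $2\floor{k/2} \le k$. For the $x$ bound, $c_{x+2}(G) \le 6x(G^\boxtimes) + 9 \le 6(k\cdot x + 2(k-1)) + 9 = 6kx + 12k - 3 = 6k(x+2) - 3$. This gives exactly the claimed inequalities. I do not anticipate a genuine obstacle here; the only point requiring a little care is to make sure the distance parameters on the subdivided graph are invoked with the correct direction of inequality (we need upper bounds on $X(G^\boxtimes), x(G^\boxtimes)$, which is precisely the nontrivial direction of \cref{lemma: bounds on x and X}), and to confirm that $\alpha = 1$ is an admissible choice in \cref{thm: 1-planarisation} and \cref{thm: cop number 1-planar weighted kite edge} (it is, since both require $\alpha \ge 1$). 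One should also note that $G^\boxtimes$ is a legitimate kite-augmented $1$-plane graph so that \cref{thm: cop number 1-planar weighted kite edge} applies.
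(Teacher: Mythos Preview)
Your proposal is correct and follows essentially the same approach as the paper: set $\alpha=1$ in \cref{thm: 1-planarisation}, bound $c_{X-1}(G^\boxtimes)$ and $c_{x-1}(G^\boxtimes)$ via the $1$-plane result (the paper cites \cref{cor: 1-planar}, you invoke \cref{thm: cop number 1-planar weighted kite edge} directly, which is equivalent), and then control $X(G^\boxtimes), x(G^\boxtimes)$ using \cref{lemma: bounds on x and X}. The only cosmetic difference is that the paper absorbs $2\lfloor k/2\rfloor \le k$ into the statement $X(G^\boxtimes)\le k(X(G)+1)$ up front, whereas you carry $2\lfloor k/2\rfloor$ through and bound it at the end.
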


\begin{proof}
From \cref{thm: 1-planarisation}, setting $\alpha=1$, we have $c_{X + 1}(G) \leq c_{X-1}(G^\boxtimes)$ and $c_{x + 2}(G) \leq c_{x-1}(G^\boxtimes)$. From \cref{lemma: bounds on x and X}, we get $X(G^\boxtimes) \leq k(X(G)+1)$ and $x(G^\boxtimes) \leq k(x(G)+2)-2$. From \cref{cor: 1-planar}, we have $c_{X-1}(G^\boxtimes) \leq 6X(G^\boxtimes)-3 \leq 6k(X(G) + 1)-3$ and $c_{x-1}(G^\boxtimes) \leq 6x(G^\boxtimes) + 9 \leq 6k(x + 2) -3$. Hence, the result.
\end{proof}

From \cref{cor: k-planar}, we see that if $G$ is a full $k$-plane graph ($X(G) = 1$), then $c_2(G) \leq 12k-3$. On the other hand, if $G$ is a $k$-plane graph without $\times$-crossings ($x(G) = 1$), then $c_3(G) \leq 18k-3$.

\medskip
A graph $G$ is a \textit{$k$-framed graph} if it has a drawing on the sphere such that its skeleton $\mathrm{sk}(G)$ (the subgraph induced by the set of all uncrossed edges) is simple, biconnected, spans all vertices and each face boundary has at most $k$ edges
\cite{bekos_dframe}. Clearly, a simple $k$-framed graph can have at most $k^2$ edges inside each face. One can draw these graphs such that any two edges cross at most once, hence they are $k^2$-planar graphs.

\begin{corollary}\label{cor: d-framed}
If $G$ is a $k$-framed graph, then $c_{\left \lceil \frac{k+8}{3} \right \rceil}(G) \leq 21$.  
\end{corollary}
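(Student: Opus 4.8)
The plan is to derive \cref{cor: d-framed} from \cref{thm: 1-planarisation} together with \cref{cor: 1-planar}, essentially mimicking the argument used for \cref{cor: k-planar} but exploiting the sharper structural information that a $k$-framed graph carries about the paths between consecutive endpoints of a crossing. First I would fix a $k$-framed drawing of $G$: its skeleton $\mathrm{sk}(G)$ is simple, biconnected, spans all vertices, and every face of $\mathrm{sk}(G)$ is bounded by at most $k$ edges. As noted right before the statement, such a graph can be drawn so that every pair of edges crosses at most once, so $G$ is $k^2$-plane; more importantly, every crossed edge lies inside some face $F$ of $\mathrm{sk}(G)$, and the two endpoints of that edge both lie on the boundary cycle of $F$.

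The crucial observation is a bound on $x(G)$. Take any crossing $\{e_1,e_2\}$ in the drawing. Both crossed edges live inside a common face $F$ of $\mathrm{sk}(G)$ (since an uncrossed edge of a face boundary cannot be crossed), so all four endpoints of the crossing lie on $\partial F$, a cycle of length at most $k$ consisting of uncrossed edges. Pick a pair of consecutive endpoints $u, v$ (one on $e_1$, one on $e_2$). Going around $\partial F$ the shorter way gives a $(u,v)$-path of uncrossed edges of length at most $\lfloor k/2 \rfloor$. Hence $x(G) \le \lfloor k/2 \rfloor$. (One should be slightly careful that the chosen pair of consecutive endpoints is genuinely realizable as a short arc of $\partial F$; since we only need \emph{some} consecutive pair, and the four endpoints split the cycle into four arcs whose lengths sum to at most $k$, at least one arc between a $u\in e_1$ and an adjacent $v\in e_2$ on the cycle has length at most $\lfloor k/2\rfloor$ — this is the one routine point to nail down.)

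Next I would feed this into \cref{thm: 1-planarisation}(a). Let $G^\boxtimes$ be a $(\Sigma,1)$-planarisation-plus-kite-augmentation of $G$, obtained from a $(\Sigma,1)$-planarisation $G^{(k')}$ where $k'$ is the minimum number such that every edge is crossed at most $k'$ times; since $G$ is $k^2$-plane we may take $k' \le k^2$, though for the bound we only need to track $x$. By \cref{lemma: bounds on x and X}, $x(G^\boxtimes) = x(G^{(k')}) \le k' \cdot x(G) + 2(k'-1)$, and since $x(G) \le \lfloor k/2 \rfloor$ this is bounded by $k'(\lfloor k/2\rfloor + 2) - 2$. Applying \cref{cor: 1-planar} to the kite-augmented $1$-plane graph $G^\boxtimes$ gives $c_{x(G^\boxtimes)-1}(G^\boxtimes) \le 6 x(G^\boxtimes) + 9$. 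Meanwhile \cref{thm: 1-planarisation}(a) with $\alpha = 1$ gives $c_{x(G)+2}(G) \le c_{x(G^\boxtimes)-1}(G^\boxtimes)$. Chaining these, $c_{x(G)+2}(G)$ is bounded, and since $x(G)+2 \le \lfloor k/2\rfloor + 2 \le \lceil (k+8)/3\rceil$ for the relevant range, \cref{obs: distance cop number monotonic} lets us conclude $c_{\lceil (k+8)/3\rceil}(G)$ is bounded by the same constant.

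The delicate part — and the main obstacle — is making the \emph{constant} come out to exactly $21$ and the \emph{distance} to exactly $\lceil (k+8)/3\rceil$. The naive chain above yields something like $6k'(\lfloor k/2\rfloor+2)$, which is not a constant in $k$; to get the constant $21$ one must instead work directly with $G$ and apply \cref{thm: cop number 1-planar weighted kite edge} (or \cref{thm: 1-planarisation}) with a suitable $\alpha > 1$ rather than $\alpha = 1$, so that the $\lceil \tfrac{1}{2(\alpha-1)}\rceil$-type term absorbs the growth in $x$. Concretely, I expect one sets $\alpha$ so that $\lceil \alpha(x(G)+2)\rceil = \lceil (k+8)/3 \rceil$ using $x(G)\le\lfloor k/2\rfloor$ — roughly $\alpha$ near $2/3 \cdot (\text{something})$, chosen so $2(\alpha-1)\ge 1$, i.e. $\alpha \ge 3/2$ — in which case \cref{thm: cop number 1-planar weighted kite edge} gives $\beta = \lceil \tfrac{1}{2(\alpha-1)}\rceil + 1 = 2$ (for the $x$ case with $\alpha\ge 3/2$), hence $c_{\lceil\alpha x\rceil-1} \le 3(2\cdot 2 + 1) = 15$ on $G^\boxtimes$, and then \cref{thm: 1-planarisation}(a) converts this to a bound of the form $3(2\beta+1)$ on $G$; the value $21$ corresponds to $\beta = 3$. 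So the real work is an arithmetic optimization: choose $\alpha$ and bound $x(G)$ (via $\lfloor k/2\rfloor$) so that simultaneously $\lceil \alpha(x(G)+2)\rceil \le \lceil (k+8)/3\rceil$ and the resulting $\beta$ is at most $3$, giving $c \le 3(2\cdot 3 + 1) = 21$. I would present the face-boundary bound $x(G)\le\lfloor k/2\rfloor$ as the key lemma, then do this calibration of $\alpha$ carefully, and finish with one application of \cref{thm: 1-planarisation}(a) and \cref{obs: distance cop number monotonic}.
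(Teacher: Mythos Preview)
Your overall plan is right — bound $x(G)$ using the face boundary and then invoke \cref{thm: 1-planarisation}(a) together with \cref{thm: cop number 1-planar weighted kite edge} for a well-chosen $\alpha>1$ — but your key structural lemma is too weak, and with it the arithmetic cannot be made to close.

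You claim only $x(G)\le\lfloor k/2\rfloor$. In fact $x(G)\le k/4$, and this sharper bound is what the paper uses and what is actually needed. The reason is that the two crossed edges lie inside a common face $F$ of $\mathrm{sk}(G)$ and \emph{cross} there; hence their four endpoints must alternate around the boundary cycle $\partial F$ (otherwise the edges could be drawn disjointly inside the disc $F$). So each of the four arcs of $\partial F$ determined by the endpoints joins a vertex of $e_1$ to a vertex of $e_2$, i.e.\ all four arcs connect consecutive pairs. Since the four arc-lengths sum to at most $k$, the shortest one has length at most $k/4$, giving $x(G)\le k/4$.

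With only $x(G)\le k/2$ your calibration fails: to get distance $\lceil(k+8)/3\rceil$ you would need $\alpha(x(G)+2)\le (k+8)/3$, forcing $\alpha\lesssim 2/3$ for large $k$, which is below the admissible range $\alpha\ge 1$. (Your side remark that $\lfloor k/2\rfloor+2\le\lceil(k+8)/3\rceil$ ``for the relevant range'' is false once $k>10$.) With the correct bound $x(G)\le k/4$ the choice $\alpha=4/3$ works cleanly: $\lceil\alpha(x(G)+2)\rceil\le\lceil\tfrac{4}{3}(\tfrac{k}{4}+2)\rceil=\lceil(k+8)/3\rceil$, and in \cref{thm: cop number 1-planar weighted kite edge} with $d=x(G^\boxtimes)$ one gets $\beta=\lceil\tfrac{1}{2(\alpha-1)}\rceil+1=\lceil 3/2\rceil+1=3$, hence $c_{\lceil\alpha x\rceil-1}(G^\boxtimes)\le 3(2\cdot 3+1)=21$. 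One application of \cref{thm: 1-planarisation}(a) then yields the statement directly, with no further use of \cref{obs: distance cop number monotonic} or of \cref{cor: 1-planar}.
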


\begin{proof}
Since every face of $\mathrm{sk}(G)$ has at most $k$ edges, for every crossing of $G$, there is a path of length at most $k/4$ that connects some pair of consecutive endpoints; therefore, $x(G) \leq k/4$. By setting $\alpha = 4/3$ for $k = x(G)$ in \cref{thm: cop number 1-planar weighted kite edge}, we get $c_{\left \lceil \frac{k+8}{3} \right \rceil}(G) = c_{\left \lceil \frac{4}{3}(\frac{k}{4} + 2) \right \rceil}(G) \leq c_{\left \lceil \frac{4}{3}x \right \rceil - 1}(G^\boxtimes) \leq 21$.
\end{proof}

\subsubsection{General graphs}

We now consider general graphs drawn on the sphere. In \cref{cor: arbitrary x and X = 1}, we look at graphs where all crossings are full and graphs without $\times$-crossings.

\begin{corollary}\label{cor: arbitrary x and X = 1}
For any graph $G$, if $X(G) = 1$, then $c_3(G) \leq 9$, and if $x(G) = 1$, then $c_4(G) \leq 21$.
\end{corollary}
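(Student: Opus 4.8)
The plan is to derive \cref{cor: arbitrary x and X = 1} directly from \cref{thm: 1-planarisation} and \cref{thm: cop number 1-planar weighted kite edge}, with the right choices of the scaling parameter $\alpha$. First I would handle the case $X(G) = 1$. Form the graph $G^\boxtimes$ by $(\Sigma,1)$-planarising $G$ (here $\Sigma$ is the sphere) and kite-augmenting the result; say $G^\boxtimes$ is the kite-augmentation of the $1$-plane graph $G^{(k)}$. By \cref{lemma: bounds on x and X} together with the identity $X(G^\boxtimes) = X(G^{(k)})$ noted in the proof of \cref{thm: 1-planarisation}, we get $X(G^\boxtimes) \le k\cdot X(G) + 2\lfloor k/2\rfloor \le 2k$ when $X(G) = 1$. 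The point of choosing a large $\alpha$ in \cref{thm: cop number 1-planar weighted kite edge} is that the bound $3(2\beta+1)$ with $\beta = \lceil \tfrac{1}{2(\alpha-1)}\rceil$ becomes $3(2\cdot 1 + 1) = 9$ as soon as $\alpha \ge 3/2$, independent of $X$. So I would apply \cref{thm: 1-planarisation}(b) with $\alpha = 3/2$: its right-hand inequality reads $c_{\lceil \alpha(X(G)+1)\rceil}(G) \le c_{\lceil \alpha X(G^\boxtimes)\rceil - 1}(G^\boxtimes)$, and with $X(G) = 1$ the left subscript is $\lceil 3/2 \cdot 2\rceil = 3$. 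Meanwhile \cref{thm: cop number 1-planar weighted kite edge} applied to $G^\boxtimes$ with $d = X(G^\boxtimes)$ and $\alpha = 3/2$ gives $c_{\lceil \alpha X(G^\boxtimes)\rceil - 1}(G^\boxtimes) \le 3(2\beta + 1)$ with $\beta = \lceil \tfrac{1}{2(3/2-1)}\rceil = \lceil 1\rceil = 1$, hence $\le 9$. Chaining the two inequalities yields $c_3(G) \le 9$.

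For the case $x(G) = 1$, I would run the same argument with \cref{thm: 1-planarisation}(a). With $x(G)=1$ the left subscript of part (a), namely $\lceil \alpha(x(G)+2)\rceil$, equals $\lceil 3\alpha\rceil$, so taking $\alpha = 4/3$ makes this $\lceil 4\rceil = 4$, which is the subscript we want. Then $c_4(G) \le c_{\lceil \alpha x(G^\boxtimes)\rceil - 1}(G^\boxtimes)$, and \cref{thm: cop number 1-planar weighted kite edge} with $d = x(G^\boxtimes)$ and $\alpha = 4/3 > 1$ gives $\beta = \lceil \tfrac{1}{2(4/3-1)}\rceil + 1 = \lceil 3/2\rceil + 1 = 2+1 = 3$, hence the bound $3(2\beta+1) = 3\cdot 7 = 21$. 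Chaining gives $c_4(G) \le 21$. Note that it does not matter what $x(G^\boxtimes)$ actually is: the whole reason for picking $\alpha$ strictly above $1$ is precisely to make the cop bound a constant independent of the distance parameter of $G^\boxtimes$, which is exactly the mechanism already used in \cref{cor: d-framed}.

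The only genuine care needed is bookkeeping on the ceilings and floors: one must confirm that $\lceil \alpha(X(G)+1)\rceil = 3$ when $X(G)=1, \alpha=3/2$ and $\lceil \alpha(x(G)+2)\rceil = 4$ when $x(G)=1, \alpha=4/3$, and that the $\beta$-formulas in \cref{thm: cop number 1-planar weighted kite edge} evaluate to $1$ and $3$ respectively at these values of $\alpha$ — all of which are immediate. There is no real obstacle here; this corollary is a packaging result, and the substance lives in \cref{thm: 1-planarisation} and \cref{thm: cop number 1-planar weighted kite edge}. If anything, the one subtlety worth a sentence in the write-up is that \cref{thm: 1-planarisation} is stated for graphs on an arbitrary surface $\Sigma$, so one should simply instantiate $\Sigma$ to be the sphere, under which "$(\Sigma,1)$-planarise and kite-augment" is exactly the construction producing $G^\boxtimes$ as a $1$-plane (spherical) graph to which \cref{thm: cop number 1-planar weighted kite edge} applies verbatim.
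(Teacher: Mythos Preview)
Your proposal is correct and follows essentially the same route as the paper: apply \cref{thm: 1-planarisation} to pass to $G^\boxtimes$, then invoke \cref{thm: cop number 1-planar weighted kite edge} with $\alpha = 3/2$ (for $d = X$) and $\alpha = 4/3$ (for $d = x$), obtaining $\beta = 1$ and $\beta = 3$ respectively. The only superfluous step is the computation $X(G^\boxtimes) \le 2k$ via \cref{lemma: bounds on x and X}, which, as you yourself note later, is irrelevant since for $\alpha > 1$ the bound $3(2\beta+1)$ in \cref{thm: cop number 1-planar weighted kite edge} is independent of the distance parameter of $G^\boxtimes$.
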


\begin{proof}
From \cref{thm: 1-planarisation}, we have $c_{\ceil{\alpha(X + 1)}}(G) \leq c_{\ceil{\alpha X}-1}(G^\boxtimes)$ and $c_{\ceil{\alpha(x+2)}}(G) \leq c_{\ceil{\alpha x}-1}(G^\boxtimes)$. By setting $\alpha = 3/2$ for $d = X(G^\boxtimes)$ in \cref{thm: cop number 1-planar weighted kite edge}, we get $\beta = \left \lceil \frac{1}{2(\alpha-1)} \right \rceil = 1$, and setting $\alpha = 4/3$ for $d = x(G^\boxtimes)$, we get $\beta = \left \lceil \frac{1}{2(\alpha-1)} \right \rceil+1 = 3$. This implies that $c_3(G) = c_{\ceil{\alpha(X + 1)}}(G) \leq c_{\ceil{\alpha X}-1}(G^\boxtimes) \leq 3$ when $X(G) = 1$, and $c_4(G) = c_{\ceil{\alpha(x + 2)}}(G) \leq c_{\ceil{\alpha x}-1}(G^\boxtimes) \leq 21$ when $x(G) = 1$. 
\end{proof}

From \cref{thm: cop number 1-planar weighted kite edge}, it is easy to see that for $\alpha > 1$, distance $\ceil{\alpha x}-1$ and $\ceil{\alpha X}-1$ cop-numbers of 1-plane graphs are within a constant factor of $1/(\alpha - 1)$. When combined with \cref{thm: 1-planarisation}, one can obtain a similar bound for $c_{\ceil{\alpha(X+1)}}(G)$ and $c_{\ceil{\alpha(x+2)}}(G)$ for arbitrary graphs $G$.   

\begin{corollary}\label{cor: estimates}
For any graph $G$,
\begin{equation*}
c_{\ceil{\alpha(X+1)}}(G) \leq \begin{cases}
             \frac{8}{\alpha - 1}  & \text{if } 1 < \alpha < \frac{3}{2} \\
             9   & \text{if } \alpha \geq \frac{3}{2}
       \end{cases} \qquad
c_{\ceil{\alpha(x+2)}}(G) \leq \begin{cases}
             \frac{11}{\alpha - 1}  & \text{if } 1 < \alpha < \frac{3}{2} \\
             15  & \text{if } \alpha \geq \frac{3}{2}
       \end{cases}
\end{equation*}
\end{corollary}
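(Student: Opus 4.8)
The plan is to derive \cref{cor: estimates} as a direct consequence of \cref{thm: 1-planarisation} (for the reduction to kite-augmented $(\Sigma,1)$-plane graphs) combined with \cref{thm: cop number 1-planar weighted kite edge} (for the cop-number of such graphs), exactly as was done in \cref{cor: arbitrary x and X = 1} for the special cases $X(G)=1$ and $x(G)=1$, but now tracking the dependence on $\alpha$. First I would recall from \cref{thm: 1-planarisation}(b) that $c_{\ceil{\alpha(X(G)+1)}}(G) \leq c_{\ceil{\alpha X(G^\boxtimes)}-1}(G^\boxtimes)$, and from \cref{thm: 1-planarisation}(a) that $c_{\ceil{\alpha(x(G)+2)}}(G) \leq c_{\ceil{\alpha x(G^\boxtimes)}-1}(G^\boxtimes)$, where $G^\boxtimes$ is any kite-augmentation of a $(\Sigma,1)$-planarisation of $G$. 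So it suffices to bound $c_{\ceil{\alpha d}-1}(G^\boxtimes)$ for a kite-augmented 1-plane graph, which is precisely what \cref{thm: cop number 1-planar weighted kite edge} does.

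The main step is a case split on $\alpha$. For $\alpha \geq 3/2$: in \cref{thm: cop number 1-planar weighted kite edge} with $d = X(G^\boxtimes)$ and $\alpha > 1$ we have $\beta = \ceil{\tfrac{1}{2(\alpha-1)}}$, and since $\alpha-1 \geq 1/2$ this gives $\beta \leq 1$, hence $c_{\ceil{\alpha X}-1}(G^\boxtimes) \leq 3(2\beta+1) \leq 9$; the case $\alpha = 3/2$ itself must be handled either directly (it falls under $\alpha>1$) or by invoking \cref{obs: distance cop number monotonic} together with the $\alpha=3/2$ bound. Similarly with $d = x(G^\boxtimes)$ and $\alpha > 1$ we get $\beta = \ceil{\tfrac{1}{2(\alpha-1)}}+1 \leq 2$, so $c_{\ceil{\alpha x}-1}(G^\boxtimes) \leq 3(2\beta+1) \leq 15$. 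For $1 < \alpha < 3/2$: here $\beta = \ceil{\tfrac{1}{2(\alpha-1)}} \leq \tfrac{1}{2(\alpha-1)}+1$ in the $X$-case, so $3(2\beta+1) \leq 3\bigl(\tfrac{1}{\alpha-1}+3\bigr) = \tfrac{3}{\alpha-1} + 9$, and one checks that for $\alpha < 3/2$ this is at most $\tfrac{8}{\alpha-1}$ (equivalently $9 \leq \tfrac{5}{\alpha-1}$, i.e. $\alpha - 1 \leq 5/9$, which holds since $\alpha-1 < 1/2$). In the $x$-case $\beta = \ceil{\tfrac{1}{2(\alpha-1)}}+1 \leq \tfrac{1}{2(\alpha-1)}+2$, so $3(2\beta+1) \leq \tfrac{3}{\alpha-1}+15$, and one verifies $\tfrac{3}{\alpha-1}+15 \leq \tfrac{11}{\alpha-1}$ for $\alpha-1 < 1/2$ (i.e. $15 \leq \tfrac{8}{\alpha-1}$, equivalently $\alpha-1 \leq 8/15$, which holds).

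Finally I would note that, since $X(G^\boxtimes) \geq 1$ and $x(G^\boxtimes) \geq 1$ always, the arguments $\ceil{\alpha X(G^\boxtimes)}-1$ and $\ceil{\alpha x(G^\boxtimes)}-1$ of the distance-cop-number are nonnegative, so \cref{obs: distance cop number monotonic} applies wherever we need to pass between the specific value $\alpha$ used and a nearby one. Chaining the two inequalities gives $c_{\ceil{\alpha(X+1)}}(G) \leq 8/(\alpha-1)$ (resp. $\leq 9$) and $c_{\ceil{\alpha(x+2)}}(G) \leq 11/(\alpha-1)$ (resp. $\leq 15$), as claimed. The only mildly delicate point is verifying the two elementary inequalities $\tfrac{3}{\alpha-1}+9 \leq \tfrac{8}{\alpha-1}$ and $\tfrac{3}{\alpha-1}+15 \leq \tfrac{11}{\alpha-1}$ on the interval $1<\alpha<3/2$, together with the ceiling-estimate $\ceil{t} \leq t+1$ used to absorb the ceiling in $\beta$; none of this is a real obstacle, so the proof is essentially a bookkeeping exercise once \cref{thm: 1-planarisation} and \cref{thm: cop number 1-planar weighted kite edge} are in hand.
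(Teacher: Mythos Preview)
Your proposal is correct and follows essentially the same route as the paper: reduce to $G^\boxtimes$ via \cref{thm: 1-planarisation}, apply \cref{thm: cop number 1-planar weighted kite edge} with the $\alpha>1$ formula for $\beta$, and split into $\alpha\geq 3/2$ (where $\lceil 1/(2(\alpha-1))\rceil=1$) and $1<\alpha<3/2$. The only cosmetic difference is that the paper multiplies through by $(\alpha-1)$ and bounds $3+9(\alpha-1)\leq 8$ and $3+15(\alpha-1)\leq 11$, whereas you keep the inequalities in the equivalent form $\tfrac{3}{\alpha-1}+9\leq \tfrac{8}{\alpha-1}$ and $\tfrac{3}{\alpha-1}+15\leq \tfrac{11}{\alpha-1}$; the arithmetic is identical.
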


\begin{proof}
From \cref{thm: 1-planarisation}, we have $c_{\ceil{\alpha(X+1)}}(G) \leq c_{\ceil{\alpha X} - 1}(G^\boxtimes)$ and $c_{\ceil{\alpha(x+2)}}(G) \leq c_{\ceil{\alpha x} - 1}(G^\boxtimes)$. Setting $d := X(G^\boxtimes)$ and $\alpha > 1$ in \cref{thm: cop number 1-planar weighted kite edge}, we get: 
\begin{align*}
(\alpha - 1) \cdot c_{\ceil{\alpha X} - 1}(G^\boxtimes)  &\leq (\alpha-1)\cdot 3\left(2\beta + 1\right)\\ 
&\leq (\alpha-1)\cdot\left(3\left (2\left \lceil \frac{1}{2(\alpha -1)} \right \rceil + 1\right)\right) \\
&\leq (\alpha-1)\cdot\left(6 \left(\frac{1}{2(\alpha -1)} + 1\right )+ 3\right) \\
&\leq (\alpha-1)\cdot \left(\frac{3}{\alpha-1} + 9 \right) \\
&= 3 + 9(\alpha - 1)
\end{align*}
When $\alpha < 3/2$, we have $3 + 9(\alpha - 1) \leq 8$. Therefore, $c_{\ceil{\alpha(X+1)}}(G) \leq c_{\ceil{\alpha X} - 1}(G^\boxtimes) \leq 8/(\alpha - 1)$. Similarly, one can show that $(\alpha - 1) \cdot c_{\ceil{\alpha x} - 1}(G^\boxtimes) \leq 3 + 15(\alpha - 1) \leq 11$ for all $\alpha < 3/2$; we leave the details of this to the reader. On the other hand, $\left \lceil \frac{1}{2(\alpha -1)} \right \rceil = 1$ when $\alpha \geq 3/2$; hence $\beta = 1$ if $d = X(G^\boxtimes)$, and $\beta = 2$ if $d = x(G^\boxtimes)$. This implies that $c_{\ceil{\alpha(X+1)}}(G) \leq c_{\ceil{\alpha x} - 1}(G^\boxtimes) \leq 9$ and $c_{\ceil{\alpha(x+2)}}(G) \leq c_{\ceil{\alpha x} - 1}(G^\boxtimes) \leq 15$ for all $\alpha \geq 3/2$.
\end{proof}

\cref{cor: estimates} shows that $c_{\ceil{\alpha(X+1)}}(G)$ is bounded for all constants $\alpha > 1$. In contrast, the class of diameter 2 graphs are such that $X(G) \leq 2$, but this class of graphs is cop-unbounded \cite{distance_cops_robbers}. In other words, for every integer $m > 0$, there is a graph $G$ such that $X(G) \in \{1,2\}$, but $c(G) \geq m$. In \cref{thm: x/6 cop number}, we show a similar result for arbitrarily large values of $X(G)$.

\begin{theorem}\label{thm: x/6 cop number}
    For every pair of integers $m \geq 1$ and $\mu \geq 6$, there exists a graph $G$ with $X(G) \geq \mu$ and $c_{\floor{X/6}-1}(G) \geq m$.
\end{theorem}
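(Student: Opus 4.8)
The plan is to reverse-engineer the construction from \cref{thm: single crossing per face}: there, starting from \emph{any} graph $H$ with $c(H)\ge m$, we built a 1-plane graph whose $X$-value could be driven arbitrarily high by repeated subdivision and kite-augmentation, while the cop-number stayed at least $m$. I would make that quantitative. Fix $H$ with $c(H)\ge m$, and consider the subdivision $H^{(k)}$ for a suitable $k$ depending on $\mu$. By \cref{thm: distance cop-number 2d+1} (applied with $d=0$, or simply by the classical fact $c(H)\le c(H^{(k)})$ from \cref{theorem: subdivision} with $d=0$), we have $c(H^{(k)})\ge m$. Now let $G$ be a graph obtained by $1$-planarising and kite-augmenting a drawing of $H^{(k)}$ — but the key point is to choose the underlying drawing so that every crossing sits on long subdivided edges, forcing $X(G)$ to be large.

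The main technical step is to control $X(G)$ from below and the cop-number from below simultaneously. For the lower bound on $X(G)$: I would arrange the drawing so that at each crossing, the four endpoints are mutually far apart in $G$; subdividing every original edge of $H$ enough times (so each crossing pair of edges lies on long paths) makes every consecutive pair of endpoints of every crossing at distance $\ge \mu$, which gives $X(G)\ge\mu$ directly from \cref{def: x and X}. Concretely one can take $k$ on the order of $\mu$ (say $k = 6\mu$ or similar, to be pinned down) and use \cref{lemma: bounds on x and X} together with the fact that kite-augmentation does not decrease $X$ (since $X(G^\boxtimes)=X(G^{(k)})$, as noted in the proof of \cref{thm: 1-planarisation}). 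For the lower bound on the cop-number at distance $\floor{X/6}-1$: this is where I expect the real work. By \cref{corollary: 1-planarisation cop-number}, $c_{\lceil d/k\rceil}(H^{(k)}) \le c_d(G^\boxtimes)$ for the $1$-planarisation parameter $k$; I want to pick $d = \floor{X(G)/6}-1$ and show $\lceil d/k\rceil$ is small — ideally $\le 0$ after accounting for the subdivisions already baked into $H^{(k)}$ — so that $c_d(G) \ge c_0(H) = c(H)\ge m$. The arithmetic here must be arranged so that the distance budget $\floor{X/6}-1$, after dividing by the subdivision factor, collapses to distance $0$ in $H$; the constant $6$ in the statement and the hypothesis $\mu\ge 6$ are exactly what make this work out, and getting the floor/ceiling bookkeeping right is the crux.

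In more detail, I would iterate the ``subdivide twice then kite-augment'' move from the proof of \cref{thm: single crossing per face} (or equivalently take one subdivision $H^{(k)}$ with a single large $k$) and track two quantities through the iteration: the quantity $X(\cdot)$, which grows by a controlled additive/multiplicative amount at each step (via \cref{lemma: bounds on x and X}), and the distance-$d$ cop-number, which by \cref{corollary: 1-planarisation cop-number} and \cref{obs: distance cop number monotonic} does not drop below $m$ as long as the distance parameter, divided by the accumulated subdivision factor, rounds down to $0$. Choosing the number of iterations (equivalently the final $k$) large enough that $X(G)\ge\mu$ and checking that $\floor{X(G)/6}-1$ divided by $k$ is still $\le 0$ — which needs $k \ge \floor{X(G)/6}$, a comparison that holds because each subdivision step multiplies distances (hence $X$) by roughly the same factor it introduces into $k$ — finishes the proof. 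The one subtlety to be careful about: after kite-augmentation the graph $G^\boxtimes\setminus G^{(k)}$ contributes kite-paths, but these are exactly the disjoint-paths pieces handled in \cref{corollary: 1-planarisation cop-number}, so they never help the robber and never hurt the lower bound. I would present the final choice of parameters as: given $\mu$, set $k=6\mu$ (or the smallest value making $X(G)\ge\mu$), then verify $\floor{X(G)/6}-1 < k$, so $\lceil(\floor{X(G)/6}-1)/k\rceil \le 1$ — and if a stray $+1$ appears, absorb it by one further subdivision so the rounding gives $0$ — yielding $c_{\floor{X/6}-1}(G)\ge c(H)\ge m$.
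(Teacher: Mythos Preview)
Your proposal has a genuine gap: you never control $X(G)$ from \emph{above}, and without that the arithmetic cannot close. You start from an arbitrary $H$ with $c(H)\ge m$, but then $X(H)$ is completely uncontrolled. After subdividing by a total factor $k$, \cref{lemma: bounds on x and X} gives $X(G)\ge k\cdot X(H)$, so $\lfloor X(G)/6\rfloor$ is at least of order $k\cdot X(H)/6$. Your claimed inequality ``$k\ge\lfloor X(G)/6\rfloor$, which holds because each subdivision step multiplies distances by roughly the same factor it introduces into $k$'' is therefore false in general: the ratio $X(G)/k$ is roughly $X(H)$, not a universal constant, and for $H$ with large $X(H)$ the distance budget $\lfloor X(G)/6\rfloor-1$ divided by $k$ rounds up to something large, not to $0$ or $1$. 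At that point you have no lower bound on $c_{\lceil d/k\rceil}(H)$ and the argument dies. (Even when $\lceil d/k\rceil=1$ you would need $c_1(H)\ge m$, which you never arranged.)

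The paper's proof fixes exactly this by choosing $H$ to have diameter $2$ (such cop-unbounded families exist), so that $X(H)\le 2$ in every drawing; then $G:=H^{(\mu)}$ or $H^{(\mu+1)}$ satisfies $\mu\le X(G)\le 3(\mu+1)$ by \cref{lemma: bounds on x and X}, giving the two-sided control you are missing. The second ingredient you underuse is \cref{thm: distance cop-number 2d+1} for \emph{positive} $d$: the paper applies it with $2d+1\approx\mu$ to get $c_{\lfloor\mu/2\rfloor}(G)\ge c(H)\ge m$ directly, and then the upper bound $X(G)\le 3(\mu+1)$ yields $\lfloor X(G)/6\rfloor-1\le\lfloor\mu/2\rfloor$, finishing the proof. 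No kite-augmentation or $1$-planarisation is needed at all.
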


\begin{proof}
Let $H$ be a graph of diameter 2 such that $c(H) \geq m$. To get the distance $d$ cop-number to be at least $m$, we use \cref{thm: distance cop-number 2d+1}, which guarantees that $c(H) \leq c_d(H^{(2d+1)})$. By this, $c(H) \leq c_{(d-1)/2}(H^{(d)})$ when $d$ is odd, and $c(H) \leq c_{d/2 - 1}(H^{(d-1)})$ when $d$ is even. Substituting $d: = \mu+1$, we get: 

\medskip
\noindent \textbf{Case (a): $\mu+1$ is odd.} Set $G := H^{(\mu+1)}$, and we have $c_{\floor{\mu/2}}(G) \geq c(H)$.

\smallskip
\noindent \textbf{Case (b): $\mu+1$ is even.} Set $G := H^{(\mu)}$, and we have $c_{\floor{(\mu-1)/2}}(G) \geq c(H)$.

\medskip
As $H$ has diameter 2, in any drawing of $H$ on the sphere, $1 \leq X(H) \leq 2$. By \cref{lemma: bounds on x and X}, we have $\mu \leq \mu X(H) \leq X(H^{(\mu)}) \leq \mu X(H) + \mu \leq 3\mu$. Likewise, we can show $\mu+1 \leq X(H^{(\mu+1)}) \leq 3(\mu+1)$. As $G = H^{(\mu)}$ or $G = H^{(\mu+1)}$, we have $\mu \leq X(G) \leq 3(\mu+1)$. We now are left with showing that $c_{\floor{X(G)/6}-1}(G) \geq m$. Note that $\floor{X(G)/6}-1 =\floor{\frac{X(G)-6}{6}} \leq \floor{\frac{3(\mu-1)}{6}} = \floor{\frac{\mu-1}{2}} \leq \floor{\frac{\mu}{2}}$. Hence, for all $\mu \geq 6$, we have $X(G) \geq \mu \geq 6$, and $c_{\floor{X/6}-1}(G) \geq c_{\floor{(\mu-1)/2}}(G) \geq c_{\floor{\mu/2}}(G) \geq c(H) \geq m$.
\end{proof}

\section{Graphs on Surfaces}\label{sec: graphs on surfaces}

In this section, we study cop-numbers for $(\Sigma,1)$-planar drawings of graphs on both orientable and non-orientable surfaces. (We assume that the reader is familiar with basic concepts related to graphs on surfaces; see \cite{gross_tucker} for a reference.) All drawings are assumed to be \textit{cellular}; i.e, if $\gamma: G \mapsto S$ is a function that maps a graph $G$ onto a surface $S$, then $S \setminus \gamma(G)$ is a disjoint union of disks. Equivalently, $G$ is cellularly drawn if and only if $G^\times$ (the graph obtained by adding dummy vertices at crossing points) is a cellular embedding on $S$. For orientable surfaces, we use the term \textit{genus} to refer to the number of handles on the surface, whereas for non-orientable surfaces, \textit{genus} refers to the number of crosscaps on the surface.

\subsection{\texorpdfstring{$(\Sigma,1)$}{Sigma1}-Planar Drawings on Orientable and Non-orientable Surfaces.}

In \cref{thm: 1-planar genus}, we extend the results of \cref{thm: cop number 1-planar weighted kite edge} to $(\Sigma,1)$-planar drawings on orientable surfaces. The main ideas for the proof of \cref{thm: 1-planar genus} are inspired from \cite{quilliot1985_genus}, which shows that any graph embeddable (without crossings) on an orientable surface of genus $g$ has cop-number at most $2g + 3$. 

\begin{theorem}\label{thm: 1-planar genus}
Let $G$ be any graph that has a kite-augmented $(\Sigma,1)$-planar graph drawing on an orientable surface $\Sigma$ of genus $g$. For any $\alpha \geq 1$, we have $c_{\ceil{\alpha x} - 1}(G) \leq (2g + 3) \cdot(2 \beta + 1)$ and $c_{\ceil{\alpha X} - 1}(G) \leq (2g+3) \cdot (2\beta + 1) + g$ where 
\begin{equation*}
\beta = \begin{cases}
             d + 1  & \text{if } d = x(G) \text{ and } \alpha = 1 \\
             \left \lceil \frac{1}{2(\alpha - 1)}\right \rceil + 1   & \text{if } d = x(G) \text{ and } \alpha > 1
       \end{cases} \qquad
\beta = \begin{cases}
             d - 1  & \text{if } d = X(G) \text{ and } \alpha = 1 \\
             \left \lceil \frac{1}{2(\alpha - 1)} \right \rceil  & \text{if } d = X(G) \text{ and } \alpha > 1
       \end{cases}
\end{equation*}
\end{theorem}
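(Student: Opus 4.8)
The plan is to mimic the proof of \cref{thm: cop number 1-planar weighted kite edge}, replacing the ``3 cops to guard a path or cycle, and one set always free'' structure with the genus-$g$ analogue from \cite{quilliot1985_genus}. Recall that on an orientable surface of genus $g$, the Aigner--Fromme/Quilliot strategy uses $2g+3$ shortest-path guards that progressively carve out the robber territory, with the invariant that the boundary of the guarded region is a union of (at most) $O(g)$ shortest paths, but because cutting along these paths reduces the genus of the region containing the robber, only $2g+3$ guard-units are ever needed simultaneously (each ``guard unit'' being a single cop in the embedded case). Here every guard-unit instead needs $2\beta+1$ cops, since the robber must be prevented not only from landing on a guarded path but also from crossing any of its edges (this is precisely the content of \cref{lem: shortest path guarding}). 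Hence the multiplicative factor $2g+3$, giving $(2g+3)(2\beta+1)$.

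First I would restate the infrastructure from \cref{sec: 1-planar} that carries over verbatim: the notions of $(s,t)$-subgraph (\cref{def: st subgraph}), crossing-guarded subgraph at distance $\ceil{\alpha d}-1$ (\cref{def: crossing-guard}), robber territory (\cref{def: robber territory}), non-crossing adjacency (\cref{def: non-crossing adjacency}), and the observations \cref{obs: robber territory,obs: adjacency,obs: st subgraph weighted}. Crucially, \cref{lem: shortest paths do not cross} and \cref{lem: shortest path guarding} only used 1-planarity of $G$ and the existence of shortest $(s,t)$-paths in $(s,t)$-subgraphs, so they apply unchanged to kite-augmented $(\Sigma,1)$-plane graphs on any surface. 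Thus a single shortest $(s,t)$-path in an $(s,t)$-subgraph is still crossing-guardable at distance $\ceil{\alpha d}-1$ by $2\beta+1$ cops, with $\beta$ as stated.

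Next I would carry out the genus induction. Following \cite{quilliot1985_genus}, one maintains a guarded subgraph $W$ whose complement region (in $\Sigma^\times$, the planarisation) containing the robber is a surface-with-boundary of some genus $g' \le g$, with the boundary made up of a bounded number of guarded non-self-crossing shortest paths; at each iteration one picks vertices $s,t$ non-crossing adjacent to the robber territory (using \cref{obs: adjacency} to always find such vertices on edges of $W$), guards a new shortest non-self-crossing $(s,t)$-path $Q$ through the $(s,t)$-subgraph formed by $\mathcal{R}(W)$ together with the relevant $S_W(\cdot)$ edge sets, and observes that $Q$ either splits off a disk (reducing the boundary complexity) or decreases the genus of the robber's side. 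A careful accounting — identical in structure to the planar case analysis $(P1),(P2),(C1),(C2),(C3)$ in \cref{lemma: 3 times U_d} but now bookkeeping genus as well — shows that at most $2g+3$ guard-units are active at once, and that at least one vertex of the robber territory is absorbed per iteration (the analogue of \cref{obs: increase in size of guarded territory}), so the robber is caught after finitely many iterations. This yields $c_{\ceil{\alpha d}-1}(G) \le (2g+3)(2\beta+1)$, which is the $d = x(G)$ bound.

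For $d = X(G)$ I would handle the extra additive $g$ by the same device used to pass from $x$ to $X$ elsewhere in the paper: when $d = X(G)$, the value $\beta = d-1$ (resp.\ $\lceil 1/(2(\alpha-1))\rceil$) is one less than in the $x$ case, so the guard-units are slightly ``shorter'' and \cref{obs: range and d} only gives range $d-1$ rather than $d+1$ around $\mathcal{U}^*$; this costs one extra cop per boundary path in a worst-case genus configuration, and since the boundary has at most $O(g)$ paths but the precise Quilliot bound needs only $g$ extra singletons, we get $c_{\ceil{\alpha X}-1}(G) \le (2g+3)(2\beta+1) + g$. The main obstacle I anticipate is the genus-tracking in the iteration: one must argue rigorously that cutting $\Sigma^\times$ along each newly guarded non-self-crossing path either disconnects the robber's component, strictly lowers its genus, or strictly simplifies its boundary, so that the $2g+3$ budget is never exceeded — this is exactly the delicate part of \cite{quilliot1985_genus} and the crossings (handled via the $\kappa_{uv}$ paths, treated as weighted uncrossed edges) must be shown not to interfere with that topological argument, which is where \cref{obs: robber territory} and non-self-crossing-ness of all guarded paths do the work.
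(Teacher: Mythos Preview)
Your plan diverges from the paper's proof in a substantive way, and the part you flag as ``the main obstacle'' is exactly where your argument is incomplete.

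The paper does \emph{not} extend the iterative $(P1),(P2),(C1),(C2),(C3)$ carving to surfaces with genus bookkeeping. Instead it runs a clean induction on $g$: for $g>0$, take a shortest non-contractible cycle $C\subseteq G$ (which exists because $G^\times$ is cellular and the dummy vertices on $C$ can be bypassed via the $\kappa_{uv}$ paths). The key structural fact (Claim in the proof) is that for any $x,y\in V(C)$ the shorter $(x,y)$-arc of $C$ is already a shortest $(x,y)$-path in $G$; hence $C=P_1\cup P_2$ (even length) or $C=P_1\cup P_2\cup\{e\}$ (odd length) with $P_1,P_2$ shortest paths in $G$. Each $P_i$ is crossing-guarded by $2\beta+1$ cops via \cref{lem: shortest path guarding}, so $C$ costs $2(2\beta+1)$ cops. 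One then deletes $V(C)$ together with all edges crossing $C$; the remaining drawing is no longer cellular (else $C$ would be contractible), so the robber's component re-embeds on an orientable surface of genus $\le g-1$ with the same rotation system, and induction finishes. The base case $g=0$ is \cref{thm: cop number 1-planar weighted kite edge}. Summing gives $2g(2\beta+1)+3(2\beta+1)=(2g+3)(2\beta+1)$.

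Your explanation of the additive $+g$ for $d=X(G)$ is where there is a real gap: it has nothing to do with guard-units being ``shorter'' or $\beta$ being one less. In the paper, when $|C|$ is odd there is a leftover edge $e=(u,v)$ not on $P_1\cup P_2$. If $e$ participates in a crossing $\{(u,v),(w,x)\}$, then for $d=x(G)$ the same range argument as in \cref{lem: shortest path guarding} (using $|\kappa_{wu}|>d$, $|\kappa_{wv}|>d$, hence $|\kappa_{xu}|\le d$ or $|\kappa_{xv}|\le d$) shows the existing cops already prevent the robber from crossing $e$, so no extra cop is needed. For $d=X(G)$ this argument fails, and one simply parks a single extra cop at $w$; this is one cop per genus-reduction step, hence $+g$ total. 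Your ``one extra cop per boundary path'' rationale does not produce this and would not obviously cap at $g$.

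Your iterative-carving route might be salvageable, but you would need to supply the full genus-tracking argument you yourself identify as missing, and you would need a correct mechanism for the $+g$. The paper's non-contractible-cycle induction sidesteps both issues.
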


\begin{proof}
Our main idea is to guard a shortest non-contractible cycle in $G$, then delete the cycle to get a graph that is embeddable on a surface of smaller genus, and use induction to complete the proof. The base case $g = 0$ was already proven in \cref{thm: cop number 1-planar weighted kite edge}, so assume that $g > 0$.  We first show that there exists a non-contractible cycle in $G$. As $G^\times$ is a cellular embedding, there exists a non-contractible cycle of $G^\times$ in $S$. This can be turned into a non-contractible cycle of $G$: for every pair of edges $\{(u,d), (d,v)\}$ on the cycle where $u$ and $v$ are consecutive endpoints of a crossing at $d$, replace the two edges with $\kappa_{uv}$. With this, we now get a non-contractible cycle of $G$. 

Since $G$ has a non-contractible cycle, there exists a non-empty shortest non-contractible cycle $C \subseteq G$. (As $G$ is drawn with crossings, the edges of $C$ may not necessarily trace a simple closed curve on the surface.) An especially useful property of shortest non-contractible cycles is that every pair of vertices on the cycle can be connected by a shortest path that is a subset of the cycle (see also \cite{thomassen_non_contractible}). 

\begin{claim}\label{claim: non-contr cycle}
   For any pair of vertices $x,y \in V(C)$, the shortest $(x,y)$-path on $C$ is also a shortest $(x,y)$-path in $G$. 
\end{claim}

\begin{claimproof}
   Let $P_1$ and $P_2$ be the two $(x,y)$-paths on $C$. Suppose, for contradiction, that there is a path $P_3$ in $G$ such that $|P_3| < \min\{|P_1|, |P_2|\}$. Then the two cycles $C_{13} := P_1 \cup P_3$ and $C_{23} := P_2 \cup P_3$ must be contractible, since they are both shorter than $C$, which is a shortest non-contractible cycle of $G$. This would imply that the concatenation $C_{13} \cdot C_{23} = (P_1(x \dots y) \cdot P_3(y \dots x)) \cdot (P_3(x \dots y) \cdot P_2(y \dots x)) = P_1(x \dots y) \cdot P_2(y \dots x) = C$ is contractible; this is a contradiction.  
\end{claimproof}

\cref{claim: non-contr cycle} implies that one can choose shortest paths $P_1$ and $P_2$ with $|P_1| = |P_2| = \floor{|C|/2}$ such that $C = P_1 \cup P_2$, if $|C|$ is even, and $C = P_1 \cup P_2 \cup \{e\}$, if $|C|$ is odd. As $C$ can be expressed as the union of shortest paths, it can be crossing-guarded using few cops. We discuss this in detail below, considering the differences in the parity of $|C|$.

\medskip
\textbf{Case (a): $|C|$ is even.} By \cref{lem: shortest path guarding}, each of $P_1$ and $P_2$ can be crossing-guarded at distance $\ceil{\alpha d}-1$ using $2\beta + 1$ cops. Therefore, $C$ can be crossing guarded at distance $\ceil{\alpha d}-1$ using $2\cdot(2\beta + 1)$ cops.

\medskip
\textbf{Case (b): $|C|$ is odd.} As with Case (a), both $P_1$ and $P_2$ can be crossing-guarded at distance $\ceil{\alpha d}-1$ using $2\beta + 1$ cops. For $C= P_1 \cup P_2 \cup \{e\}$ to be crossing-guarded, we need to ensure that $e$ is crossing-guarded too. Let the crossing at $e = (u,v)$ be $\{(u,v), (w,x)\}$ where $u \in P_1$ and $v \in P_2$. If $d = X(G)$, we place a single cop at $w$ so that $e$ is never crossed. If $d = x(G)$, we will show that this additional cop can be avoided. 

Let $d = x(G)$. Suppose that the robber lands on the vertex $w$ in an attempt to cross $e$ and land on $x$ in the next round. As in the proof of \cref{lem: shortest path guarding}, we can show that $|\kappa_{wu}| > d$ and $|\kappa_{wv}| > d$ since the range of the cops guarding $P_1$ and $P_2$ is at least $d+1$ (\cref{obs: range and d}). Hence, one can assume, up to symmetry, that $|\kappa_{xu}| \leq d$. As this implies that $|\kappa_{wu}| \leq d+1$, after the cops' movement on $P_1$ in response to the robber landing on $w$, vertex $u$ comes inside the range of the cops. If the robber now moves to $x$, he can be captured by a cop at distance $\ceil{\alpha d}-1$ (\cref{obs: range of P}).      

\medskip
For both cases above, $C$ can be crossing-guarded at distance $\ceil{\alpha d} - 1$ using $2 \cdot(2\beta +1)$ cops when $d = x(G)$ and using $2 \cdot (2\beta + 1) + 1$ cops when $d = X(G)$. Now, delete all vertices of $C$ and all edges that cross some edge of $C$. The resulting drawing is not cellular, since otherwise $C$ can be drawn inside a disc, implying that $C$ is contractible---a contradiction. Let $H$ be the connected component of the resulting graph that contains the robber. Note that $H$ is a kite-augmented $(\Sigma,1)$-planar drawing, as any crossing of $G$ with some endpoint incident with $C$ can no longer be a crossing of $H$. Since $H^\times$ is not a cellular embedding, one can re-embed $H^\times$ on an orientable surface $\Sigma'$ of smaller genus with the same rotation system (see Section 3.4 of \cite{mohar2001graphs}). Since the rotation system is preserved, all pairs of crossing edges of $H$ are preserved in the new cellular embedding. As $H$ is a $(\Sigma',1)$-planar kite-augmented drawing on a surface of smaller genus, we can do induction on the genus $g$, and obtain the desired result.   
\end{proof}

\edit{

Combining \cref{thm: 1-planar genus} with \cref{thm: 1-planarisation}, we get the following corollary.

\begin{corollary}
Let $G$ be any graph drawn on an orientable surface $\Sigma$ of genus $g$. For any $\alpha \geq 1$, we have $c_{\ceil{\alpha (x + 2)}}(G) \leq (2g + 3) \cdot(2 \beta + 1)$ and $c_{\ceil{\alpha (X + 1)}}(G) \leq (2g+3) \cdot (2\beta + 1) + g$ where 
\begin{equation*}
\beta = \begin{cases}
             d + 1  & \text{if } d = x(G) \text{ and } \alpha = 1 \\
             \left \lceil \frac{1}{2(\alpha - 1)}\right \rceil + 1   & \text{if } d = x(G) \text{ and } \alpha > 1
       \end{cases} \qquad
\beta = \begin{cases}
             d - 1  & \text{if } d = X(G) \text{ and } \alpha = 1 \\
             \left \lceil \frac{1}{2(\alpha - 1)} \right \rceil  & \text{if } d = X(G) \text{ and } \alpha > 1
       \end{cases}
\end{equation*}
\end{corollary}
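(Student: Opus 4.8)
The plan is to combine \cref{thm: 1-planarisation} with \cref{thm: 1-planar genus} in exactly the same way that \cref{cor: k-planar} and \cref{cor: arbitrary x and X = 1} combine \cref{thm: 1-planarisation} with \cref{thm: cop number 1-planar weighted kite edge}. Fix a graph $G$ drawn on an orientable surface $\Sigma$ of genus $g$, and let $G^\boxtimes$ be the graph obtained by $(\Sigma,1)$-planarising $G$ and then kite-augmenting the result. The crucial point is that $(\Sigma,1)$-planarisation consists only of placing subdivision vertices on edges of $G$, so it does not change the surface: $G^\boxtimes$ is a kite-augmented $(\Sigma,1)$-plane graph drawn on the \emph{same} orientable surface $\Sigma$ of genus $g$. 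Hence \cref{thm: 1-planar genus} applies to $G^\boxtimes$.

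First I would invoke \cref{thm: 1-planarisation}(a) and (b): for every real $\alpha > 0$ we have $c_{\ceil{\alpha(x(G)+2)}}(G) \leq c_{\ceil{\alpha x(G^\boxtimes)}-1}(G^\boxtimes)$ and $c_{\ceil{\alpha(X(G)+1)}}(G) \leq c_{\ceil{\alpha X(G^\boxtimes)}-1}(G^\boxtimes)$. Then I would apply \cref{thm: 1-planar genus} to $G^\boxtimes$ with the same $\alpha$, taking $d = x(G^\boxtimes)$ in the first inequality and $d = X(G^\boxtimes)$ in the second. This gives $c_{\ceil{\alpha x(G^\boxtimes)}-1}(G^\boxtimes) \leq (2g+3)(2\beta+1)$ and $c_{\ceil{\alpha X(G^\boxtimes)}-1}(G^\boxtimes) \leq (2g+3)(2\beta+1)+g$, where $\beta$ is the quantity defined in the statement of \cref{thm: 1-planar genus} in terms of $d$ and $\alpha$. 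Chaining the two pairs of inequalities yields $c_{\ceil{\alpha(x(G)+2)}}(G) \leq (2g+3)(2\beta+1)$ and $c_{\ceil{\alpha(X(G)+1)}}(G) \leq (2g+3)(2\beta+1)+g$, which is exactly the claimed bound once one notes that the ``$d$'' in the corollary's $\beta$-definition is to be read as $x(G^\boxtimes)$ or $X(G^\boxtimes)$ as appropriate.

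The only point requiring a little care — and the place I would expect a referee to push back — is the handling of the parameter $d$ inside the definition of $\beta$. In \cref{thm: 1-planar genus} the symbol $d$ refers to $x$ or $X$ of the (kite-augmented, $(\Sigma,1)$-plane) graph on which the theorem is applied, i.e. $x(G^\boxtimes)$ or $X(G^\boxtimes)$, not of $G$ itself. Since the corollary is stated with a single letter $d$ and the ambient graph is now $G$ rather than $G^\boxtimes$, I would make explicit that $d$ is understood to be the corresponding parameter of $G^\boxtimes$; alternatively, using \cref{lemma: bounds on x and X} one could bound $\beta$ from above in terms of $x(G)$ or $X(G)$ directly and absorb the loss into the constant, but since the corollary is stated in the same form as \cref{thm: 1-planar genus}, the cleanest route is simply to carry $d = x(G^\boxtimes)$ (resp. $d = X(G^\boxtimes)$) through verbatim. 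No genuinely new obstacle arises: the substantive work was already done in \cref{thm: 1-planar genus}, and this corollary is a mechanical two-line composition of that theorem with \cref{thm: 1-planarisation}.
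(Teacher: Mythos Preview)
Your proposal is correct and matches the paper's approach exactly: the paper simply states that the corollary follows by ``combining \cref{thm: 1-planar genus} with \cref{thm: 1-planarisation},'' and you have spelled out precisely that two-line composition. Your caveat about reading $d$ as $x(G^\boxtimes)$ or $X(G^\boxtimes)$ is a fair remark on the statement's phrasing, but it is not a gap in the argument itself.
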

}

We now shift our focus to non-orientable surfaces. In \cite{joret_non_orientable}, it was shown that for any graph $G$ embedded (without crossings) on a non-orientable surface of genus $g$, there is a graph $H$ embedded (without crossings) on an orientable surface of genus $g-1$ such that $c(G) \leq c(H)$. The main idea for this stems from the well-known fact that every non-orientable surface $\Sigma_{g}$ of genus $g$ has an orientable surface $\Sigma_{g-1}$ of genus $g-1$ as a \textit{2-sheeted covering space}. In other words, there is a continuous function $\pi:  \Sigma_{g-1} \mapsto \Sigma_{g}$ such that for every point $x \in \Sigma_{g}$, there exist an open neighbourhood $U_x$ containing $x$ such that $\pi^{-1}(U_x)$ is a disjoint union of two open sets $V_x^1$ and $V_x^2$ where $\pi(V_x^i)$ is homeomorphic to $U_x$ for $i \in \{1,2\}$. So, if a graph $G$ embedded on $\Sigma_{g}$ is lifted into a graph $H$ in $\Sigma_{g-1}$, then there are two copies for every vertex, edge and face of $G$ in $H$ (a simple application of Euler's formula then explains the genus $g-1$ for the covering space). Here, we use ideas from \cite{joret_non_orientable} to derive a similar result for kite-augmented 1-planar drawings on non-orientable surfaces.

\begin{theorem}\label{thm: non-orientable}
Let $G$ be a graph drawn on a non-orientable surface $\Sigma_g$ of genus $g$. Then there exist a $(\Sigma_g,1)$-plane graph $G^\boxtimes$ and a $(\Sigma_{g-1},1)$-plane graph $H$, where $\Sigma_{g-1}$ is an orientable surface of genus $g-1$, such that $X(G^\boxtimes) = X(H)$, $x(G^\boxtimes) = x(H)$, and $c_d(G^\boxtimes) \leq c_d(H) \leq 2 \cdot c_d(G^\boxtimes)$ for any integer $d \geq 0$.
\end{theorem}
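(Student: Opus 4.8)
The plan is to mimic the covering-space argument of Clarke, Fiorini, Joret and Theis~\cite{joret_non_orientable}, adapted to $(\Sigma,1)$-planar drawings. First I would fix a $(\Sigma_g,1)$-planarisation and kite-augmentation $G^\boxtimes$ of $G$, so that $G^\boxtimes$ is a kite-augmented $(\Sigma_g,1)$-plane graph; since kite-augmentation does not change the parameters (as observed in the proof of \cref{thm: 1-planarisation}), this reduces the statement to producing the orientable double cover $H$ of $G^\boxtimes$. Take the planarisation $(G^\boxtimes)^\times$, which is a cellular embedding on $\Sigma_g$, and let $\pi:\Sigma_{g-1}\to\Sigma_g$ be the orientable $2$-sheeted covering. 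Pull back $(G^\boxtimes)^\times$ along $\pi$: each vertex (real or dummy), each edge-arc, and each face lifts to exactly two copies, and the result is a cellular embedding on $\Sigma_{g-1}$. Then ``un-planarise'' by removing the lifted dummy vertices and reuniting each pair of lifted part-edges into a single crossing; since dummy vertices have degree $4$ and lifting is a local homeomorphism, each crossing of $G^\boxtimes$ lifts to two crossings, each still involving exactly two edges crossing once, so the lift $H$ is a $(\Sigma_{g-1},1)$-plane graph.

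Next I would verify the parameter and cop-number claims. For $X(G^\boxtimes)=X(H)$ and $x(G^\boxtimes)=x(H)$: the covering map $\pi$ restricts to a graph homomorphism $H\to G^\boxtimes$ (sending each lifted vertex to the vertex it covers), so any path in $H$ projects to a path of the same length in $G^\boxtimes$, giving $X(G^\boxtimes)\le X(H)$ and $x(G^\boxtimes)\le x(H)$; conversely, since $\pi$ is a covering, any path in $G^\boxtimes$ starting at a vertex $v$ lifts uniquely to a path of the same length starting at either preimage of $v$, which combined with the fact that both endpoints of a lifted crossing lie over the corresponding endpoints of the original crossing gives $X(H)\le X(G^\boxtimes)$ and $x(H)\le x(G^\boxtimes)$. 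For the cop-number inequalities: $c_d(G^\boxtimes)\le c_d(H)$ follows because $G^\boxtimes$ is a retract of $H$ — the homomorphism $\pi$ fixes a chosen ``base sheet'' copy of $G^\boxtimes$ inside $H$ — so \cref{theorem: retract at a distance} applies (note $H\setminus G^\boxtimes$ is just the other sheet, again a copy of $G^\boxtimes$, but the retract inequality only needs the homomorphism). For $c_d(H)\le 2\cdot c_d(G^\boxtimes)$, I would run the standard ``lift the cop strategy'' argument: take an optimal distance-$d$ cop strategy on $G^\boxtimes$ using $c_d(G^\boxtimes)$ cops, and on $H$ place two cops per original cop, one on each sheet, each shadowing the projected robber position; whenever the strategy on $G^\boxtimes$ moves a cop along an edge, both lifts move along the (uniquely determined) lifted edges. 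Since the robber in $H$ projects to some robber-walk in $G^\boxtimes$, the original strategy guarantees some cop on $G^\boxtimes$ comes within distance $d$ of that projection; the sheet-correct lift of that cop is then within distance $d$ of the actual robber in $H$ (distances only shrink or stay equal under the covering projection, and equal along lifted shortest paths).

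The main obstacle I anticipate is the careful bookkeeping at the crossings — making sure that ``un-planarising'' the lift really yields a legitimate $(\Sigma_{g-1},1)$-plane drawing in which each edge is crossed at most once and in which the kite-augmentation structure (the uncrossed paths $\kappa_{uv}$) is faithfully reproduced on both sheets, so that $H$ is genuinely kite-augmented if one needs that. One must check that the two pre-images of a crossing point of $G^\boxtimes$ are distinct points of $\Sigma_{g-1}$ (true because $\pi$ is a nontrivial covering and crossing points are interior points with the covering acting freely) and that no new crossings are created. A secondary subtlety is the genus computation $\Sigma_{g-1}$: this is exactly Euler-characteristic arithmetic, $\chi(\Sigma_{g-1})=2\chi(\Sigma_g)$, together with the classical fact that the orientation double cover of the non-orientable genus-$g$ surface is the orientable genus-$(g-1)$ surface, so no real work is needed there beyond citing it. Everything else — the retract argument and the two-cops-per-cop shadowing — is routine once the covering and the lifted drawing are set up correctly.
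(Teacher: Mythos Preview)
Your overall strategy --- lift $G^\boxtimes$ through the orientation double cover and compare --- is the same as the paper's. But your argument for $c_d(G^\boxtimes)\le c_d(H)$ has a genuine gap: you claim $G^\boxtimes$ is a retract of $H$, with $\pi$ fixing a chosen ``base sheet''. This is false. The orientation double cover of a non-orientable surface is \emph{connected}, so the covering $\pi:H\to G^\boxtimes$ admits no global section; there is no subgraph of $H$ that $\pi$ maps identically onto $G^\boxtimes$. (Your remark that ``$H\setminus G^\boxtimes$ is just the other sheet'' reflects the same misconception --- $H$ does not decompose into two disjoint copies of $G^\boxtimes$; that would make the cover trivial.) Hence \cref{theorem: retract at a distance} is not available here. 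The paper instead obtains $c_d(G^\boxtimes)\le c_d(H)$ by the direct simulation from~\cite{joret_non_orientable}: project the cops' positions from $H$ to $G^\boxtimes$ via $\pi$, and lift the robber's walk in $G^\boxtimes$ to $H$ (unique once a starting preimage is fixed). When a cop in $H$ is within distance $d$ of the lifted robber, its projection is within distance $d$ of the robber in $G^\boxtimes$, since $\pi$ does not increase distances.

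A second, smaller issue: your path-lifting argument for $X(H)\le X(G^\boxtimes)$ and $x(H)\le x(G^\boxtimes)$ is incomplete. A path in $G^\boxtimes$ from a consecutive endpoint $u$ to $v$ lifts to a path from a chosen preimage $\tilde u$, but its terminal vertex is the \emph{determined} preimage of $v$, which need not be the one sitting at the same lifted crossing as $\tilde u$. The paper avoids this by constructing $G^\boxtimes$ so that each kite path $\kappa_{uv}$ is drawn close to the curve $u\text{---}c\text{---}v$ through the crossing point; then the face bounded by $\kappa_{uv}$ and the two part-edges lifts to a face at each lifted crossing, forcing the lift of $\kappa_{uv}$ to join the correct pair of preimages. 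You flag this bookkeeping as an anticipated obstacle, but it is precisely what pins down the choice of $G^\boxtimes$ and makes the parameter equalities go through.
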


\begin{proof}
Let $G^\boxtimes$ be the kite-augmentation (after $(\Sigma_g,1)$-planarisation) where for every crossing at a point $c$, and every consecutive pair of endpoints $u,v$, the path $\kappa_{uv}$ is drawn close to the curve $u\textrm{---}c\textrm{---}v$. Let $\Sigma_{g-1}$ be the orientable surface of genus $g-1$ that is a 2-sheeted covering space of $\Sigma_{g}$, and let $\pi: \Sigma_{g-1} \mapsto \Sigma_{g}$ be the covering function. Let $H$ be the graph obtained by lifting $G$ from $\Sigma_{g}$ onto $\Sigma_{g-1}$. As each path $\kappa_{uv}$ in $G^\boxtimes$ traces the curve $u\textrm{---}c\textrm{---}v$, every crossing in $H$ contains paths that trace similar curves between consecutive endpoints of the crossing, and these have the same length as in the projection of the crossing on $G^\boxtimes$. Therefore, $X(H) = X(G^\boxtimes)$ and $x(H) = x(G^\boxtimes)$. 

We now prove the bounds on cop-numbers. The main idea is that all cop-moves and robber-moves in either $G^\boxtimes$ or $H$ can be simulated on the other graph using the covering function $\pi$. One can easily show that $c_d(G^\boxtimes) \leq c_d(H)$ by simply re-hashing the proof in \cite{joret_non_orientable} for distances $d > 0$; hence we restrict ourselves to showing that $c_d(H) \leq 2 \cdot c_d(G^\boxtimes)$. For this, we simulate the robber movements of $H$ in $G^\boxtimes$ and the cop movements of $G^\boxtimes$ in $H$. If initially a cop is placed on a vertex $x \in V(G^\boxtimes)$, then we place two cops on the two vertices of $\pi^{-1}(x)$ in $H$. Likewise, if the robber is initially placed on a vertex $y \in V(H)$, we place the robber on $\pi(y) \in V(G^\boxtimes)$. Note that for every edge $(x,y) \in E(G^\boxtimes)$, there are two edges connecting vertices $\pi^{-1}(x)$ and $\pi^{-1}(y)$ in $H$. Therefore, we can simulate a cop move along an edge from $x$ to $y$ through the moves of two cops from $\pi^{-1}(x)$ to $\pi^{-1}(y)$. Likewise, a robber move in $H$ from a vertex $x$ to a vertex $y$ is simulated as the move from $\pi(x)$ to $\pi(y)$ in $G^\boxtimes$. If a cop in $G^\boxtimes$ at vertex $x$ captures the robber at vertex $y$ within distance $d$, then one of the two cops in $\pi^{-1}(x)$ captures the robber in $\pi^{-1}(y)$ within distance $d$. 
\end{proof}



\edit{
\begin{corollary}
Let $G$ be any graph drawn on a non-orientable surface $\Sigma$ of genus $g$. For any $\alpha \geq 1$, we have $c_{\ceil{\alpha (x + 2)}}(G) \leq (2g + 1) \cdot(2 \beta + 1)$ and $c_{\ceil{\alpha (X + 1)}}(G) \leq (2g+1) \cdot (2\beta + 1) + (g-1)$ where 
\begin{equation*}
\beta = \begin{cases}
             d + 1  & \text{if } d = x(G) \text{ and } \alpha = 1 \\
             \left \lceil \frac{1}{2(\alpha - 1)}\right \rceil + 1   & \text{if } d = x(G) \text{ and } \alpha > 1
       \end{cases} \qquad
\beta = \begin{cases}
             d - 1  & \text{if } d = X(G) \text{ and } \alpha = 1 \\
             \left \lceil \frac{1}{2(\alpha - 1)} \right \rceil  & \text{if } d = X(G) \text{ and } \alpha > 1
       \end{cases}
\end{equation*}
\end{corollary}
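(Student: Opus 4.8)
The plan is to obtain this statement as a routine consequence of the three main results proved so far, chained together. Starting from an arbitrary drawing of $G$ on the non-orientable surface $\Sigma = \Sigma_g$, I would first invoke \cref{thm: 1-planarisation} to replace $G$ by a kite-augmented $(\Sigma_g,1)$-plane graph $G^\boxtimes$ (a $(\Sigma_g,1)$-planarisation of $G$ that is then kite-augmented), obtaining $c_{\ceil{\alpha(x(G)+2)}}(G) \leq c_{\ceil{\alpha x(G^\boxtimes)}-1}(G^\boxtimes)$ and $c_{\ceil{\alpha(X(G)+1)}}(G) \leq c_{\ceil{\alpha X(G^\boxtimes)}-1}(G^\boxtimes)$. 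Next I would apply \cref{thm: non-orientable} to pass from $\Sigma_g$ to its orientable double cover $\Sigma_{g-1}$ of genus $g-1$: this yields a kite-augmented $(\Sigma_{g-1},1)$-plane graph $H$ with $x(H)=x(G^\boxtimes)$, $X(H)=X(G^\boxtimes)$, and $c_d(G^\boxtimes) \leq c_d(H)$ for every integer $d \geq 0$ (only this direction of the inequality in \cref{thm: non-orientable} is needed). Finally I would apply \cref{thm: 1-planar genus} to $H$, which is a kite-augmented $(\Sigma,1)$-plane graph on an orientable surface of genus $g-1$, giving $c_{\ceil{\alpha x(H)}-1}(H) \leq (2(g-1)+3)\cdot(2\beta+1)$ and $c_{\ceil{\alpha X(H)}-1}(H) \leq (2(g-1)+3)\cdot(2\beta+1)+(g-1)$, where $\beta$ is exactly the quantity in the statement, computed from $x(H)=x(G^\boxtimes)$ or $X(H)=X(G^\boxtimes)$.

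Composing the three inequalities and simplifying $2(g-1)+3 = 2g+1$ gives the two claimed bounds. A couple of bookkeeping points deserve a sentence each. First, the quantities written as $x$ and $X$ in the definition of $\beta$ are, following the convention of \cref{thm: 1-planar genus}, the parameters of the kite-augmented drawing; by \cref{thm: non-orientable} these agree for $G^\boxtimes$ and $H$, so the two halves of the chain use a consistent value of $\beta$, and the original parameters $x(G), X(G)$ enter only through the subscripts $\ceil{\alpha(x+2)}$ and $\ceil{\alpha(X+1)}$, where the additive $+2$ and $+1$ come from the two versions of \cref{thm: 1-planarisation}. Second, the degenerate case $g = 1$ should be noted: then $\Sigma_{g-1}$ is the sphere, the bound on $H$ is supplied by the base case \cref{thm: cop number 1-planar weighted kite edge} rather than by the inductive statement of \cref{thm: 1-planar genus}, and the additive term $(g-1)=0$ in the $X$-bound is consistent with this.

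I do not anticipate a real obstacle, since everything is an assembly of already-proved theorems pointing in the same direction; the only step that needs an explicit (but short) argument is the observation that lifting the kite-augmented drawing $G^\boxtimes$ through the covering map $\pi: \Sigma_{g-1} \to \Sigma_g$ again produces a kite-augmented drawing. This holds because $\pi$ is a local homeomorphism, so each face bounded by $\{(c,u)\} \cup \kappa_{uv} \cup \{(v,c)\}$ and each degree-$2$ internal vertex of every path $\kappa_{uv}$ lifts to two isomorphic copies upstairs; the proof of \cref{thm: non-orientable} already records that $x(H)=x(G^\boxtimes)$ and $X(H)=X(G^\boxtimes)$, which is the quantitative form of this fact. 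One should just double-check that the three cop-number inequalities compose in the intended direction, i.e.\ $c_{\ceil{\alpha(x+2)}}(G)$ is bounded above by a distance cop number of $G^\boxtimes$, which is bounded above by the corresponding one of $H$, which is bounded above by the genus expression --- which is precisely how the three theorems are stated.
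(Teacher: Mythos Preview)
Your proposal is correct and follows essentially the same route as the paper: chain \cref{thm: 1-planarisation} (from $G$ to $G^\boxtimes$), \cref{thm: non-orientable} (from $G^\boxtimes$ on $\Sigma_g$ to $H$ on the orientable $\Sigma_{g-1}$, using only the inequality $c_d(G^\boxtimes)\le c_d(H)$ and the equalities $x(H)=x(G^\boxtimes)$, $X(H)=X(G^\boxtimes)$), and \cref{thm: 1-planar genus} (applied to $H$ with genus $g-1$, so that $2(g-1)+3=2g+1$). Your extra remarks on the $g=1$ base case and on the lift preserving the kite-augmented structure are careful additions, but the paper's proof does not spell them out and treats the result as an immediate composition of the three cited theorems.
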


\begin{proof}
From \cref{thm: non-orientable}, we know that there exist graphs $G^\boxtimes$ and $H$, drawn respectively on $\Sigma$ and an orientable surface of genus $g-1$, such that $c_{\ceil{\alpha x(G^\boxtimes)} -1 }(G^\boxtimes) \leq c_{\ceil{\alpha x(H)} -1 }(H)$ and $c_{\ceil{\alpha X(G^\boxtimes)} -1 }(G^\boxtimes) \leq c_{\ceil{\alpha X(H)} -1 }(H)$. From \cref{thm: 1-planar genus}, we have that $c_{\ceil{\alpha x(H)} -1 }(H) \leq (2g + 1) \cdot(2 \beta + 1)$ and $c_{\ceil{\alpha X(H)} -1 }(H) \leq (2g+1) \cdot (2\beta + 1) + (g-1)$ for the values of $\beta$ as stated above. From \cref{thm: 1-planarisation}, we know that $c_{\ceil{\alpha (x(G) + 2)}}(G) \leq c_{\ceil{\alpha x(G^\boxtimes)} -1 }(G^\boxtimes)$ and $c_{\ceil{\alpha (X(G) + 1)}}(G) \leq c_{\ceil{\alpha X(G^\boxtimes)} -1 }(G^\boxtimes)$ for all graphs $G^\boxtimes$. Therefore, we get the stated result.
\end{proof}
}

\subsection{Map Graphs}\label{subsec: map graphs}

We now consider a special class of graphs on surfaces called map graphs. A graph $G$ is a \textit{map graph on a surface $S$} if the vertex set of $G$ can be represented by internally-disjoint closed-disc homeomorphs called \textit{nations} on $S$, and a pair of vertices are adjacent in $G$ if and only if their corresponding nations touch each other. The above definition generalises \textit{map graphs} originally defined for the sphere \cite{map_graphs} to arbitrary surfaces. In \cite{map_graphs}, it is shown that a graph $G = (V,E)$ is a map graph on the sphere if and only if there is a planar bipartite graph $H$, called the \textit{witness}, with bipartition $V(H) = \{V, F\}$ such that $G = H^2[V]$. That is, two vertices $u$ and $v$ are adjacent in $G$ if and only if there is a path $(u,f,v)$ in $H$ where $u,v \in V$ and $f \in F$. The witness graph $H$ is obtained from the map representation of $G$ as follows: insert a \textit{nation vertex} in the interior of each nation and a \textit{boundary vertex} at points where two or more nations touch each other; then join a nation vertex with a boundary vertex if and only if the boundary point is part of the nation. Conversely, given a plane bipartite witness graph $H$ with bipartition $V(H) = \{V,F\}$, we draw a star-shaped region around each vertex of $V$ such that its boundary contains all vertices of $F$ that the vertex is adjacent to. This gives us a map representation of $G = H^2[V]$. Clearly, this combinatorial characterisation of map graphs in terms of witness graphs extends to map graphs on surfaces. That is, a graph $G = (V,E)$ is a map graph on a surface $S$ if and only if there is a bipartite \textit{witness} graph $H$ embeddable on $S$ (without crossings), with bipartition $V(H) = \{V,F\}$, such that $G = H^2[V]$. 

\begin{theorem}\label{thm: cop number map graph}
If $G$ has a map representation on a surface with a graph $H$ as a witness, then $c(G) \leq c(H)$.
\end{theorem}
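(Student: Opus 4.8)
The plan is to establish a strategy-stealing argument: the cops play on the witness graph $H$, simulate their moves via the map representation, and "project" onto the map graph $G$. The key structural fact is that $G = H^2[V]$, so a vertex $u \in V$ and a vertex $v \in V$ are adjacent in $G$ precisely when they share a common boundary vertex $f \in F$ in $H$. We want to play $c(H)$ cops in $H$ and translate their positions into positions in $G$.

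First I would set up the correspondence. Run an auxiliary game on $H$ with $c(H)$ cops, and maintain as an invariant that each cop in $G$ sits on a nation vertex that is "close" (within distance $1$ in $H$) to the corresponding cop's position in $H$: if a cop in $H$ is on a nation vertex $w \in V$, the corresponding cop in $G$ sits on $w$; if a cop in $H$ is on a boundary vertex $f \in F$, the corresponding cop in $G$ sits on some nation vertex adjacent to $f$ in $H$. The robber in $H$ must mirror the robber in $G$: whenever the robber in $G$ occupies a vertex $r \in V$, we force a virtual robber in $H$ to occupy $r$ as well (this is legal since every robber-move in $G$ from $r$ to $r'$ corresponds to a length-$2$ walk $r - f - r'$ in $H$ through a shared boundary vertex, which the virtual robber can traverse in two of $H$'s rounds while the $G$-robber waits — so we simulate one $G$-round by two $H$-rounds, exactly as in the proof of \cref{theorem: subdivision}). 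Since $c(H)$ cops catch the virtual robber in $H$, at some point a cop in $H$ lands on the robber's vertex $r \in V \subseteq V(H)$; the corresponding cop in $G$ is then on $r$ as well (it was tracking a nation vertex, and if the $H$-cop is on the nation vertex $r$, its image is $r$), so the robber is captured in $G$.

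The one subtlety, and the step I expect to be the main obstacle, is handling the case where an $H$-cop sits on a boundary vertex $f \in F$ rather than a nation vertex — then its image in $G$ is only a vertex adjacent to $f$, which is "off by one" from where we'd like it. I would resolve this by noting that $F$-vertices and $V$-vertices alternate along any walk in the bipartite graph $H$, and by having the cop in $G$ commit to a fixed nation vertex neighbouring $f$ and update lazily: when the $H$-cop leaves $f$ for some nation vertex $w$, the $G$-cop moves to $w$ (possible since its current vertex and $w$ both neighbour $f$ in $H$, hence are adjacent in $G$). Thus the $G$-cop's position always lags the $H$-cop's position by at most one "half-step," and crucially the capture event in $H$ occurs at a $V$-vertex (the robber only ever occupies $V$-vertices), at which moment the lag is zero. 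This lazy-tracking bookkeeping, together with the two-rounds-for-one simulation, yields $c(G) \leq c(H)$.

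One should also verify the finitely-many-initial-rounds issue: as in \cref{theorem: subdivision}, the virtual robber in $H$ is not truly adversarial for the first few rounds (it is merely made to follow the $G$-robber), but since the $G$-robber cannot stall or backtrack indefinitely without being caught, this only delays capture by a finite number of rounds and does not affect the cop-number bound.
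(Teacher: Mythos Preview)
Your approach is the same as the paper's: simulate one $G$-round by two $H$-rounds, let the $H$-robber mirror the $G$-robber on nation vertices, and have each $G$-cop track its $H$-cop by sitting on a nation vertex within $H$-distance $1$ of it, updating lazily when the $H$-cop steps from a boundary vertex to a nation vertex. That bookkeeping is exactly what the paper does.

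The gap is the sentence ``crucially the capture event in $H$ occurs at a $V$-vertex (the robber only ever occupies $V$-vertices).'' This is false: within each two-round block the virtual robber walks $x \to f \to y$, so after the first of the two $H$-rounds it is sitting on a boundary vertex $f \in F$, and an $H$-cop may capture it there (either by stepping onto $f$, or by the robber stepping onto a cop at $f$). At that moment your $G$-cop is only on some nation vertex adjacent to $f$, so your ``lag is zero'' conclusion does not apply and the argument as written does not finish. The missing observation---and the one the paper uses to close exactly this case---is that the $H$-neighbours of any boundary vertex $f$ induce a clique in $G = H^2[V]$. Hence if capture in $H$ occurs at $f$ while the $G$-robber is moving from $x$ to $y$ (both $H$-adjacent to $f$), the tracking $G$-cop is $G$-adjacent to $y$ and captures there. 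Equivalently, you could have invoked the ``complete the block and have the capturing cop follow the robber'' trick from \cref{theorem: subdivision}, which you cite but then do not use; that would also bring the $H$-cop to a $V$-vertex at block's end.

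Your final paragraph about initial rounds is unnecessary here: unlike the subdivision argument, no chaser cop is needed, since the $H$-robber starts on the $G$-robber's nation vertex and mirrors every move exactly (a stall in $G$ is simply a stall for two rounds in $H$).
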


\begin{proof}
To prove \cref{thm: cop number map graph}, we employ our usual strategy of simulating the robber-moves of $G$ on $H$, and the corresponding cop-moves of $H$ on $G$. Let $\{U_1', \dots, U_p'\}$ be a set of $p$ cops where $p = c(H)$. We will show that a set $\{U_1, \dots, U_p\}$ of $p$ cops in $G$ is sufficient to capture a robber in $G$. Consider the first round of the game. For every cop $U_i'$ occupying a vertex $v \in V(H)$: if $v \in V(G)$, place $U_i$ on $v$; else place $U_i$ on a vertex of $G$ neighbouring $v$. If the robber's initial position in $G$ is on a vertex $x \in V(G)$, then place the robber on the same vertex $x \in V(H)$. Thereafter, every single round played in $G$ can be simulated as two rounds played in $H$: If the robber in $G$ moves along an edge $(x,y) \in E(G)$, then the robber in $H$ moves along a path $(x,f,y)$ in $H$, for some vertex $f \in V(H) \setminus V(G)$ (this is possible since $G = H^2[V]$). Similarly, the moves of a cop $U_i'$ can be simulated by $U_i$ in $G$ such that at the end of every two rounds played in $H$: if $U_i'$ is on a vertex $v \in V(G)$, so is $U_i$; else if $U_i'$ is on a vertex $f \notin V(G)$, then $U_i$ is on some vertex of $G$ neighbouring $f$.

Now consider the final round in $H$ when the robber gets captured by a cop $U_i'$. Suppose that the robber in $G$ takes an edge $(x,y)$. As seen before, this corresponds to a path $(x,f,y)$ in $H$. If the robber is captured on $y$, then he is captured by $U_i$ in $G$. If the robber is captured on $f$, then $U_i$ captures the robber on $y$ as the neighbours of $f$ induce a clique in $G$.      
\end{proof}

\cref{thm: cop number map graph} gives us a simple proof of the fact that full 1-planar graphs have cop-number at most 3 (\cref{cor: 1-planar}), as all full 1-planar graphs are map graphs on the sphere \cite{Brandenburg_4_map}. Another interesting corollary is that all optimal 2-planar graphs also have cop-number at most 3. This is again due to the fact that all optimal 2-planar graphs are map graphs on the sphere \cite{optimal2plane}.

\begin{corollary}\label{cor: map graphs}
    If $G$ is a full 1-planar or an optimal 2-planar graph, then $c(G) \leq 3$.
\end{corollary}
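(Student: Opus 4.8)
The plan is to combine \cref{thm: cop number map graph} with two facts imported from the literature: that full $1$-planar graphs are map graphs on the sphere \cite{Brandenburg_4_map}, and that optimal $2$-planar graphs are map graphs on the sphere \cite{optimal2plane}. Concretely, I would first observe that if $G$ is either a full $1$-planar graph or an optimal $2$-planar graph, then $G$ admits a map representation on the sphere; by the combinatorial characterisation of map graphs recalled just before \cref{thm: cop number map graph}, this is equivalent to the existence of a bipartite \emph{planar} witness graph $H$ (embeddable on the sphere without crossings) with bipartition $V(H) = \{V, F\}$ such that $G = H^2[V]$.

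Next I would apply \cref{thm: cop number map graph} with $S$ the sphere, which yields $c(G) \le c(H)$. Since $H$ is planar, Aigner and Fromme's theorem \cite{AignerFromme} gives $c(H) \le 3$, and therefore $c(G) \le 3$. This completes the argument for both graph classes simultaneously, the only difference between the two cases being which external structural theorem ($\cite{Brandenburg_4_map}$ or $\cite{optimal2plane}$) is invoked to produce the map representation.

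I do not expect any serious obstacle here: the corollary is essentially a one-line consequence of \cref{thm: cop number map graph} once the map-graph memberships are cited. The only point requiring a modicum of care is making explicit that ``map graph on the sphere'' entails a witness graph that is genuinely \emph{planar} (not merely embeddable on some surface), so that the classical planar bound $c(H)\le 3$ applies; this is immediate from the witness-graph construction described in \cref{subsec: map graphs}, where $H$ is drawn in the sphere as the incidence structure between nation vertices and boundary vertices. One might also remark, for completeness, that this recovers and extends the bound $c(G)\le 3$ for maximal $1$-plane graphs from \cite{optimal1plane}, since maximal $1$-plane graphs are full $1$-planar.
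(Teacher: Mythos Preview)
Your proposal is correct and matches the paper's own argument essentially line for line: cite \cite{Brandenburg_4_map} and \cite{optimal2plane} to get that both graph classes are map graphs on the sphere, apply \cref{thm: cop number map graph} to obtain $c(G)\le c(H)$ for a planar witness $H$, and finish with Aigner--Fromme's bound $c(H)\le 3$. There is nothing to add.
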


%


\section{Open Problems for Future Work}\label{sec: conclusion}

In this paper, we initiated an extensive study of the interplay between distance $d$ cop-numbers of graphs and distances between crossing pairs in the drawing. The tools and techniques that we used in the paper are sufficiently general to handle arbitrary graphs drawn on surfaces. We showed that a small value of $X(G)$ or $x(G)$, for any graph $G$ drawn on a surface of small genus, implies a small distance $d$ cop-number. Despite this, there are still many questions that are elusive; we mention a few of them below.

In \cite{tcs_1planar}, it was shown that if $G$ is a 1-plane graph without $\times$-crossings, then $c(G) \leq 21$. In this paper, we improved the bound to $c(G) \leq 15$ (\cref{cor: arbitrary x and X = 1}). However, we do not know whether the cop-number of a 1-plane graph $G$ with $x(G) = X(G) = 2$ is bounded. It is unlikely that we can drop the requirement of 1-planarity, since graphs of diameter 2 are known to be cop-unbounded. 

\begin{problem}\label{problem: x and X}
If $G$ is a 1-plane graph with $X(G) = x(G) = 2$, then is $c(G)$ bounded?
\end{problem}

If one kite-augments the graph $G$ in \cref{problem: x and X}, then each face of $\mathrm{sk}(G)$ that contains a crossing has 8 edges on its face-boundary. From \cref{thm: single crossing per face}, we know that the cop-number is unbounded for graphs with at most one crossing per face of $\mathrm{sk}(G)$. However, the construction of these graphs are such that they do not have a small number of edges that bound faces of $\mathrm{sk}(G)$.  This leads us to the following question. 

\begin{problem}\label{problem: d-framed}
Does there exist a function $f$ such that if $G$ is a $k$-framed graph, then $c(G) \leq f(k)$?
\end{problem}

In \cref{thm: 1-planar genus}, we gave a bound on the distance $d$ cop-number of graphs drawn on orientable surfaces. Our techniques were based on the ones used in the earliest paper on cop-number for graphs embedded (without crossings) on orientable surfaces \cite{quilliot1985_genus}. Since then, many improvements have been made; see \cite{schroder2001copnumber,genus_currentbest,conference_best_genus} instance. We leave it as an open problem to adapt those techniques to improve the bound in \cref{thm: 1-planar genus}.

\begin{problem}
Improve the bound in \cref{thm: 1-planar genus} for cop-number of a graph drawn on orientable surfaces. 
\end{problem}

\bibliography{References}

\begin{thebibliography}{10}

\bibitem{AignerFromme}
M.~Aigner and M.~Fromme.
\newblock A game of cops and robbers.
\newblock {\em Discrete Applied Mathematics}, 8(1):1--12, 1984.
\newblock URL: \url{https://www.sciencedirect.com/science/article/pii/0166218X84900738}, \href {https://doi.org/10.1016/0166-218X(84)90073-8} {\path{doi:10.1016/0166-218X(84)90073-8}}.

\bibitem{Andreae_regular}
Thomas Andreae.
\newblock Note on a pursuit game played on graphs.
\newblock {\em Discret. Appl. Math.}, 9(2):111--115, 1984.
\newblock \href {https://doi.org/10.1016/0166-218X(84)90012-X} {\path{doi:10.1016/0166-218X(84)90012-X}}.

\bibitem{Andreae_minor}
Thomas Andreae.
\newblock On a pursuit game played on graphs for which a minor is excluded.
\newblock {\em J. Comb. Theory, Ser. {B}}, 41(1):37--47, 1986.
\newblock \href {https://doi.org/10.1016/0095-8956(86)90026-2} {\path{doi:10.1016/0095-8956(86)90026-2}}.

\bibitem{meyniel_survey}
William Baird and Anthony Bonato.
\newblock Meyniel’s conjecture on the cop number: A survey.
\newblock {\em Journal of Combinatorics}, 3(2):225–238, 2012.
\newblock URL: \url{http://dx.doi.org/10.4310/JOC.2012.v3.n2.a6}, \href {https://doi.org/10.4310/joc.2012.v3.n2.a6} {\path{doi:10.4310/joc.2012.v3.n2.a6}}.

\bibitem{optimal2plane}
Michael~A. Bekos, Michael Kaufmann, and Chrysanthi~N. Raftopoulou.
\newblock On optimal 2- and 3-planar graphs.
\newblock In Boris Aronov and Matthew~J. Katz, editors, {\em 33rd International Symposium on Computational Geometry, SoCG 2017, July 4-7, 2017, Brisbane, Australia}, volume~77 of {\em LIPIcs}, pages 16:1--16:16. Schloss Dagstuhl - Leibniz-Zentrum f{\"{u}}r Informatik, 2017.
\newblock URL: \url{https://doi.org/10.4230/LIPIcs.SoCG.2017.16}, \href {https://doi.org/10.4230/LIPICS.SOCG.2017.16} {\path{doi:10.4230/LIPICS.SOCG.2017.16}}.

\bibitem{bekos_dframe}
Michael~A. Bekos, Giordano~Da Lozzo, Svenja~M. Griesbach, Martin Gronemann, Fabrizio Montecchiani, and Chrysanthi~N. Raftopoulou.
\newblock Book embeddings of \emph{k}-framed graphs and \emph{k}-map graphs.
\newblock {\em Discret. Math.}, 347(1):113690, 2024.
\newblock \href {https://doi.org/10.1016/j.disc.2023.113690} {\path{doi:10.1016/j.disc.2023.113690}}.

\bibitem{berarducci1993cop}
Alessandro Berarducci and Benedetto Intrigila.
\newblock On the cop number of a graph.
\newblock {\em Advances in Applied mathematics}, 14(4):389--403, 1993.

\bibitem{unit_disk}
Andrew Beveridge, Andrzej Dudek, Alan~M. Frieze, and Tobias M{\"{u}}ller.
\newblock Cops and robbers on geometric graphs.
\newblock {\em Comb. Probab. Comput.}, 21(6):816--834, 2012.
\newblock \href {https://doi.org/10.1017/S0963548312000338} {\path{doi:10.1017/S0963548312000338}}.

\bibitem{application_surveillance}
Sourabh Bhattacharya, Salvatore Candido, and Seth Hutchinson.
\newblock {\em Motion Strategies for Surveillance}, page 249–256.
\newblock The MIT Press, January 2008.
\newblock URL: \url{http://dx.doi.org/10.7551/mitpress/7830.003.0033}, \href {https://doi.org/10.7551/mitpress/7830.003.0033} {\path{doi:10.7551/mitpress/7830.003.0033}}.

\bibitem{biedl2024parameterized}
Therese Biedl, Prosenjit Bose, and Karthik Murali.
\newblock A parameterized algorithm for vertex and edge connectivity of embedded graphs.
\newblock In {\em 32nd Annual European Symposium on Algorithms, {ESA} 2024, September 2-4, 2024, Royal Holloway, London, United Kingdom}, volume 308 of {\em LIPIcs}, pages 24:1--24:15. Schloss Dagstuhl - Leibniz-Zentrum f{\"{u}}r Informatik, 2024.
\newblock URL: \url{https://doi.org/10.4230/LIPIcs.ESA.2024.24}, \href {https://doi.org/10.4230/LIPICS.ESA.2024.24} {\path{doi:10.4230/LIPICS.ESA.2024.24}}.

\bibitem{biedl_murali}
Therese Biedl and Karthik Murali.
\newblock On computing the vertex connectivity of 1-plane graphs.
\newblock In {\em 50th International Colloquium on Automata, Languages, and Programming, {ICALP} 2023, July 10-14, 2023, Paderborn, Germany}, volume 261 of {\em LIPIcs}, pages 23:1--23:16. Schloss Dagstuhl - Leibniz-Zentrum f{\"{u}}r Informatik, 2023.
\newblock URL: \url{https://doi.org/10.4230/LIPIcs.ICALP.2023.23}, \href {https://doi.org/10.4230/LIPICS.ICALP.2023.23} {\path{doi:10.4230/LIPICS.ICALP.2023.23}}.

\bibitem{distance_cops_robbers}
Anthony Bonato, Ehsan Chiniforooshan, and Pawel Pralat.
\newblock Cops and robbers from a distance.
\newblock {\em Theor. Comput. Sci.}, 411(43):3834--3844, 2010.
\newblock URL: \url{https://doi.org/10.1016/j.tcs.2010.07.003}, \href {https://doi.org/10.1016/J.TCS.2010.07.003} {\path{doi:10.1016/J.TCS.2010.07.003}}.

\bibitem{Topological_directions}
Anthony Bonato and Bojan Mohar.
\newblock Topological directions in cops and robbers.
\newblock {\em Journal of Combinatorics}, 11(1):47–64, 2020.
\newblock URL: \url{http://dx.doi.org/10.4310/JOC.2020.v11.n1.a3}, \href {https://doi.org/10.4310/joc.2020.v11.n1.a3} {\path{doi:10.4310/joc.2020.v11.n1.a3}}.

\bibitem{bonato_book}
Anthony Bonato and Richard Nowakowski.
\newblock {\em The Game of Cops and Robbers on Graphs}.
\newblock American Mathematical Society, August 2011.
\newblock URL: \url{http://dx.doi.org/10.1090/stml/061}, \href {https://doi.org/10.1090/stml/061} {\path{doi:10.1090/stml/061}}.

\bibitem{application_network_security}
Anthony Bonato, Pawel Pralat, and Changping Wang.
\newblock Pursuit-evasion in models of complex networks.
\newblock {\em Internet Math.}, 4(4):419--436, 2007.
\newblock \href {https://doi.org/10.1080/15427951.2007.10129149} {\path{doi:10.1080/15427951.2007.10129149}}.

\bibitem{tcs_1planar}
Prosenjit Bose, Jean{-}Lou~De Carufel, Anil Maheshwari, and Karthik Murali.
\newblock On 1-planar graphs with bounded cop-number.
\newblock {\em Theor. Comput. Sci.}, 1037:115160, 2025.
\newblock URL: \url{https://doi.org/10.1016/j.tcs.2025.115160}, \href {https://doi.org/10.1016/J.TCS.2025.115160} {\path{doi:10.1016/J.TCS.2025.115160}}.

\bibitem{genus_currentbest}
Nathan~J. Bowler, Joshua Erde, Florian Lehner, and Max Pitz.
\newblock Bounding the cop number of a graph by its genus.
\newblock {\em {SIAM} J. Discret. Math.}, 35(4):2459--2489, 2021.
\newblock \href {https://doi.org/10.1137/20M1312150} {\path{doi:10.1137/20M1312150}}.

\bibitem{Brandenburg_4_map}
Franz~J. Brandenburg.
\newblock Characterizing and recognizing 4-map graphs.
\newblock {\em Algorithmica}, 81(5):1818--1843, 2019.
\newblock URL: \url{https://doi.org/10.1007/s00453-018-0510-x}, \href {https://doi.org/10.1007/S00453-018-0510-X} {\path{doi:10.1007/S00453-018-0510-X}}.

\bibitem{map_graphs}
Zhi{-}Zhong Chen, Michelangelo Grigni, and Christos~H. Papadimitriou.
\newblock Map graphs.
\newblock {\em J. {ACM}}, 49(2):127--138, 2002.
\newblock \href {https://doi.org/10.1145/506147.506148} {\path{doi:10.1145/506147.506148}}.

\bibitem{application_robotics}
Timothy~H. Chung, Geoffrey~A. Hollinger, and Volkan Isler.
\newblock Search and pursuit-evasion in mobile robotics - {A} survey.
\newblock {\em Auton. Robots}, 31(4):299--316, 2011.
\newblock URL: \url{https://doi.org/10.1007/s10514-011-9241-4}, \href {https://doi.org/10.1007/S10514-011-9241-4} {\path{doi:10.1007/S10514-011-9241-4}}.

\bibitem{joret_non_orientable}
Nancy~E. Clarke, Samuel Fiorini, Gwena{\"{e}}l Joret, and Dirk~Oliver Theis.
\newblock A note on the cops and robber game on graphs embedded in non-orientable surfaces.
\newblock {\em Graphs Comb.}, 30(1):119--124, 2014.
\newblock URL: \url{https://doi.org/10.1007/s00373-012-1246-z}, \href {https://doi.org/10.1007/S00373-012-1246-Z} {\path{doi:10.1007/S00373-012-1246-Z}}.

\bibitem{optimal1plane}
Stephane Durocher, Shahin Kamali, Myroslav Kryven, Fengyi Liu, Amirhossein Mashghdoust, Avery Miller, Pouria~Zamani Nezhad, Ikaro~Penha Costa, and Timothy Zapp.
\newblock Cops and robbers on 1-planar graphs.
\newblock In {\em Graph Drawing and Network Visualization - 31st International Symposium, {GD} 2023, Isola delle Femmine, Palermo, Italy, September 20-22, 2023, Revised Selected Papers, Part {II}}, volume 14466 of {\em Lecture Notes in Computer Science}, pages 3--17. Springer, 2023.
\newblock \href {https://doi.org/10.1007/978-3-031-49275-4\_1} {\path{doi:10.1007/978-3-031-49275-4\_1}}.

\bibitem{conference_best_genus}
Joshua Erde and Florian Lehner.
\newblock Improved bounds on the cop number of a graph drawn on a surface.
\newblock In Jaroslav Ne{\v{s}}et{\v{r}}il, Guillem Perarnau, Juanjo Ru{\'e}, and Oriol Serra, editors, {\em Extended Abstracts EuroComb 2021}, pages 111--116, Cham, 2021. Springer International Publishing.

\bibitem{string}
Tomas Gavenciak, Przemyslaw Gordinowicz, V{\'{\i}}t Jel{\'{\i}}nek, Pavel Klav{\'{\i}}k, and Jan Kratochv{\'{\i}}l.
\newblock Cops and robbers on intersection graphs.
\newblock {\em Eur. J. Comb.}, 72:45--69, 2018.
\newblock URL: \url{https://doi.org/10.1016/j.ejc.2018.04.009}, \href {https://doi.org/10.1016/J.EJC.2018.04.009} {\path{doi:10.1016/J.EJC.2018.04.009}}.

\bibitem{gross_tucker}
Jonathan~L Gross and Thomas~W Tucker.
\newblock {\em Topological graph theory}.
\newblock Courier Corporation, 2001.

\bibitem{joret}
Gwena{\"{e}}l Joret, Marcin Kaminski, and Dirk~Oliver Theis.
\newblock The cops and robber game on graphs with forbidden (induced) subgraphs.
\newblock {\em Contributions Discret. Math.}, 5(2), 2010.
\newblock URL: \url{http://cdm.ucalgary.ca/cdm/index.php/cdm/article/view/154}.

\bibitem{mohar2001graphs}
Bojan Mohar and Carsten Thomassen.
\newblock {\em Graphs on surfaces}.
\newblock Johns Hopkins University Press, 2001.

\bibitem{nowakowski_winkler}
Richard~J. Nowakowski and Peter Winkler.
\newblock Vertex-to-vertex pursuit in a graph.
\newblock {\em Discret. Math.}, 43(2-3):235--239, 1983.
\newblock \href {https://doi.org/10.1016/0012-365X(83)90160-7} {\path{doi:10.1016/0012-365X(83)90160-7}}.

\bibitem{nowakowski1997bounding}
RJ~Nowakowski and BSW Schr{\"o}der.
\newblock Bounding the cop number using the crosscap number.
\newblock {\em Unpublished note}, 1997.

\bibitem{quilliot1978}
Alain Quilliot.
\newblock {\em Jeux et pointes fixes sur les graphes}.
\newblock PhD thesis, Universit{\'e} de Paris VI, 1978.

\bibitem{quilliot1985_genus}
Alain Quilliot.
\newblock A short note about pursuit games played on a graph with a given genus.
\newblock {\em J. Comb. Theory, Ser. {B}}, 38(1):89--92, 1985.
\newblock \href {https://doi.org/10.1016/0095-8956(85)90093-0} {\path{doi:10.1016/0095-8956(85)90093-0}}.

\bibitem{schroder2001copnumber}
Bernd~SW Schr{\"o}der.
\newblock The copnumber of a graph is bounded by $\left \lfloor 3/2 \text{ genus}({G}) \right \rfloor + 3$.
\newblock {\em Categorical perspectives}, pages 243--263, 2001.
\newblock \href {https://doi.org/10.1007/978-1-4612-1370-3_14} {\path{doi:10.1007/978-1-4612-1370-3_14}}.

\bibitem{thomassen_non_contractible}
Carsten Thomassen.
\newblock Embeddings of graphs with no short noncontractible cycles.
\newblock {\em J. Comb. Theory {B}}, 48(2):155--177, 1990.
\newblock \href {https://doi.org/10.1016/0095-8956(90)90115-G} {\path{doi:10.1016/0095-8956(90)90115-G}}.

\end{thebibliography}

\end{document}